\newtheorem{theorem}{Theorem}[section]
\newtheorem{lemma}[theorem]{Lemma}
\newtheorem{claim}[theorem]{Claim}
\newcommand{\old}[1]{{{}}}
\def\segment#1{{\overline{#1}}}
\def\wedge#1{{\textsc{w}_{#1}}}
\def\orientation#1{{\theta(#1)}}
\def\leftray#1{{\mathrel{\vbox{\offinterlineskip\ialign{%
    \hfil##\hfil\cr
    $\leftarrow\,$\cr
    $\textsc{w}_#1$\cr}}}}}
\def\rightray#1{\mathrel{\vbox{\offinterlineskip\ialign{%
    \hfil##\hfil\cr
    $\rightarrow\,$\cr
    $\textsc{w}_#1$\cr}}}}
\def\thirdray#1{\widetilde{\textsc{w}}_{#1}}
\def\connected#1#2{\{{#1}\} \leftrightarrow \{{#2}\}}
\def\notconnected#1#2{\{{#1}\} \not\leftrightarrow \{{#2}\}}
\def\bisector#1{bis(\wedge{#1})}
\def\ra{R_1}
\def\rb{R_2}
\def\rc{R_3}
\def\rd{R_4}
\def\re{R_5}
\def\rf{R_6}
\def\UDG{{\textsc{udg\/(\textit{\small P})}}}
\begin{document}

\title{Bounded-Angle Spanning Tree: Modeling Networks with Angular Constraints\thanks{Work by R. Aschner was partially supported by the Lynn and William Frankel Center for Computer Sciences. Work by M. Katz was partially supported by grant 1045/10 from the Israel Science Foundation. Work by M. Katz and R. Aschner was partially supported by grant 2010074 from the United States -- Israel Binational Science Foundation.}
}

\author{Rom Aschner \ \ \  Matthew J. Katz
\\
\\
{\small Department of Computer Science, Ben-Gurion University, Israel} \\
{\small {\tt $\{$romas,matya$\}$@cs.bgu.ac.il}}
}


\maketitle


\begin{abstract} 
We introduce a new structure for a set of points in the plane and an angle $\alpha$, which is similar in flavor to a bounded-degree MST. We name this structure $\alpha$-MST. 
Let $P$ be a set of points in the plane and let $0 < \alpha \le 2\pi$ be an angle. An $\alpha$-ST of $P$ is a spanning tree of the complete Euclidean graph induced by $P$, with the additional property that for each point $p \in P$, the smallest angle around $p$ containing all the edges adjacent to $p$ is at most $\alpha$. An $\alpha$-MST of $P$ is then an $\alpha$-ST of $P$ of minimum weight.
For $\alpha < \pi/3$, an $\alpha$-ST does not always exist, and, for $\alpha \ge \pi/3$, it always exists~\cite{AGP13,AHHHPSSV13,CKLR11}. In this paper, we study the problem of computing an $\alpha$-MST for several common values of $\alpha$.

Motivated by wireless networks, 
we formulate the problem in terms of directional antennas. With each point $p \in P$, we associate a wedge $\wedge{p}$ of angle $\alpha$ and apex $p$.
The goal is to assign an orientation and a radius $r_p$ to each wedge $\wedge{p}$, such that the resulting graph is connected and its MST is an $\alpha$-MST. (We draw an edge between $p$ and $q$ if $p \in \wedge{q}$, $q \in \wedge{p}$, and $|pq| \le r_p, r_q$.)
Unsurprisingly, the problem of computing an $\alpha$-MST is NP-hard, at least for $\alpha=\pi$ and $\alpha=2\pi/3$. We present constant-factor approximation algorithms for $\alpha = \pi/2, 2\pi/3, \pi$.

One of our major results is a surprising theorem for $\alpha = 2\pi/3$, which, besides being interesting from a geometric point of view, has important applications. For example, the theorem guarantees that 
given {\em any} set $P$ of $3n$ points in the plane and {\em any} partitioning of the points into $n$ triplets, one can orient the wedges of each triplet {\em independently}, such that the graph induced by $P$ is connected. We apply the theorem to the {\em antenna conversion} problem.
\end{abstract}

\section {Introduction}

Let $P$ be a set of points in the plane and let $0 < \alpha \le 2\pi$ be an angle. An $\alpha$-ST of $P$ is a spanning tree of the complete Euclidean graph induced by $P$, with the additional property that for each point $p \in P$, the smallest angle around $p$ containing all the edges adjacent to $p$ is at most $\alpha$. An $\alpha$-MST of $P$ is then an $\alpha$-ST of $P$ of minimum weight.
 
In this paper, we study the problem of computing an $\alpha$-MST for several common values of $\alpha$. For $\alpha < \pi/3$, an $\alpha$-ST does not always exist (consider, e.g., an equilateral triangle). Moreover, it is well known that there always exists a Euclidean MST of degree at most 5.
Therefore, it is interesting to focus on the range $\pi/3 \leq \alpha < 8\pi/5$.

Carmi et al.~\cite{CKLR11} showed that, for $\alpha = \pi/3$, an $\alpha$-ST always exists. A somewhat simpler construction was subsequently proposed by Ackerman et al.~\cite{AGP13}. Aichholzer et al.~\cite{AHHHPSSV13} have also obtained this result (together with additional related results),   independently. However, in all these papers, the goal is to construct an $\alpha$-ST (for $\alpha = \pi/3$) and not an $\alpha$-MST. 

The problem of computing an $\alpha$-MST is similar in flavor to the problem of computing a Euclidean minimum weight degree-$k$ spanning tree, which has been studied extensively (see, e.g.,~\cite{A98, C04, JR09, KRY96, M99}). A minimum weight degree-$k$ spanning tree is a minimum weight spanning tree, such that the degree of each point is at most $k$, where the interesting values of $k$ are 2,3, and 4. Notice that for $k=2$ we get the Euclidean traveling salesman path problem. 

The problem of computing an $\alpha$-ST is closely related to problems in which one needs to compute a Hamiltonian path or cycle, with some restrictions on the angles. Fekete and Woeginger~\cite{FW97} showed that every set of points has a Hamiltonian {\em path}, such that all its angles are bounded by $\pi/2$. An alternative construction was given later in~\cite{CKLR11}. Fekete and Woeginger also conjectured that for every set of $2k \geq 8$ points there exists a Hamiltonian {\em cycle}, such that all its angles are bounded by $\pi/2$. Recently, Dumitrescu et al.~\cite{DPT12} showed how to construct a Hamiltonian cycle whose angles are bounded by $2\pi/3$.
As for lower bound, in~\cite{CKLR11} and, independently, in~\cite{DPT12} it is shown that, for any $\varepsilon>0$, there exists a set of points, for which any Hamiltonian path has an angle greater than $\pi/2 - \varepsilon$.
The problem of finding Hamiltonian paths with large angles was also considered in~\cite{FW97}, where it is conjectured that every point set admits a Hamiltonian path, whose angles are at least $\pi/6$; B{\'a}r{\'a}ny et al.~\cite{BPV09} showed how to construct a path, whose angles are at least $\pi/9$. 

Unsurprisingly, the problem of computing an $\alpha$-MST is NP-hard, at least for $\alpha=\pi$ and $\alpha=2\pi/3$. 
For $\alpha=\pi$, one can show this by a reduction from the problem of finding a Hamiltonian path in grid graphs of degree at most $3$, which is known to be NP-hard~\cite{IPS82}. The reduction is similar to the one described for the problem of computing a minimum weight degree-3 spanning tree~\cite{PV84}, with a few simple adaptations. 
For $\alpha=2\pi/3$, one can show this by a straight-forward reduction from Hamiltonian path in hexagonal grid graphs. Arkin et al.~\cite{AFIMMRPRX09} showed that the problem of finding a Hamiltonian cycle in hexagonal grid graphs is NP-hard. However, with not too much effort, one can prove that finding a Hamiltonian path in hexagonal grid graphs is NP-hard as well. 

Motivated by wireless networks, we formulate the problem of computing an $\alpha$-MST in terms of directional antennas. 
In the last few years, directional antennas have received considerable attention (see, e.g.,~\cite{KKM,BCDFKM11,CKKKW08}), as they have some noticeable advantages over omni-directional antennas. In particular, they require less energy to reach a receiver at a given distance, and
when broadcasting to this receiver the affected region is much smaller, reducing the probability of causing interference at friendly receivers or being subject to eves dropping by hostile receivers.
With each point $p \in P$, we associate a wedge $\wedge{p}$ of angle $\alpha$ and apex $p$. The goal now is to assign an orientation and a radius $r_p$ to each wedge $\wedge{p}$, such that the resulting graph is connected and its MST is an $\alpha$-MST. (We draw an edge between $p$ and $q$ if $p \in \wedge{q}$, $q \in \wedge{p}$, and $|pq| \le r_p, r_q$.) 

An interesting related problem is the {\em antenna conversion} problem. The {\em unit disk graph} of $P$, denoted $\UDG$, is the graph in which there is an edge between $p$ and $q$ if $|pq| \le 1$. This is the communication graph induced by $P$, where each point in $P$ represents a transceiver equipped with an omni-directional antenna of radius 1. We assume that $\UDG$ is connected. Suppose that one wishes to replace the omni-directional antennas with directional antennas of angle $\alpha$. The goal now is to assign an orientation to each of the wedges $\wedge{p}$ and to fix a common range $\delta=\delta(\alpha)$, such that the resulting (symmetric) communication graph is a $c$-hop-spanner of $\UDG$, where $c = c(\alpha)$. Moreover, $\delta$ and $c$ should be small constants.
Aschner et al.~\cite{AKM13} considered this problem for $\alpha=\pi/2$. Here we solve it for $\alpha=2\pi/3$, using significantly smaller constants.

\paragraph{Our results.}
In Section~\ref{sec:gadget} we focus on the case $\alpha = 2\pi/3$. We begin by describing a simple gadget: Given any set $S$ of three points in the plane, we show how to orient the wedges associated with the points of $S$, such that $G_S$, the graph induced by $S$, is connected, and, moreover, the union of the wedges of $S$ covers the plane. We then prove a surprising theorem, which, besides being interesting from a geometric point of view, has far-reaching applications, such as the one mentioned in the abstract. Informally, the theorem states that any two such gadgets are connected. That is, let $S_1$ and $S_2$ be two triplets of points in the plane, and assume that the wedges (associated with the points) of $S_1$ and, independently, of $S_2$ are oriented according to the gadget construction instructions, then the graph induced by $S_1 \cup S_2$ is connected.
Proving this theorem turned out to be a very challenging task, due to the huge number of possible configurations that must be considered, and only after arriving at the current three-stage proof structure (see Section~\ref{sec:main_theorem}), were we able to complete the proof.

In Section~\ref{sec:tsp_apx}, we present constant-factor approximation algorithms for computing an $\alpha$-MST. In particular, we compute a $2$-approximation for a $\pi$-MST, a $6$-approximation for a $2\pi/3$-MST, and a $16$-approximation for a $\pi/2$-MST. These approximations are actually with respect to a Euclidean MST, which is a lower bound for an $\alpha$-MST, for any $\alpha$.
In Section~\ref{sec:boundedrange}, we present a solution to the antenna conversion problem for $\alpha=2\pi/3$, based on the theorem above. Specifically, we construct, in $O(n\log n)$ time, a 6-hop-spanner of $\UDG$, in which each edge is of length at most 7.
Finally, NP-hardness proofs for the problem of computing an $\alpha$-MST, for $\alpha=\pi$ and $\alpha=2\pi/3$, can be found in Section~\ref{sec:np_hardness}.

\section{$\boldsymbol{\alpha = \frac{2\pi}{3}}$}\label{sec:gadget}

{\bf Notation.}
Let $p$ be a point and let $\alpha$ be an angle. We denote the wedge of angle $\alpha$ and apex $p$ by $\wedge{p}$. 
The left ray bounding $\wedge{p}$ (when looking from $p$ into $\wedge{p}$) is denoted by $\leftray{p}$ and the right ray by $\rightray{p}$. The bisector of $\wedge{p}$ is denoted by $\bisector{p}$.  
The orientations of $\leftray{p}$, $\rightray{p}$, and $\bisector{p}$ are denoted by $\orientation{\leftray{p}}$, $\orientation{\rightray{p}}$, and $\orientation{\bisector{p}}$, respectively.
The {\em orientation} of $\wedge{p}$ is the orientation of its bisector and is denoted by
$\orientation{\wedge{p}}$. We denote the ray emanating from $p$ of orientation $\orientation{\bisector{p}} + 180$ by $\thirdray{p}$; its orientation is denoted by $\orientation{\thirdray{p}}$. 

Let $S$ be a set of points, where each point $p \in S$ is associated with a wedge $\wedge{p}$ of some orientation.
The graph induced by $S$, denoted $G_S$, is the graph in which there is an edge between $p,q \in S$ if and only if $p \in \wedge{q}$ and $q \in \wedge{p}$. If there is an edge between $p$ and $q$, we say that $p$ and $q$ are {\em connected} and denote this by $\connected{p}{q}$. Similarly, if $S_1$ and $S_2$ are two such sets of points, and there exist a point $p$ in $S_1$ and a point $q$ in $S_2$ such that $p$ and $q$ are connected, then we say that $S_1$ and $S_2$ are {\em connected} and denote this by $\connected{S_1}{S_2}$. The notation $\notconnected{p}{q}$ means that $p$ and $q$ are not connected, and, similarly, $\notconnected{S_1}{S_2}$ means that there does not exist a point in $S_1$ and a point in $S_2$ such that these points are connected.

\subsection{The basic gadget}\label{sec:orient}

\begin{claim}\label{lem:three_pts}
Let $S=\{a,b,c\}$ be a set of three points in the plane, and set $\alpha=2\pi/3$. Then, one can orient the wedges
of $S$, such that $G_S$, the induced graph of $S$, contains a $2\pi/3$-ST of $S$, and the wedges of $S$ cover
the plane.
\end{claim}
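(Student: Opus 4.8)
The plan is to first observe that the spanning-tree requirement comes for free. Since every edge of $G_S$ incident to a point $p$ leads to a point lying inside $\wedge{p}$, all edges at $p$ are contained in the $2\pi/3$-wedge $\wedge{p}$, so \emph{any} spanning tree of $G_S$ is automatically a $2\pi/3$-ST. Hence it suffices to orient the three wedges so that (i) $G_S$ is connected and (ii) $\wedge{a}\cup\wedge{b}\cup\wedge{c}$ covers the plane. For the covering part I would isolate the following criterion and reuse it: if the three bisectors are pairwise $120^\circ$ apart and the three wedges share a common point $O$, then they cover the plane. Indeed, the three directional spans then tile the circle of directions into three $120^\circ$ arcs $C_a,C_b,C_c$; translating each arc to apex $O$ gives three sectors that tile the plane, and since $O\in\wedge{p}$ means $O-p\in C_p$, the cone identity $C_p+C_p=C_p$ yields $O+C_p\subseteq p+C_p=\wedge{p}$, so each sector is swallowed by the corresponding wedge.

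For a nondegenerate triangle in which every interior angle is smaller than $2\pi/3$, I would take $F$ to be the Fermat (Torricelli) point and orient each $\wedge{p}$ so that its bisector points from $p$ toward $F$. Because $\angle aFb=\angle bFc=\angle cFa=120^\circ$, the three bisector directions are pairwise $120^\circ$ apart, and $F$ lies on every bisector, hence in every wedge; the covering criterion above then gives (ii) at once. For (i), consider the triangle $pFq$ formed by $F$ and two of the points: its angle at $F$ is $120^\circ$, so the remaining two angles sum to $60^\circ$, and in particular $\angle(p\!\to\!F,\,p\!\to\!q)\le 60^\circ$. Thus $q$ lies within $60^\circ$ of the bisector of $\wedge{p}$, i.e.\ $q\in\wedge{p}$; applying this to both other vertices shows that $\wedge{p}$ contains the other two points, so $G_S=K_3$ and connectivity is immediate.

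It remains to treat the degenerate configurations --- a triangle with an angle at least $2\pi/3$, and collinear or coincident points --- where $F$ collapses onto a vertex and the inward construction is undefined. Here I would let $b$ be a point whose two neighbors $a,c$ span an angle $\angle abc\le 60^\circ$ at it (always possible, e.g.\ the vertex of the smallest angle, or an endpoint in the collinear case), and realize the spanning path $a$--$b$--$c$, choosing three pairwise-$120^\circ$ directions whose global rotation is fixed so that, simultaneously, $\wedge{b}$ contains $a$ and $c$, $\wedge{a}\ni b$, $\wedge{c}\ni b$, and all three wedges contain one common interior point such as the centroid. Each of these demands confines the single rotation parameter to an arc, and the crux is to check that these arcs overlap; the covering then follows yet again from the common-point criterion. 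I expect the main obstacle to be exactly this coverage bookkeeping in the degenerate range: there no interior point sees the three vertices at mutual $120^\circ$ angles, so the clean Fermat symmetry is lost and the rotation must be pinned down by hand.
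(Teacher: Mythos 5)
Your proposal is complete and correct only in the case where every angle of $\triangle abc$ is strictly smaller than $2\pi/3$. In that range everything you do is sound: the observation that any spanning tree of $G_S$ is automatically a $2\pi/3$-ST, the covering criterion (pairwise-$120^\circ$ bisectors plus a point $O$ common to all three wedges imply coverage, via $O+C_p\subseteq p+C_p$), and the Fermat-point argument for $K_3$ connectivity. But for a triangle with an angle of at least $2\pi/3$, and for collinear points, you give only a plan and explicitly leave its crux unverified (``the crux is to check that these arcs overlap''). Such configurations are not a negligible corner case, so as it stands this is a genuine gap: the claim is unproved on a large set of inputs.

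The gap closes with a short computation, and in a way that shows your case split was never needed --- and that lands exactly on the paper's proof. The observation you missed is that the common point $O$ of your covering criterion may be taken to be $b$ itself: once $a,c\in\wedge{b}$, $b\in\wedge{a}$, and $b\in\wedge{c}$ are arranged, the apex $b$ lies in all three wedges, so coverage is automatic; no centroid and no extra arcs are required. The bookkeeping then reduces to four constraints on the single rotation $\theta$. Place $b$ at the origin, let the direction from $b$ to $c$ be $0^\circ$, and let the direction from $b$ to $a$ be $\beta=\angle abc\in[0^\circ,60^\circ]$ (possible precisely because $b$ carries the smallest angle); give the bisectors at $b,c,a$ the directions $\theta$, $\theta+120^\circ$, $\theta+240^\circ$, respectively. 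Then $c\in\wedge{b}$ iff $\theta\in[-60^\circ,60^\circ]$; $a\in\wedge{b}$ iff $\theta\in[\beta-60^\circ,\beta+60^\circ]$; $b\in\wedge{c}$ iff $\theta\in[0^\circ,120^\circ]$; and $b\in\wedge{a}$ iff $\theta\in[\beta-120^\circ,\beta]$. The intersection of these four intervals is $[0^\circ,\beta]$, which is nonempty, so $\theta=0^\circ$ always works. This argument nowhere assumes any bound on the angles of the triangle, so it proves the whole claim in one stroke. The paper's proof is exactly this instance: it fixes $\theta=0^\circ$ (orientations $0,120,240$ with $\overline{bc}$ horizontal and $\angle b$ smallest), checks the edges $(a,b)$ and $(b,c)$, and argues coverage by noting that $\wedge{a}$ and $\wedge{c}$ contain their own translates to apex $b$ --- which is your $O=b$ criterion.

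A final caveat worth internalizing: even if completed, your Fermat orientation is not interchangeable with the paper's construction. Lemmas~\ref{lem:two_cliques} and~\ref{lem:one_clique} and Theorem~\ref{thm:no_cliques} are stated for wedges oriented \emph{according to the proof of} Claim~\ref{lem:three_pts}, and their arguments use construction-specific facts (e.g., that $c$ lies on $\bisector{b}$, and that the point lying on a neighbor's bisector carries an angle no larger than the remaining point's). The remark following Theorem~\ref{thm:no_cliques} exhibits orientations that are connected and plane-covering yet fail pairwise connectivity of two triplets, so satisfying the statement of the claim is not enough for the rest of the paper; the particular construction matters.
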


\begin{proof}
Consider $\triangle abc$, and
assume w.l.o.g. that $\angle b \le \angle c \le \angle a$. Then, $\angle b \le 60$ and $\angle c < 90$. Draw $\triangle abc$, such that $\segment{bc}$ is horizontal (with $b$ to the left of $c$) and $a$ is not below the line containing $\segment{bc}$.
Orient the wedges of $S$ as follows (see Figure~\ref{fig:three_pts_fig}(a)): 
$\orientation{\wedge{a}}=240$,
$\orientation{\wedge{b}}=0$,
$\orientation{\wedge{c}}=120$. 

It is easy to see that the non-directed edges $(a,b)$ and $(b,c)$ are in the induced graph $G_S$.
Thus, $G_S$ contains a $2\pi/3$-ST. As for the second requirement, notice that 
$\wedge{a}$ contains the wedge $\wedge{a}'$ of orientation $\orientation{\wedge{a}}$ and apex $b$, and
$\wedge{c}$ contains the wedge $\wedge{c}'$ of orientation $\orientation{\wedge{c}}$ and apex $b$. But, clearly,
$\wedge{a}' \cup \wedge{b} \cup \wedge{c}' = \mathbb{R}$.  
\end{proof}


\begin{figure}[htb]
\centering
  \subfigure[]{
   \centering
       \includegraphics[width=0.35\textwidth,page=1]{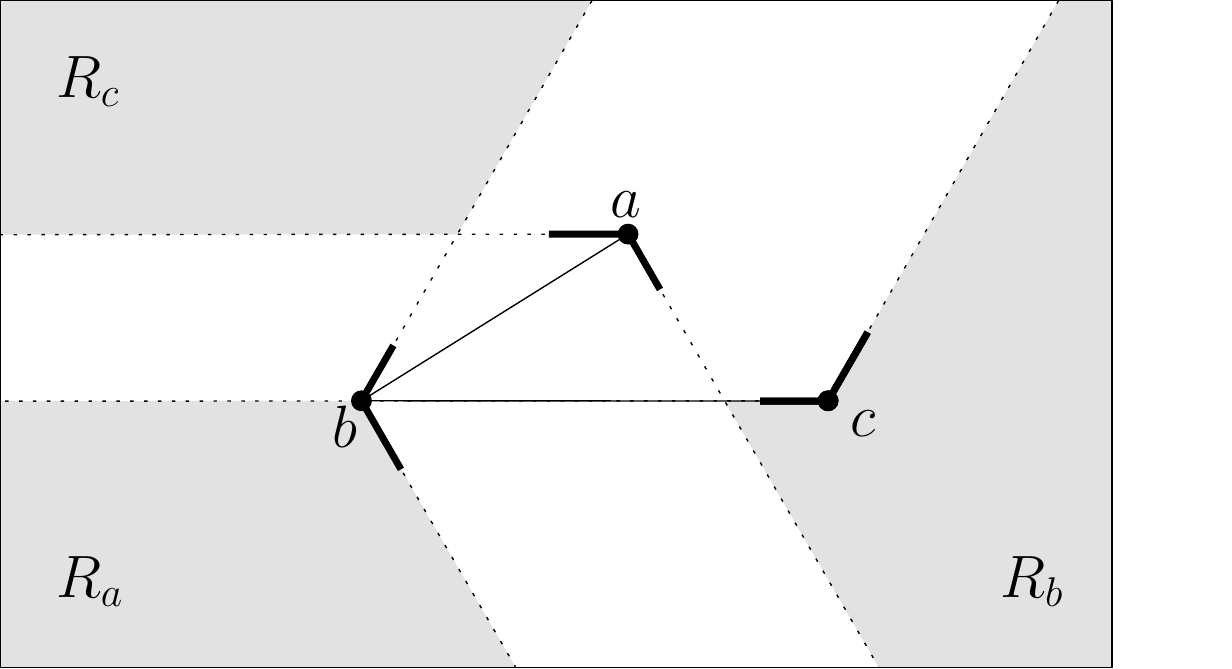}
  }
  \subfigure[]{
   \centering
       \includegraphics[width=0.35\textwidth,page=2]{fig/three_pts_regions}
  }
  \subfigure[]{
   \centering
       \includegraphics[height=0.19\textwidth,page=1]{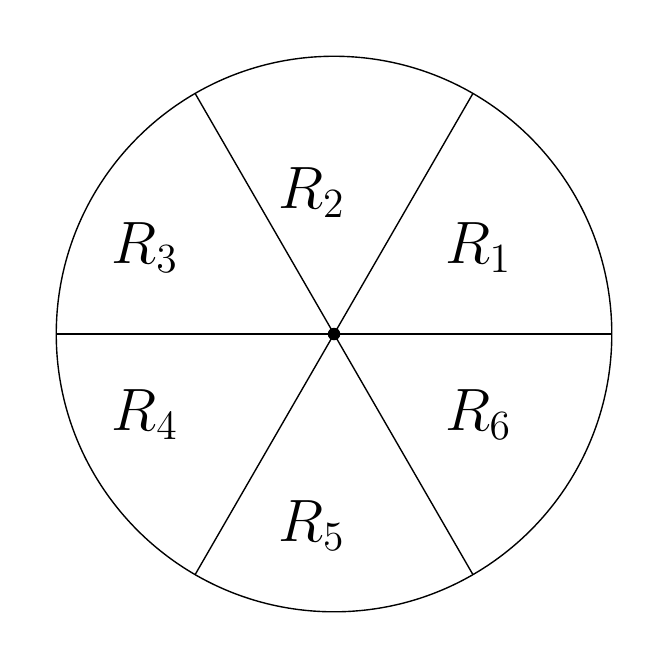}
  }
 	\caption{(a) The basic gadget of Claim~\ref{lem:three_pts}. $\orientation{\wedge{a}}=240$, $\orientation{\wedge{b}}=0$, and $\orientation{\wedge{c}}=120$. A point $p$ is in region $R_a$ if and only if $p \in \wedge{a}$ and $p \not \in \wedge{b}, \wedge{c}$, i.e., $R_a = \wedge{a} \setminus (\wedge{b} \cup \wedge{c})$. Regions $R_b$ and $R_c$ are defined analogously. (b) $\orientation{\leftray{a}}=\orientation{\rightray{b}}=\orientation{\thirdray{c}}=300$, $\orientation{\rightray{a}}=\orientation{\thirdray{b}}=\orientation{\leftray{c}}=180$, and $\orientation{\thirdray{a}}=\orientation{\leftray{b}}=\orientation{\rightray{c}}=60$. (c) The six ranges $R_1, \ldots, R_6$.}	\label{fig:three_pts_fig}	
\end{figure}

The gadget of Claim~\ref{lem:three_pts} has some noticeable properties:\\
{\bf Property 1.} For any $x \in S$, the orientations of the wedges of $S$ are $\orientation{\wedge{x}}$ and $\orientation{\wedge{x}} \pm 120$.\\ 
{\bf Property 2.} For any $x \in S$, the orientations of the rays bounding the wedges of $S$ are $\orientation{\wedge{x}} \pm 60$ and $\orientation{\wedge{x}}+180$. Moreover, each of these three orientations appears exactly twice, once as the orientation of a left ray bounding some wedge and once as the orientation of a right ray bounding some other wedge (see Figure~\ref{fig:three_pts_fig}(b)).\\
{\bf Property 3.} Consider any two wedges $\wedge{x}$ and $\wedge{y}$ and the four rays defining them. Then, by Property~2, exactly two of these rays, $\rho_1$ from $\wedge{x}$ and $\rho_2$ from $\wedge{y}$, have the same orientation. Let $l$ be a line intersecting both $\rho_1$ and $\rho_2$ and perpendicular to $\rho_1$ (and to $\rho_2$). Then, $\wedge{x} \cup \wedge{y}$ covers the halfplane defined by $l$ that does not include the points $x$ and $y$.

Finally, let $R_i$ denote the range $((i-1)60,i60)$, for $i = 1,\ldots,6$ (see Figure~\ref{fig:three_pts_fig}(c)).


\subsection{The induced graph of $\boldsymbol{S_1 \cup S_2}$ is connected}
\label{sec:main_theorem}

In this section, we prove the following surprising theorem (Theorem~\ref{thm:no_cliques}), which, as mentioned, has far-reaching applications.
Let $S_1=\{a,b,c\}$ and $S_2$ be two triplets of points in the plane, and assume that the wedges (associated with the points) of $S_1$ and, independently, of $S_2$ are oriented according to the proof of Claim~\ref{lem:three_pts}. Then, the induced graph of $S_1 \cup S_2$ is connected. 

In order to cope with the huge number of cases, we prove Theorem~\ref{thm:no_cliques} in three stages.
In the first stage (Lemma~\ref{lem:two_cliques}), we prove the statement assuming that both induced graphs of $S_1$ and of $S_2$ are cliques. In the second stage (Lemma~\ref{lem:one_clique}), we prove the statement assuming only one of the induced graphs is a clique, using, of course, Lemma~\ref{lem:two_cliques}. Finally, in the third stage (Theorem~\ref{thm:no_cliques}), we prove the statement without any additional assumptions, using Lemma~\ref{lem:one_clique}.

Throughout this section, we assume (as in the proof of Claim~\ref{lem:three_pts}) that, in $\Delta abc$, $\angle b \le \angle c \le \angle a$, $\overline{bc}$ is horizontal, with $b$ to the left of $c$, and $a$ is not below the line $l$ containing $\overline{bc}$ (see Figure~\ref{fig:three_pts_fig}(a)).     

\begin{lemma}[Two cliques]\label{lem:two_cliques}
Let $S_1=\{a,b,c\}$ and $S_2$ be two triplets of points in the plane and let $\alpha=2\pi/3$.
Assume that the wedges (associated with the points) of $S_1$ and, independently, of $S_2$ are oriented according to the proof of Claim~\ref{lem:three_pts}, and that both induced graphs, $G_{S_1}$ and $G_{S_2}$, are cliques.
Then, the induced graph $G_{S_1\cup S_2}$ is connected.
\end{lemma}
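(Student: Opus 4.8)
The plan is to argue by contradiction: assume $\notconnected{S_1}{S_2}$ and produce an edge. Writing $S_2=\{a',b',c'\}$, the wedge orientations of $S_1$ are $\{0,120,240\}$ while those of $S_2$ are $\{\gamma,\gamma+120,\gamma+240\}$, where $\gamma$ is the angle by which $S_2$'s canonical frame is rotated relative to $S_1$'s. By Property~1 both orientation sets are invariant under a $+120$ rotation, so only $\gamma \bmod 120$ matters; hence it suffices to treat $\gamma\in[0,120)$, which the sextant boundary at $60$ splits into the subcases $\gamma\in R_1$ and $\gamma\in R_2$ (plus the boundary values). I would then record what the assumption buys: for each of the nine ordered pairs $(p,q)\in S_1\times S_2$, $\notconnected{S_1}{S_2}$ forces that whenever $q\in\wedge{p}$ we have $p\notin\wedge{q}$, i.e. the direction of $q$ from $p$ lies in $\wedge{p}$ but the opposite direction avoids $\wedge{q}$ — a constraint on a $120^\circ$ arc. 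I would also record the clique shape data: Claim~\ref{lem:three_pts} gives the edges $(a,b),(b,c)$, and being a clique adds $(a,c)$, which pins $\vec{ab},\vec{ac},\vec{bc}$ to fixed $60^\circ$ ranges (the triangle ``points up'', with $a$ its apex), and identically for $S_2$ up to the rotation $\gamma$.

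Next I would dispose of the well-separated configurations. Suppose a line separates $S_1$ from $S_2$, and let $\phi$ be the direction from $S_1$ towards $S_2$. Since the wedges of $S_1$ cover every direction (Claim~\ref{lem:three_pts}), some $\wedge{p}$, $p\in S_1$, has angular range containing $\phi$, and symmetrically some $\wedge{q}$, $q\in S_2$, has range containing $\phi+180$. When the separation is large relative to the two diameters, all of $S_2$ is seen from $p$ within a tiny cone about $\phi$, so $q\in\wedge{p}$; symmetrically $p\in\wedge{q}$; hence $\connected{p}{q}$, a contradiction. The one subtlety is when $\phi$ equals a boundary direction of $S_1$'s ranges (one of $60,180,300$): there I would invoke Property~3, using that the two wedges sharing that orientation cover the whole halfplane beyond the separating line, so the mutual-containment argument still applies to one of the two corresponding apex pairs.

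The heart of the proof is the intertwined case, where no large separation exists. Here I would use the plane-covering of $S_1$ to assign each of $a',b',c'$ to whichever of $R_a,R_b,R_c$ (or pairwise overlap) contains it, i.e. to the subset of $\{\wedge{a},\wedge{b},\wedge{c}\}$ covering that point, and symmetrically assign each of $a,b,c$ to a wedge of $S_2$. For every admissible assignment — constrained jointly by the sextant of $\gamma$ and by the clique shapes of both triangles — I would exhibit a mutually-containing pair, the systematic tool again being Property~3: for a suitable pair of $S_1$-wedges and the oppositely placed point of $S_2$ (and vice versa), the two covered halfplanes overlap in the region holding the relevant apexes, yielding $q\in\wedge{p}$ and $p\in\wedge{q}$ simultaneously. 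The clique hypothesis is exactly what keeps this finite: it rules out the spread-out triangles that would create additional region assignments, and it guarantees that within each triangle the three apexes see one another, so an edge found to one apex remains useful.

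I expect the main obstacle to be precisely this third stage: the combinatorial explosion coming from the product of the rotation sextant of $\gamma$, the region assignment of each point of $S_2$ (and the symmetric assignment for $S_1$), and the relative placement of the triangles. The rotational symmetry of the gadget and the rigidity forced by the clique shape are the two levers that collapse this into a manageable, if still sizeable, list of cases; verifying that each surviving case genuinely contains a \emph{mutually}-containing apex pair — not merely one-directional containments — is the delicate point, and is exactly why this lemma is isolated as the base case of the three-stage argument.
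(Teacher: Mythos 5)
Your proposal is a plan rather than a proof, and the gap sits exactly where the lemma's content lies. In your ``intertwined'' case (which, as you define it, absorbs every configuration that is not far-separated, i.e.\ essentially all the hard ones) you never actually produce the edge: you enumerate the data (sextant of $\gamma$, region assignments of the points of $S_2$ among $R_a,R_b,R_c$, and symmetrically) and then assert that ``for every admissible assignment I would exhibit a mutually-containing pair,'' with Property~3 as the tool. But Property~3 only gives \emph{one-directional} coverage --- a halfplane covered by the union of two wedges --- and, as you yourself note, the delicate point is mutual containment. Asserting that the case analysis will close is not closing it; the lemma is precisely the statement that it closes. The same hole already appears in your ``well-separated'' case: when the direction $\phi$ is a boundary orientation for $S_1$'s ranges \emph{and} $\phi+180$ is a boundary orientation for $S_2$'s, your covering argument only yields $S_2\subseteq \wedge{p_1}\cup\wedge{p_2}$ and $S_1\subseteq \wedge{q_1}\cup\wedge{q_2}$, and these one-way containments can in principle chain into a $4$-cycle ($q_1\in\wedge{p_1}$, $p_1\in\wedge{q_2}$, $q_2\in\wedge{p_2}$, $p_2\in\wedge{q_1}$) with no mutually containing pair; you do not rule this out.

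For contrast, the paper's proof does not case on the rotation $\gamma$ or on separation at all. It cases on how many points of $S_1$ a single wedge of $S_2$ covers (three, exactly two, or each wedge covering exactly one --- legitimate because the wedges of $S_2$ cover the plane, hence all of $S_1$), and in each case it extracts hard information: non-connection forces $x$ into a specific region ($R_a$, $R_b$, or $R_c$), which pins the orientations $\orientation{\leftray{x}},\orientation{\rightray{x}},\orientation{\thirdray{x}}$ into specific sextants $R_1,\dots,R_6$; Property~2 then identifies the \emph{specific} partner point $y\in S_2$ whose rays have matching orientations; and the clique hypothesis on $S_2$ is used quantitatively --- $\connected{x}{y}$ places $y$ inside $\wedge{x}$ and on a known side of $\bisector{x}$ --- which is what finally forces $\wedge{y}$ to cover a point of $S_1$ that also covers $y$. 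This mechanism (non-connection $\Rightarrow$ sextant constraints $\Rightarrow$ partner location via the clique edge $\Rightarrow$ mutual containment) is the missing engine in your write-up; without it, or a worked-out substitute, your proposal does not prove the lemma.
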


\old{
\begin{proof}
The wedges of $S_2$ cover the plane, in particular they cover all points of $S_1$. Therefore, we distinguish between three (not necessarily disjoint) cases: 
(i) there exists a point $x \in S_2$ such that $\wedge{x}$ covers all points of $S_1$, 
(ii) there exists a point $x \in S_2$ such that $\wedge{x}$ covers exactly two points of $S_1$, and 
(iii) the wedge of each point in $S_2$ covers exactly one point of $S_1$.

{\bf Case (i):} There exists a point $x \in S_2$ such that $\wedge{x}$ covers all points of $S_1$. Since
the wedges of $S_1$ cover the plane, at least one of them must cover $x$, and therefore
$\connected{x}{S_1}$.

{\bf Case (ii):} There exists a point $x \in S_2$ such that $\wedge{x}$ covers exactly two points of $S_1$. We divide this
case into three sub-cases, according to which two points of $S_1$ are covered by $\wedge{x}$.

\begin{figure}[htb]
 \centering
 \subfigure[$y\in R_a$]{
   \centering
       \includegraphics[width=0.45\textwidth]{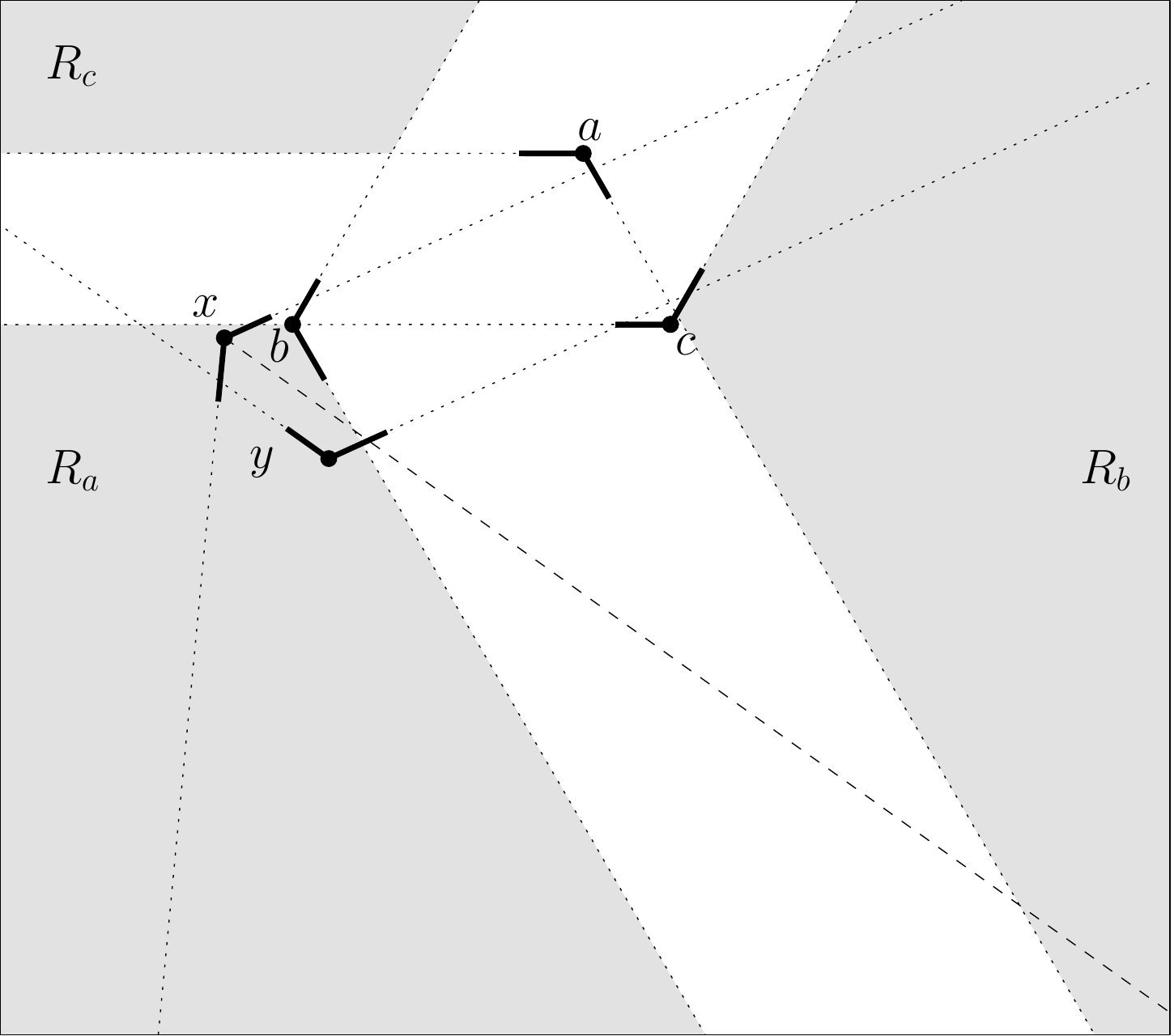}
 }
 \subfigure[$y \in \wedge{b}$]{
  \centering
       \includegraphics[width=0.45\textwidth]{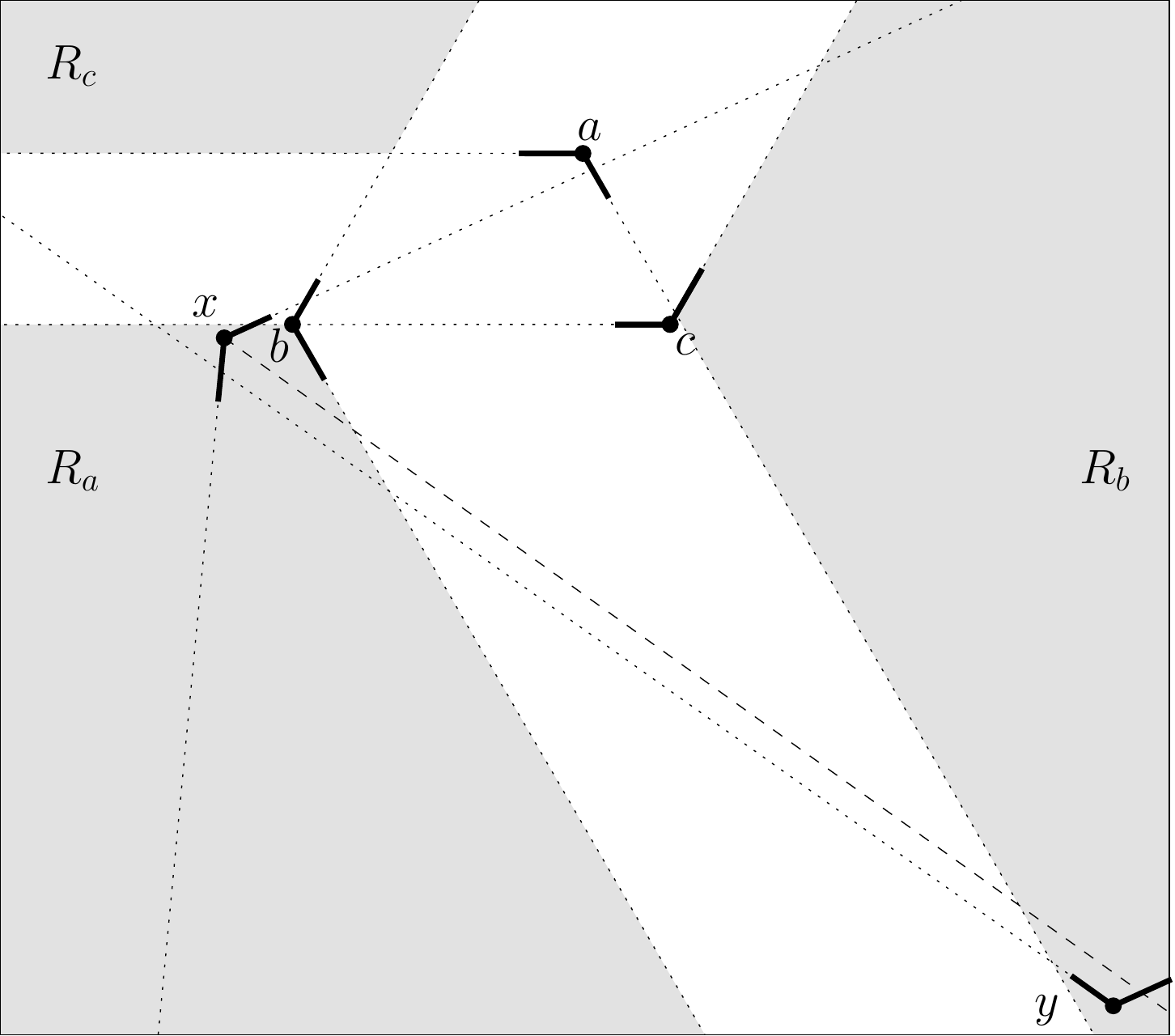}
 }
 \caption{Proof of Lemma~\ref{lem:two_cliques}, Case (ii)(1).}	\label{fig:case2b}	
\end{figure} 

(1) $\wedge{x}$ covers $b$ and $c$ and does not cover $a$. Assume $\notconnected{x}{b,c}$ (since otherwise
we are done), then $x \in R_a$ and one of the rays of $\wedge{x}$ intersects $\segment{ab}$ and
$\segment{ac}$. Notice that this ray must be $\leftray{x}$ and that $\leftray{x}$ also intersects $\leftray{b}$ (see~Figure~\ref{fig:case2b}).
Since $x$ lies below $l$, $\leftray{x}$ intersects $\leftray{b}$, and $\orientation{\leftray{b}}=60$, we have that $\orientation{\leftray{x}} \in \ra$. 
It follows that $\orientation{\rightray{x}} \in \re$, $\orientation{\wedge{x}} \in \rf$, and $\orientation{\thirdray{x}} \in \rc$.
Therefore, $\bisector{x}$ (whose orientation is $\orientation{\wedge{x}}$) does not intersect $l$.
Let $y$ be the point of $S_2$ such that $\orientation{\leftray{y}}= \orientation{\thirdray{x}} \in \rc$ and 
$\orientation{\rightray{y}} = \orientation{\leftray{x}} \in \ra$. Since $\connected{x}{y}$, we have that
$y \in \wedge{x}$ and $y$ lies to the right of $\bisector{x}$. Notice that $\wedge{y}$ contains the (imaginary) wedge of orientation $\orientation{\wedge{y}}$ and apex $x$. If $y \in R_a$ (see Figure~\ref{fig:case2b}(a)), then $\connected{y}{a}$, since $\wedge{y}$ covers $a$. 
Otherwise, $y \in \wedge{b}$ and in particular $y$ lies to the right of $b$ (see Figure~\ref{fig:case2b}(b)). In this case we show that $\connected{y}{b}$. Indeed, $\rightray{y}$ intersects $l$ to the right of $b$, since $\orientation{\rightray{y}}\in\ra$, and, since $l(\leftray{y})$ is parallel to $l(\bisector{x})$ and below it, we have that $\leftray{y}$ intersects $l$ to the left of $b$. We conclude that $b \in \wedge{y}$ and $\connected{y}{b}$.

\begin{figure}[htb]
 \centering
       \includegraphics[width=0.35\textwidth]{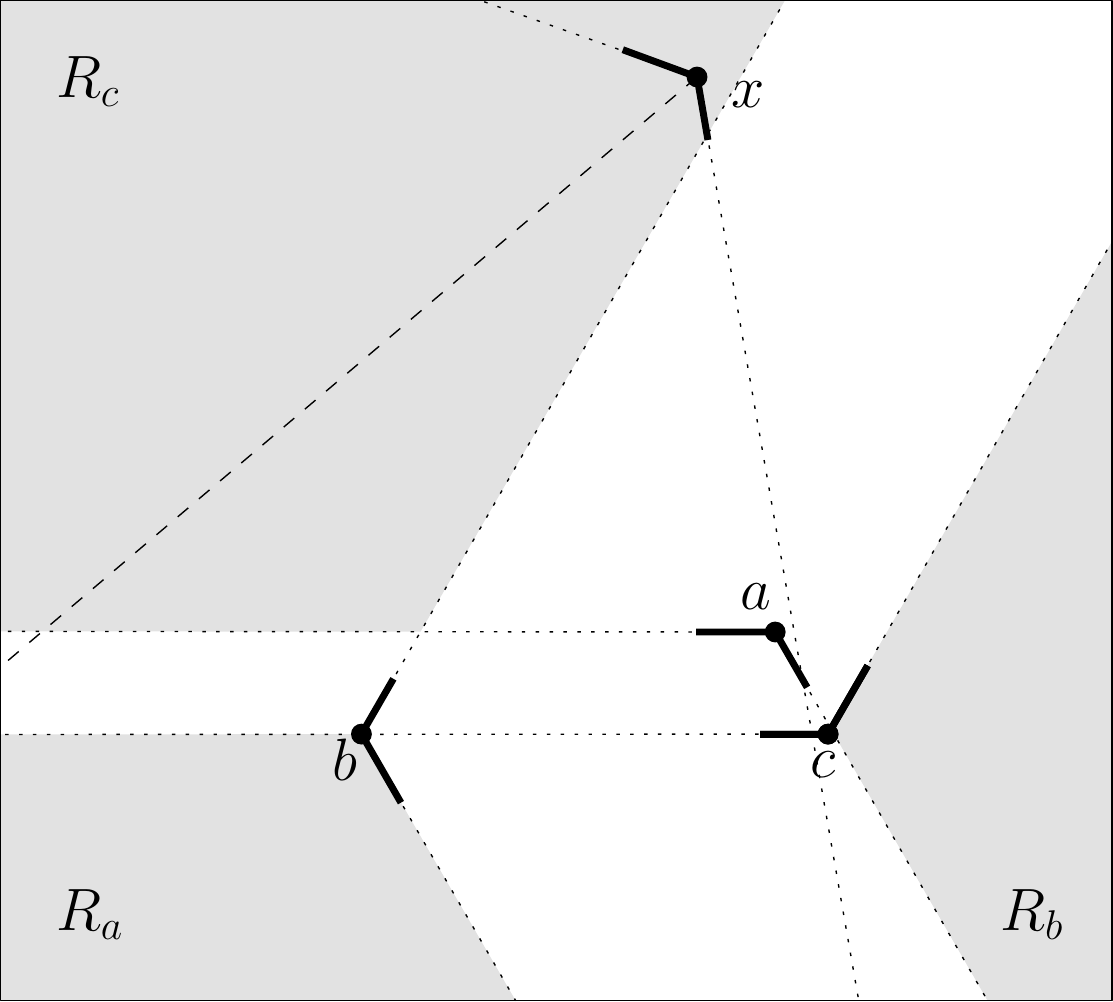}
 \caption{Proof of Lemma~\ref{lem:two_cliques}, Case (ii)(2).}	\label{fig:case2c}	
\end{figure} 
  
(2) $\wedge{x}$ covers $a$ and $b$ and does not cover $c$. This case is similar to Case~(ii)(1). Assume $\notconnected{x}{a,b}$ (since otherwise
we are done), then $x \in R_c$
and one of the rays of $\wedge{x}$ intersects $\segment{ac}$ and $\segment{bc}$. Notice that this ray must be $\leftray{x}$ and that $\leftray{x}$ also intersects $\leftray{a}$ (see Figure~\ref{fig:case2c}).
Since $x$ lies above $l$, $\leftray{x}$ intersects $\leftray{a}$, and $\orientation{\leftray{a}}=300$, we have that $\orientation{\leftray{x}} \in \re$. 
It follows that $\orientation{\rightray{x}} \in \rc$, $\orientation{\wedge{x}} \in \rd$, and $\orientation{\thirdray{x}} \in \ra$. Therefore, $\bisector{x}$ does not intersect $\leftray{b}$.
Let $y$ be the point of $S_2$ such that $\orientation{\leftray{y}}= \orientation{\thirdray{x}} \in \ra$ and
$\orientation{\rightray{y}} = \orientation{\leftray{x}} \in \re$. 
Since $\connected{x}{y}$, we have that $y \in \wedge{x}$ and $y$ lies to the right of $\bisector{x}$. Notice that $\wedge{y}$ contains the (imaginary) wedge of orientation $\orientation{\wedge{y}}$ and apex $x$. If $y \in \wedge{c}$, then $\connected{y}{c}$, since $\wedge{y}$ covers $c$. 
Otherwise, $y \in R_a$ and in particular $y$ lies to the left of $a$. In this case we show that $\connected{y}{a}$. Indeed, $\rightray{y}$ does not intersect $l$, since $y$ lies below $l$ and $\orientation{\rightray{y}} \in \re$, and, since $l(\leftray{y})$ is parallel to $l(\bisector{x})$ and above it, we have that $\leftray{y}$ intersects $\rightray{a}$. We conclude that $a \in \wedge{y}$ and $\connected{y}{a}$.

\begin{figure}[htb]
 \centering 
  \subfigure[$\leftray{x}$ intersects $\segment{ab}$ and $\segment{bc}$]{
    \centering
        \includegraphics[width=0.45\textwidth]{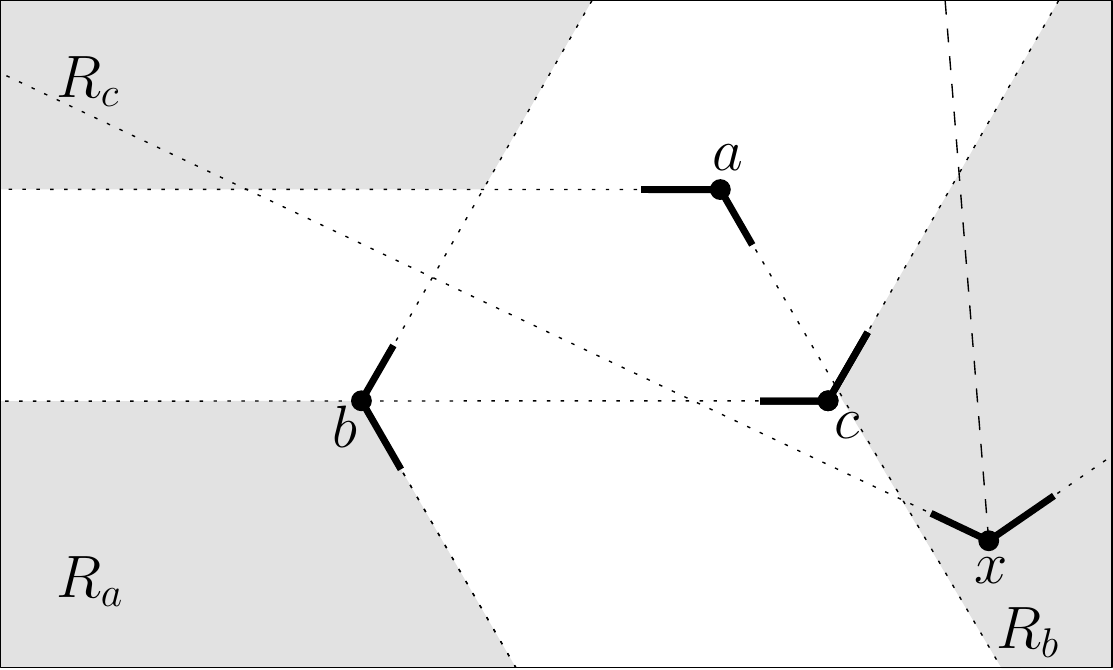}
	\label{fig:case2d}
 }
  \subfigure[$\rightray{x}$ intersects $\segment{ab}$ and $\segment{bc}$]{
    \centering
        \includegraphics[width=0.45\textwidth]{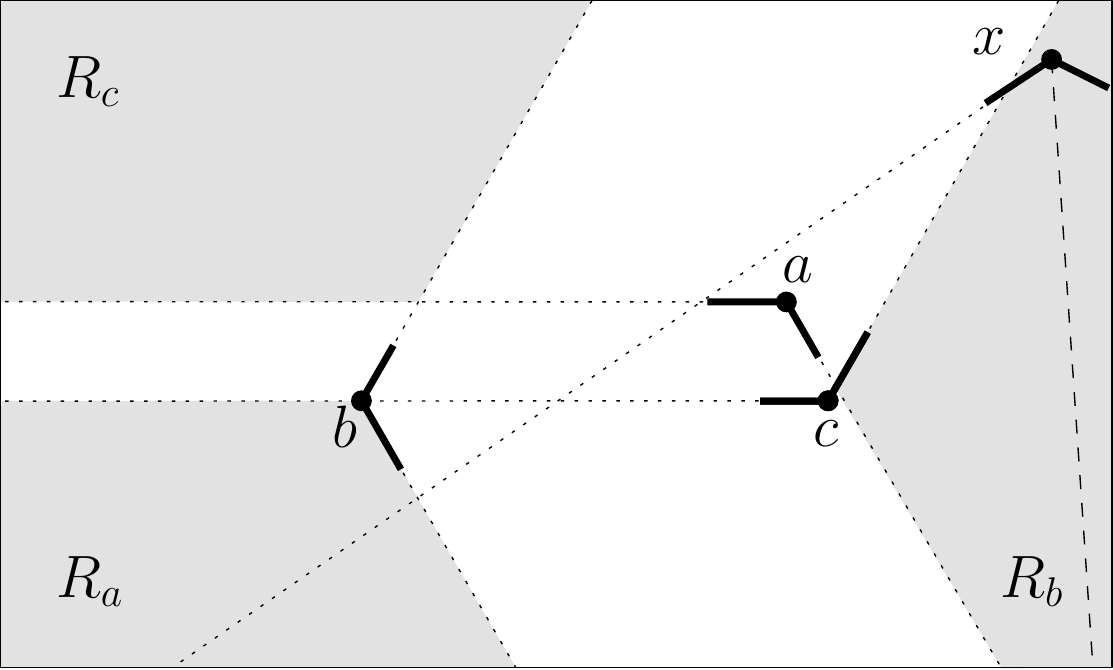}
	\label{fig:case2e}
 }
 \caption{Proof of Lemma~\ref{lem:two_cliques}, Case (ii)(3).}	\label{fig:caseb}
\end{figure}
 
(3) $\wedge{x}$ covers $a$ and $c$ and does not cover $b$. Assume $\notconnected{x}{a,c}$ (since otherwise
we are done), then $x \in R_b$,
and one of the rays of $\wedge{x}$ intersects $\segment{ab}$ and $\segment{bc}$. Notice that this ray 
can be either $\leftray{x}$ or $\rightray{x}$. 

If it is $\leftray{x}$ (see Figure~\ref{fig:case2d}), then
the orientations associated with $\wedge{x}$ are:
$\orientation{\leftray{x}} \in \rc$, $\orientation{\rightray{x}} \in \ra$, and $\orientation{\thirdray{x}} \in \re$.
Notice that $\bisector{x}$ does not intersect $\leftray{a}$.
Let $y$ be the point of $S_2$ such that $\orientation{\leftray{y}}= \orientation{\thirdray{x}}$ and $\orientation{\rightray{y}} =
\orientation{\leftray{x}}$. Since $\connected{x}{y}$, we have that $y \in \wedge{x}$ and $y$ lies to the right of $\bisector{x}$. 
Notice that $\wedge{y}$ contains the (imaginary) wedge of orientation $\orientation{\wedge{y}}$ and apex $x$. If $y \in \wedge{b}$, then $\connected{y}{b}$, since $y$ covers $b$. 
Otherwise, $y \in R_c$. Since $\leftray{y}$ passes to the right of $x$ (or through $x$) and $\orientation{\rightray{y}} \in \rc$, we have that $c \in \wedge{y}$ and therefore $\connected{y}{c}$.

If the ray intersecting $\segment{ab}$ and $\segment{bc}$ is $\rightray{x}$ (see Figure~\ref{fig:case2e}), then
the orientations associated with $\wedge{x}$ are:
$\orientation{\leftray{x}} \in \rf$, $\orientation{\rightray{x}} \in \rd$, and $\orientation{\thirdray{x}} \in \rb$.
Notice that $\bisector{x}$ does not intersect $\rightray{c}$.
Let $y$ be the point of $S_2$ such that $\orientation{\rightray{y}}=
\orientation{\thirdray{x}}$ and $\orientation{\leftray{y}} = \orientation{\rightray{x}}$. 
Since $\connected{x}{y}$, we have that $y \in \wedge{x}$ and $y$ lies to the left of $\bisector{x}$. 
Notice that $\wedge{y}$ contains the (imaginary) wedge of orientation $\orientation{\wedge{y}}$ and apex $x$. If $y \in \wedge{b}$, then $\connected{y}{b}$, since $y$ covers $b$. 
Otherwise, $y \in R_a$. Since $\rightray{y}$ passes to the right of $x$ (or through $x$) and $\orientation{\leftray{y}} \in \rd$, we have that $a \in \wedge{y}$ and therefore $\connected{y}{a}$.

\begin{figure}[htb]
 \centering 
    \includegraphics[width=0.5\textwidth]{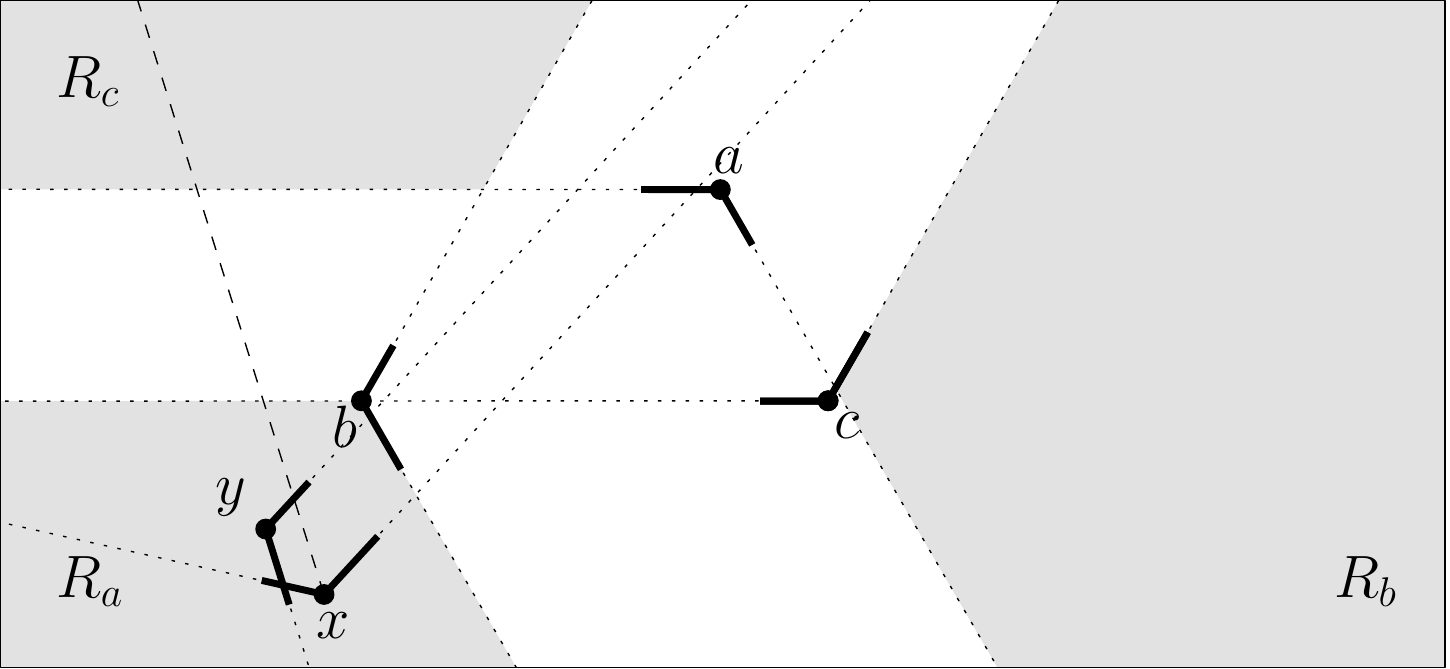}
 \label{fig:case3a}
 \caption{Proof of Lemma~\ref{lem:two_cliques}, Case (iii).}	\label{fig:case3}
\end{figure}

{\bf Case (iii):} The wedge of each point in $S_2$ covers exactly one point of $S_1$. We may assume that this condition also holds for the wedges of $S_1$; that is, the wedge of each point in $S_1$ covers exactly one point of $S_2$. Since, otherwise, we can simply interchange the set names. 
It follows that each point of $S_2$ lies in its own private region among the regions $R_a$, $R_b$, and $R_c$. 

Let $x$ be the point that lies in $R_a$. We claim that $\connected{x}{a}$.
Assume that $\notconnected{x}{a}$. We show that there exists a point $y \in S_2$ that covers two points of $S_1$.
If $\wedge{x}$ covers $b$ (see Figure~\ref{fig:case3}), then $\orientation{\rightray{x}} \in (0,120)$, which implies that $\orientation{\thirdray{x}} \in (240,360)$. Let $y$ be the point of $S_2$ such that $\orientation{\rightray{y}}=\orientation{\thirdray{x}}$ and $\orientation{\leftray{y}}=\orientation{\rightray{x}}$. Since $\connected{y}{x}$, we have that $y \in \wedge{x}$ and $y$ lies to the left of $\bisector{x}$, but then $\wedge{y}$ must cover $a$ and $c$ -- contradiction.
If $\wedge{x}$ covers $c$, then $\orientation{\leftray{x}} \in (0,120)$, which implies that $\orientation{\thirdray{x}} \in (120,240)$. Let $y$ be the point of $S_2$ such that $\orientation{\leftray{y}}=\orientation{\thirdray{x}}$ and $\orientation{\rightray{y}}=\orientation{\leftray{x}}$. Since $\connected{y}{x}$, we have that $y \in \wedge{x}$ and $y$ lies to the right of $\bisector{x}$, but then $\wedge{y}$ must cover $a$ and $b$ -- contradiction.

\end{proof}
}

\begin{proof}
The wedges of $S_2$ cover the plane, in particular they cover all points of $S_1$. Therefore, we distinguish between three (not necessarily disjoint) cases: 
(i) there exists a point $x \in S_2$ such that $\wedge{x}$ covers all points of $S_1$, 
(ii) there exists a point $x \in S_2$ such that $\wedge{x}$ covers exactly two points of $S_1$, and 
(iii) the wedge of each point in $S_2$ covers exactly one point of $S_1$.

{\bf Case (i):} There exists a point $x \in S_2$ such that $\wedge{x}$ covers all points of $S_1$. Since
the wedges of $S_1$ cover the plane, at least one of them must cover $x$, and therefore
$\connected{x}{S_1}$.

{\bf Case (ii):} There exists a point $x \in S_2$ such that $\wedge{x}$ covers exactly two points of $S_1$. We divide this
case into three sub-cases, according to which two points of $S_1$ are covered by $\wedge{x}$.

\begin{figure}[htb]
 \centering
 \subfigure[$y\in R_a$]{
   \centering
       \includegraphics[width=0.45\textwidth]{fig/proof_case_2_b_2}
 }
 \subfigure[$y \in \wedge{b}$]{
  \centering
       \includegraphics[width=0.45\textwidth]{fig/proof_case_2_b_1}
 }
 \caption{Proof of Lemma~\ref{lem:two_cliques}, Case (ii)(1).}	\label{fig:case2b}	
\end{figure} 

(1) $\wedge{x}$ covers $b$ and $c$ and does not cover $a$. Assume $\notconnected{x}{b,c}$ (since otherwise
we are done), then $x \in R_a$ and one of the rays of $\wedge{x}$ intersects $\segment{ab}$ and
$\segment{ac}$. Notice that this ray must be $\leftray{x}$ and that $\leftray{x}$ also intersects $\leftray{b}$ (see~Figure~\ref{fig:case2b}).
Since $x$ lies below $l$, $\leftray{x}$ intersects $\leftray{b}$, and $\orientation{\leftray{b}}=60$, we have that $\orientation{\leftray{x}} \in \ra$. 
It follows that $\orientation{\rightray{x}} \in \re$, $\orientation{\wedge{x}} \in \rf$, and $\orientation{\thirdray{x}} \in \rc$.
Therefore, $\bisector{x}$ (whose orientation is $\orientation{\wedge{x}}$) does not intersect $l$.
Let $y$ be the point of $S_2$ such that $\orientation{\leftray{y}}= \orientation{\thirdray{x}} \in \rc$ and 
$\orientation{\rightray{y}} = \orientation{\leftray{x}} \in \ra$. Since $\connected{x}{y}$, we have that
$y \in \wedge{x}$ and $y$ lies to the right of $\bisector{x}$. Notice that $\wedge{y}$ contains the (imaginary) wedge of orientation $\orientation{\wedge{y}}$ and apex $x$. If $y \in R_a$ (see Figure~\ref{fig:case2b}(a)), then $\connected{y}{a}$, since $\wedge{y}$ covers $a$. 
Otherwise, $y \in \wedge{b}$ and in particular $y$ lies to the right of $b$ (see Figure~\ref{fig:case2b}(b)). In this case we show that $\connected{y}{b}$. Indeed, $\rightray{y}$ intersects $l$ to the right of $b$, since $\orientation{\rightray{y}}\in\ra$, and, since $l(\leftray{y})$ is parallel to $l(\bisector{x})$ and below it, we have that $\leftray{y}$ intersects $l$ to the left of $b$. We conclude that $b \in \wedge{y}$ and $\connected{y}{b}$.

\begin{figure}[htb]
 \centering
       \includegraphics[width=0.35\textwidth]{fig/proof_case_2_c}
 \caption{Proof of Lemma~\ref{lem:two_cliques}, Case (ii)(2).}	\label{fig:case2c}	
\end{figure} 
  
(2) $\wedge{x}$ covers $a$ and $b$ and does not cover $c$. Assume $\notconnected{x}{a,b}$ (since otherwise
we are done), then $x \in R_c$
and one of the rays of $\wedge{x}$ intersects $\segment{ac}$ and $\segment{bc}$. Notice that this ray must be $\leftray{x}$ and that $\leftray{x}$ also intersects $\leftray{a}$ (see Figure~\ref{fig:case2c}).
Since $x$ lies above $l$, $\leftray{x}$ intersects $\leftray{a}$, and $\orientation{\leftray{a}}=300$, we have that $\orientation{\leftray{x}} \in \re$. 
It follows that $\orientation{\rightray{x}} \in \rc$, $\orientation{\wedge{x}} \in \rd$, and $\orientation{\thirdray{x}} \in \ra$. The rest of the proof for this case is very similar to the proof of Case~(ii)(1), thus we omit further details.

\begin{figure}[htb]
 \centering 
  \subfigure[$\leftray{x}$ intersects $\segment{ab}$ and $\segment{bc}$]{
    \centering
        \includegraphics[width=0.45\textwidth]{fig/proof_case_2_d}
	\label{fig:case2d}
 }
  \subfigure[$\rightray{x}$ intersects $\segment{ab}$ and $\segment{bc}$]{
    \centering
        \includegraphics[width=0.45\textwidth]{fig/proof_case_2_e}
	\label{fig:case2e}
 }
 \caption{Proof of Lemma~\ref{lem:two_cliques}, Case (ii)(3).}	\label{fig:caseb}
\end{figure}

(3) $\wedge{x}$ covers $a$ and $c$ and does not cover $b$. Assume $\notconnected{x}{a,c}$ (since otherwise
we are done), then $x \in R_b$,
and one of the rays of $\wedge{x}$ intersects $\segment{ab}$ and $\segment{bc}$. Notice that this ray 
can be either $\leftray{x}$ or $\rightray{x}$. 

If it is $\leftray{x}$ (see Figure~\ref{fig:case2d}), then
the orientations associated with $\wedge{x}$ are:
$\orientation{\leftray{x}} \in \rc$, $\orientation{\rightray{x}} \in \ra$, and $\orientation{\thirdray{x}} \in \re$.
The rest of the proof for this branch is very similar to the proof of Case (ii)(1), thus we omit further details.

If the ray intersecting $\segment{ab}$ and $\segment{bc}$ is $\rightray{x}$ (see Figure~\ref{fig:case2e}), then
the orientations associated with $\wedge{x}$ are:
$\orientation{\leftray{x}} \in \rf$, $\orientation{\rightray{x}} \in \rd$, and $\orientation{\thirdray{x}} \in \rb$.
Again, the rest of the proof for this branch is very similar to the proof of Case (ii)(1), thus we omit further details.

\begin{figure}[htb]
 \centering 
    \includegraphics[width=0.5\textwidth]{fig/proof_case_3_a}
 \label{fig:case3a}
 \caption{Proof of Lemma~\ref{lem:two_cliques}, Case (iii).}	\label{fig:case3}
\end{figure}

{\bf Case (iii):} The wedge of each point in $S_2$ covers exactly one point of $S_1$. We may assume that this condition also holds for the wedges of $S_1$; that is, the wedge of each point in $S_1$ covers exactly one point of $S_2$. Since, otherwise, we can simply interchange the set names. 
It follows that each point of $S_2$ lies in its own private region among the regions $R_a$, $R_b$, and $R_c$. 

Let $x$ be the point that lies in $R_a$. We claim that $\connected{x}{a}$.
Assume that $\notconnected{x}{a}$. We show that there exists a point $y \in S_2$ that covers two points of $S_1$.
If $\wedge{x}$ covers $b$ (see Figure~\ref{fig:case3}), then $\orientation{\rightray{x}} \in (0,120)$, which implies that $\orientation{\thirdray{x}} \in (240,360)$. Let $y$ be the point of $S_2$ such that $\orientation{\rightray{y}}=\orientation{\thirdray{x}}$ and $\orientation{\leftray{y}}=\orientation{\rightray{x}}$. Since $\connected{y}{x}$, we have that $y \in \wedge{x}$ and $y$ lies to the left of $\bisector{x}$, but then $\wedge{y}$ must cover $a$ and $c$ -- contradiction.
If $\wedge{x}$ covers $c$, then $\orientation{\leftray{x}} \in (0,120)$, which implies that $\orientation{\thirdray{x}} \in (120,240)$. Let $y$ be the point of $S_2$ such that $\orientation{\leftray{y}}=\orientation{\thirdray{x}}$ and $\orientation{\rightray{y}}=\orientation{\leftray{x}}$. Since $\connected{y}{x}$, we have that $y \in \wedge{x}$ and $y$ lies to the right of $\bisector{x}$, but then $\wedge{y}$ must cover $a$ and $b$ -- contradiction.

\end{proof}

\begin{lemma}[One clique]\label{lem:one_clique}
Let $S_1=\{a,b,c\}$ and $S_2$ be two triplets of points in the plane and let $\alpha=2\pi/3$.
Assume that the wedges of $S_1$ and, independently, of $S_2$ are oriented according to the proof of 
Claim~\ref{lem:three_pts}, and that the induced graph $G_{S_2}$ is a clique.
Then, the induced graph $G_{S_1\cup S_2}$ is connected.
\end{lemma}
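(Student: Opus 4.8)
The plan is to bootstrap from Lemma~\ref{lem:two_cliques}. Since, by the proof of Claim~\ref{lem:three_pts}, the edges $(a,b)$ and $(b,c)$ always lie in $G_{S_1}$, the graph $G_{S_1}$ is internally connected, and $G_{S_2}$ is connected because it is a clique. Hence $G_{S_1\cup S_2}$ is connected as soon as there is a single edge between the two triplets, i.e.\ as soon as $\connected{S_1}{S_2}$. If $G_{S_1}$ is itself a clique, Lemma~\ref{lem:two_cliques} applies directly, so I may assume that the unique missing edge of $G_{S_1}$ is $(a,c)$.

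I would then run the same three-case split as in Lemma~\ref{lem:two_cliques}, according to whether some wedge of $S_2$ covers three, exactly two, or (for every wedge) exactly one point of $S_1$. The observation that drives the whole argument is that Cases (i) and (ii) of that proof never use that $G_{S_1}$ is a clique: Case (i) needs only that the wedges of $S_1$ cover the plane, while Case (ii) relies solely on the gadget Properties~1--3 of $S_2$, the region structure $R_a,R_b,R_c$ of $S_1$, and the relation $\connected{x}{y}$ between two points $x,y$ of $S_2$---which holds precisely because $G_{S_2}$ is a clique. Thus Cases (i) and (ii) yield a cross edge verbatim. The same is true in Case (iii) when every wedge of $S_1$ also covers exactly one point of $S_2$: the perfect-matching argument of Lemma~\ref{lem:two_cliques} again invokes only $\connected{x}{y}$ for $x,y\in S_2$, and so carries over unchanged.

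The single genuinely new situation is Case (iii) in which the interchange step of Lemma~\ref{lem:two_cliques} breaks down: every wedge of $S_2$ covers exactly one point of $S_1$, yet some wedge $\wedge{x}$ with $x\in S_1$ covers at least two points of $S_2$. There, Lemma~\ref{lem:two_cliques} swapped the names of $S_1$ and $S_2$ and reused Case (ii); but that reuse now needs the relation $\connected{x}{y}$ for the two points $x,y\in S_1$ produced by the gadget pairing, and this is exactly the step that can fail, since $(a,c)$ is missing. Tracing the pairing of Property~2, the swapped Case (ii) forces $\{x,y\}=\{a,c\}$ in only one sub-configuration of each branch; in all other sub-configurations $\{x,y\}\neq\{a,c\}$, the edge $\connected{x}{y}$ is present, and the swapped argument outputs a cross edge $\connected{y}{q}$ with $q\in S_2$.

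It remains to dispose of the residual configuration in which $\wedge{x}$, for $x\in\{a,c\}$, covers two points $u,v\in S_2$ while the forced partner is the non-adjacent vertex. Here I would argue directly instead of through the swap. If $x\in\wedge{u}$ or $x\in\wedge{v}$ then, since $u,v\in\wedge{x}$, we already have a cross edge $\connected{x}{u}$ or $\connected{x}{v}$. Otherwise the only wedge of $S_2$ covering $x$ is that of the third point $w$, and (as $w\notin\wedge{x}$) the point $x$ has no incident cross edge at all; I would then exploit this tightly pinned orientation, together with the facts that the wedges of $S_2$ cover the plane and that $G_{S_2}$ is a clique, to produce a pair $p\in S_1,\ q\in S_2$ with $p\in\wedge{q}$ and $q\in\wedge{p}$, i.e.\ a cross edge $\connected{p}{q}$. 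This last, edge-$(a,c)$-blocked configuration is where I expect the real work to lie; the rest of the proof is an almost mechanical reuse of Lemma~\ref{lem:two_cliques}.
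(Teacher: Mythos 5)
Your reduction analysis is largely sound as far as it goes. Claim~\ref{lem:three_pts} always puts the edges $(a,b)$ and $(b,c)$ into $G_{S_1}$, so a single cross edge indeed suffices; Cases (i) and (ii) of Lemma~\ref{lem:two_cliques} invoke the clique hypothesis only through $\connected{x}{y}$ for $x,y\in S_2$, so they do transfer verbatim; and you correctly trace the sole use of the $G_{S_1}$-clique to the interchange step of Case (iii). Your bookkeeping of the Property~2 pairing is also right: with ray orientations $300,180,60$ distributed as in Claim~\ref{lem:three_pts}, the pairing $\orientation{\leftray{y}}=\orientation{\thirdray{x}},\ \orientation{\rightray{y}}=\orientation{\leftray{x}}$ sends $a\mapsto b$, $b\mapsto c$, $c\mapsto a$, and the mirror pairing sends $a\mapsto c$, $b\mapsto a$, $c\mapsto b$, so in each branch exactly one choice of $x$ lands on the possibly-missing edge $(a,c)$. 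The problem is what comes after this reduction. The residual configuration --- every wedge of $S_2$ covers exactly one point of $S_1$, while $\wedge{x}$ with $x\in\{a,c\}$ covers exactly two points $u,v\in S_2$, $x\notin\wedge{u}\cup\wedge{v}$, and the forced partner of $x$ is the non-adjacent vertex --- is precisely the new mathematical content of the lemma, and you do not prove it. The sentence ``I would then exploit this tightly pinned orientation \ldots to produce a cross edge,'' together with your own admission that ``this is where I expect the real work to lie,'' is a statement of intent, not an argument. Concretely, in that configuration one is left having to show that when $\wedge{w}$ covers only $x$, $w\notin\wedge{x}$, and each of $\wedge{u},\wedge{v}$ covers exactly one of the two remaining points of $S_1$ without lying in that point's wedge, the clique structure of $S_2$ and the gadget geometry force a contradiction; nothing in the proposal indicates how to carry this out, and it is not a routine verification. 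So the proposal has a genuine gap.

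It is worth noting that the paper avoids your residual case entirely by a different device: it replaces $c$ with the phantom apex $c'$, the intersection point of $\leftray{a}$ and $\leftray{c}$, carrying the orientation of $\wedge{c}$, so that $\{a,b,c'\}$ \emph{is} a clique; Lemma~\ref{lem:two_cliques} then gives $\connected{a,b,c'}{S_2}$, and the entire remaining proof converts a connection to the phantom point $c'$ into a connection to a real point of $S_1$, pinned down by the fact that $\leftray{x}$ must cross $\segment{cc'}$, via three cases on $\rightray{x}$ (each with several sub-cases, some deliberately not using the $G_{S_2}$-clique so they can be recycled in Theorem~\ref{thm:no_cliques}). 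The volume of case analysis the paper spends there is a fair measure of what your residual configuration would demand; your route localizes the difficulty differently, but essentially all of that work remains to be done.
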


\begin{proof}
If the induced graph $G_{S_1}$ is also a clique, then, by Lemma~\ref{lem:two_cliques}, we are done.
Assume therefore that $G_{S_1}$ is not a clique. 
Let $c'$ be the intersection point of $\leftray{a}$ and $\leftray{c}$ (see Figure~\ref{fig:oneclique}), and consider the wedge $\wedge{c'}$ of orientation $\orientation{\wedge{c'}}=\orientation{\wedge{c}}$ and apex $c'$.
The graph induced by $\{a,b,c'\}$ is a clique, and therefore, by Lemma~\ref{lem:two_cliques}, $\connected{a,b,c'}{S_2}$.
If $\connected{a,b}{S_2}$, then we are done, so assume that $\connected{c'}{S_2}$.
Let $x$ be a point of $S_2$ such that $\connected{x}{c'}$, and assume that $\wedge{x}$ does not cover $c$ (if it does, then $\connected{x}{c}$, since $\wedge{c'} \subseteq \wedge{c}$). Then, $x$ lies above $l$ and $\leftray{x}$ intersects $\segment{cc'}$.
Below we consider the three cases: (i) $\rightray{x}$ intersects $\segment{bc'}$, (ii) $\rightray{x}$ intersects $l$ to the left of $b$, and (iii) $\rightray{x}$ does not intersect $l$. However, in the first case (i.e., Case~(i)) and in sub-cases (1) and (2) of the second case (i.e., Case~(ii)(1) and Case~(ii)(2)) we refrain from using the assumption that $G_{S_2}$ is a clique. This is because these cases appear again later in the proof of Theorem\mbox{~\ref{thm:no_cliques}}, where we may not assume that $G_{S_2}$ is a clique.

\begin{figure}[htb]
 \centering 
 \subfigure[Case (i)]{
    \centering
        \includegraphics[width=0.30\textwidth]{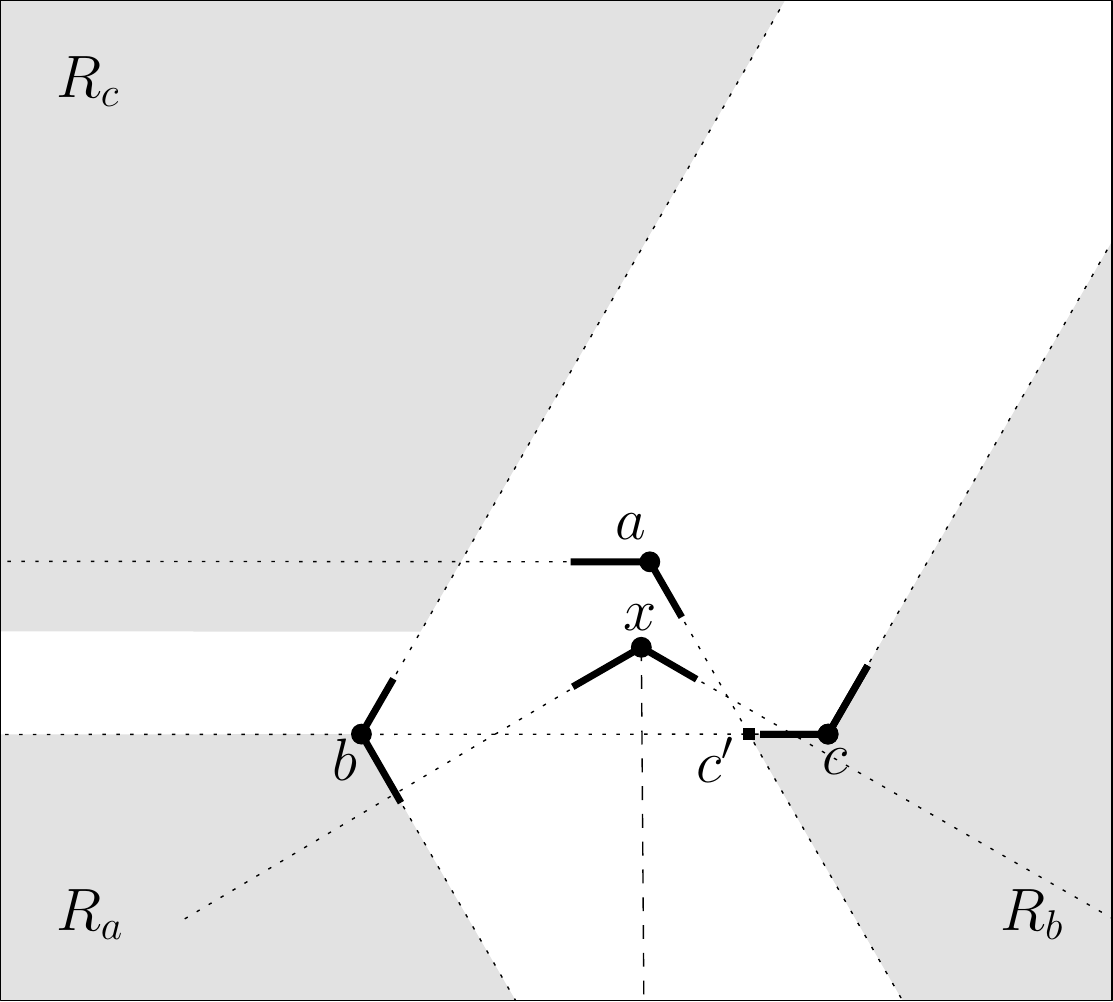}
	     \label{fig:oneclique_case_1_a}}
 \subfigure[Case (ii)]{
    \centering
        \includegraphics[width=0.30\textwidth]{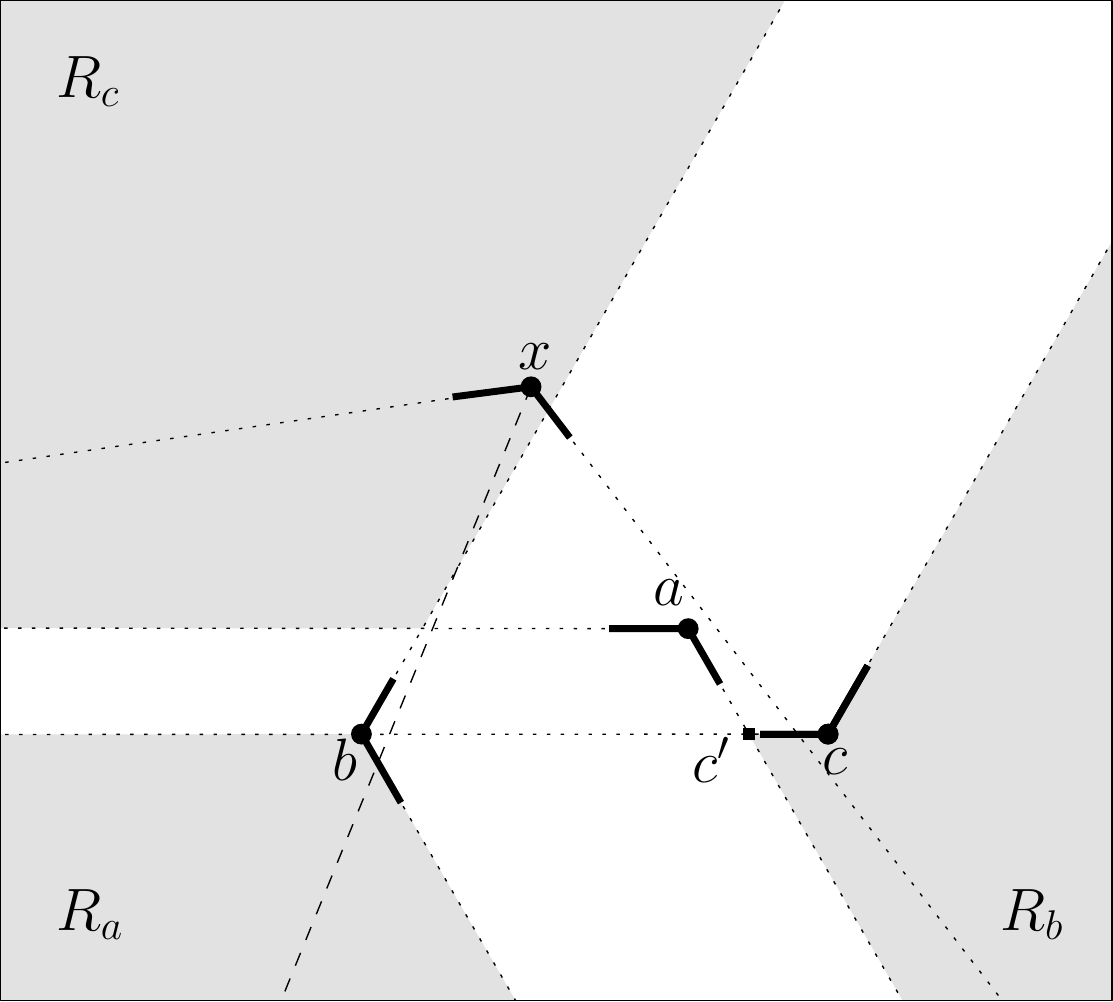}
	     \label{fig:oneclique_case_1_b}}
 \subfigure[Case (iii)]{
    \centering
        \includegraphics[width=0.30\textwidth]{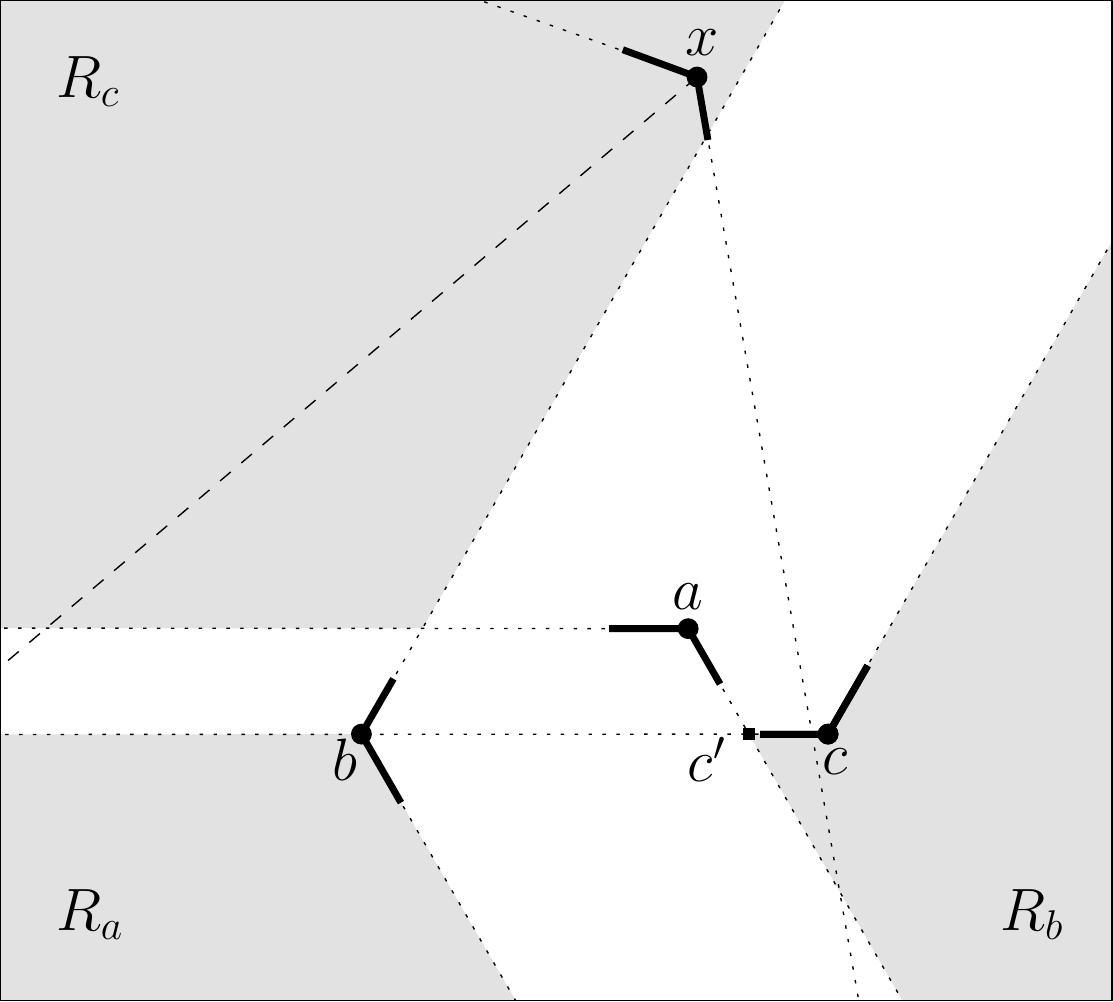}
	     \label{fig:oneclique_case_1_c}}
	\caption{Proof of Lemma~\ref{lem:one_clique}.}	\label{fig:oneclique}
\end{figure}

{\bf Case (i):} $\rightray{x}$ intersects $\segment{bc'}$ (see Figure~\ref{fig:oneclique_case_1_a}). Notice that in this case $\wedge{x}$ does not cover points $b$ and $c$. Since $\orientation{\leftray{x}} < 360$ and $\orientation{\rightray{x}} > 180$, we get that $\orientation{\leftray{x}} \in \rf$, $\orientation{\rightray{x}} \in \rd$, and
$\orientation{\thirdray{x}} \in \rb$.
Between the two points in $S_2 \setminus \{x\}$, let $y$ be the one whose wedge covers more points of $S_1$; in case of tie, let $y$ be any one of them. We know that one of $\wedge{y}$'s rays has orientation in $\rb$.
There are five sub-cases:

(1) $\wedge{y}$ covers all points of $S_1$. There must exist a point in $S_1$ that covers $y$, so we are done. 

(2) $\wedge{y}$ covers $b$ and $c$ and does not cover $a$. If $\connected{y}{b,c}$, then we are done. Otherwise, $y \in R_a$. Now, since $\orientation{\leftray{b}} = 60$ and $\wedge{y}$ must cover $b$ and $c$ and avoid $a$, we get that $\orientation{\leftray{y}} \in \ra$. But, this is impossible, since $\ra$ is not among the three relevant ranges mentioned above. 

(3) $\wedge{y}$ covers $a$ and $b$ and does not cover $c$. If $\connected{y}{a,b}$, then we are done. Otherwise, $y \in R_c$. We show that this is impossible.
If $\orientation{\rightray{y}} \in \rb$, then $\orientation{\leftray{y}} \in \rd$, and $a,b \notin \wedge{y}$. And, if $\orientation{\leftray{y}} \in \rb$, then $\orientation{\rightray{y}} \in \rf$, and $\wedge{y}$ must also cover $c$.

(4) $\wedge{y}$ covers $a$ and $c$ and does not cover $b$. This case is analogous to the previous one.

\old{
If $\connected{y}{a,c}$, then we are done. Otherwise, $y \in R_b$.
We show that this is impossible.
If $\orientation{\leftray{y}} \in \rb$ , then $\orientation{\rightray{y}} \in \rf$, and $a \notin \wedge{y}$. And, if $\orientation{\rightray{y}} \in \rb$, then $\orientation{\leftray{y}} \in \rd$, and $\wedge{y}$ must also cover $b$.
}

(5) $\wedge{y}$ covers exactly one point of $S_1$. Therefore, the wedge of each point in $S_2$ covers exactly one point of $S_1$. Since $\wedge{x}$ does not cover points $b$ and $c$, it must cover $a$.
Assume, w.l.o.g., that $\wedge{y}$ covers $c$ and $\wedge{z}$, the wedge of the remaining point, covers $b$. Next, we show that this is impossible.
Indeed, if $\orientation{\rightray{y}} \in \rb$ and $\orientation{\leftray{y}} \in \rd$, then $\wedge{y}$ must also cover $a$ and $b$. 
And, if $\orientation{\rightray{z}} \in \rb$ and $\orientation{\leftray{z}} \in \rd$, then both $y$ and $z$ must lie below $l$. (Since, if $y$ is above $l$, then $\connected{y}{c}$, and, if $z$ is above $l$, then $\connected{z}{b}$). Therefore, $\wedge{y} \cup \wedge{z}$ covers the halfplane above $l$ (see Property~3), and, in particular, at least one of the two wedges covers $a$.

{\bf Case (ii):} $\rightray{x}$ intersects $l$ to the left of $b$ (see Figure~\ref{fig:oneclique_case_1_b}). In this case, as in Case (i), $\orientation{\leftray{x}} \in \rf$, $\orientation{\rightray{x}} \in \rd$, and
$\orientation{\thirdray{x}} \in \rb$. Notice that in this case $b \in \wedge{x}$, so we assume that
$x \notin \wedge{b}$, since otherwise $\connected{x}{b}$. Let $y$ be a point of $S_2$ whose wedge covers
$c$. We distinguish between three sub-cases:

(1) $\wedge{y}$ covers all points of $S_1$. There must exist a point in $S_1$ that covers $y$, so we are done.

(2) $\wedge{y}$ covers exactly two points of $S_1$. If $\wedge{y}$ covers $b$ and $c$ and $\notconnected{y}{b,c}$, then $y \in R_a$ and either $\orientation{\wedge{y}} \in \ra$ or $\orientation{\wedge{y}} \in \rc$.
However, in both cases, $\wedge{y}$ must also cover $a$ -- contradiction. (Since, in the former case, $\leftray{y}$ does not intersect $\leftray{b}$, and in the latter case, $\rightray{y}$ does not intersect $\leftray{a}$.) If $\wedge{y}$ covers $a$ and $c$ and $\notconnected{y}{a,c}$, then $y \in R_b$ and $\orientation{\wedge{y}} \in \rc$. However, in the case, $\wedge{y}$ must also cover $b$ -- contradiction. (Since $\rightray{y}$ passes above $a$ and is directed upwards, and $\leftray{y}$ passes below $c$ and is directed downward.)

(3) $\wedge{y}$ covers exactly one point of $S_2$, namely, $c$. We know that either $\orientation{\wedge{y}} \in \ra$ or $\orientation{\wedge{y}} \in \rc$. In the latter case, $\wedge{y}$ must also cover $b$, which is impossible. In the former case, if $y$ is above $l$, then $\connected{y}{c}$, so $y$ is necessarily below $l$.
Let $z$ be the remaining point. Then, $\orientation{\wedge{z}} \in \rc$. We show below that $\connected{z}{a,b}$. 
Notice first that $\leftray{y}$ separates between $a$ and $c$ and between $b$ and $c$, since $\orientation{\wedge{y}} \in \ra$ and $\wedge{y}$ covers only $c$. Since $G_{S_2}$ is a clique, we know that $\connected{y}{z}$, and therefore $z$ lies to the right of $\bisector{y}$. Clearly, $a$ and $b$ lie to the left of $\rightray{z}$ (whose orientation is in $\rb$), and to the right of $\leftray{z}$ (whose orientation is in $\rd$). In other words, $\wedge{z}$ covers both $a$ and $b$. Notice also that $z \not \in R_c$, since $\bisector{y}$ (whose orientation is in $\ra$) intersects $l$ to the right of $b$, and $z$ lies to the right of $\bisector{y}$. Therefore, either $\wedge{a}$ or $\wedge{b}$ (or both) covers $z$. We conclude that $\connected{z}{a,b}$.

{\bf Case (iii):} $\rightray{x}$ does not intersect $l$, i.e., $\orientation{\rightray{x}} < 180$ (see Figure~\ref{fig:oneclique_case_1_c}). Since $\wedge{x}$ covers $b$, we may assume that $x \not \in \wedge{b}$. Therefore, $\orientation{\leftray{x}} > 240$. We thus have that $\orientation{\rightray{x}} \in \rc$ and $\orientation{\leftray{x}} \in \re$. Notice that $\bisector{x}$ (whose orientation is in $\rd$)
intersects $l$ to the right of $b$. Moreover, $\wedge{x}$ necessarily covers $a$, since $\orientation{\leftray{x}} \in \re$ and $\leftray{x}$ intersects $l$ between $c'$ and $c$. Let $y$ be the point of $S_2$ such that $\orientation{\wedge{y}} \in \rf$.
Since $G_{S_2}$ is a clique, we know that $\connected{x}{y}$, and therefore $y$ lies to the right of $\bisector{x}$. If $y$ is above $l$, then $\connected{y}{c}$. Otherwise, $y$ is below $l$ and in $\wedge{a}$ (since it is to the left of $b$). But then $\connected{y}{a}$, since $\leftray{y}$ passes above $a$ and $\rightray{y}$ is directed downwards.
\end{proof}

\begin{theorem}\label{thm:no_cliques}
Let $S_1=\{a,b,c\}$ and $S_2$ be two triplets of points in the plane and let $\alpha=2\pi/3$.
Assume that the wedges of $S_1$ and, independently, of $S_2$ are oriented according to the proof of 
Claim~\ref{lem:three_pts}.
Then, the induced graph $G_{S_1\cup S_2}$ is connected.
\end{theorem}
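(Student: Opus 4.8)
The plan is to reduce the general statement to Lemma~\ref{lem:one_clique} by turning one of the two triplets into a clique, and then to transfer the resulting connection back to the original triplet. If $G_{S_1}$ or $G_{S_2}$ is already a clique, then Lemma~\ref{lem:one_clique} applies directly (after possibly interchanging the names of the two sets), so I may assume that neither induced graph is a clique. Writing $S_2=\{a_2,b_2,c_2\}$ with the same angle convention used for $S_1$ (so that the edges $(a_2,b_2)$ and $(b_2,c_2)$ are guaranteed and only $(a_2,c_2)$ may be missing), I would mimic the construction from the proof of Lemma~\ref{lem:one_clique}: replace $c_2$ by the point $c_2'$ lying at the intersection of $\leftray{a_2}$ and $\leftray{c_2}$, keeping the orientation $\orientation{\wedge{c_2'}}=\orientation{\wedge{c_2}}$. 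Then $S_2'=\{a_2,b_2,c_2'\}$ induces a clique and $\wedge{c_2'}\subseteq\wedge{c_2}$. Applying Lemma~\ref{lem:one_clique} to the pair $(S_1,S_2')$, with $S_2'$ playing the role of the clique, shows that $G_{S_1\cup S_2'}$ is connected.

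It then remains to lift a connecting edge of $G_{S_1\cup S_2'}$ to an edge of $G_{S_1\cup S_2}$. Because $\wedge{c_2'}\subseteq\wedge{c_2}$, every point covered by $\wedge{c_2'}$ is also covered by $\wedge{c_2}$. Hence, if the connecting edge uses $a_2$ or $b_2$, it is already an edge of $G_{S_1\cup S_2}$ and we are done; and if it is an edge $\connected{p}{c_2'}$ with $p\in S_1$, then $p\in\wedge{c_2'}\subseteq\wedge{c_2}$, so the edge lifts to $\connected{p}{c_2}$ as soon as $\wedge{p}$ covers $c_2$. The only surviving possibility is an edge $\connected{p}{c_2'}$ with $c_2\notin\wedge{p}$; here $p$ sees the shrunken apex $c_2'$ but just misses $c_2$, so one of the rays of $\wedge{p}$ crosses the segment $\segment{c_2c_2'}$.

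This surviving configuration is precisely the mirror image of the one analyzed inside the proof of Lemma~\ref{lem:one_clique}, with the roles of the two triplets interchanged: $p$ plays the role of $x$, $c_2$ plays the role of $c$, $c_2'$ plays the role of $c'$, and a witnessing point $y$ is now sought inside $S_1$ rather than inside the clique. I would therefore split on where the second ray of $\wedge{p}$ falls relative to $\triangle a_2b_2c_2$, reproducing Case~(i), Case~(ii)(1), and Case~(ii)(2) of Lemma~\ref{lem:one_clique} essentially verbatim, since (as noted there) those arguments never used the clique assumption and rely only on Property~2 and Property~3.

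The main obstacle is the analog of Case~(ii)(3) and Case~(iii): in Lemma~\ref{lem:one_clique} these two branches located the witness $y$ by invoking the clique property of the search set (to assert $\connected{x}{y}$ and thereby place $y$ on the correct side of the bisector), and that property is unavailable here because the search set $S_1$ is not a clique. To close them I would replace the clique-based placement of the witness by a Property~3 halfplane-covering argument of exactly the flavor already used in Case~(i)(5) of Lemma~\ref{lem:one_clique}: applied to a suitable pair of $S_1$-wedges that share a common ray orientation, Property~3 forces their union to cover an entire halfplane, and hence to cover the required vertex of $S_2$, yielding the desired connection (with at most one further lift of the kind above). Checking that this substitution closes every branch without re-introducing an uncovered vertex of $S_2$ is the delicate part of the argument, and is where the bulk of the case bookkeeping lives.
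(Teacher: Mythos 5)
Your high-level reduction is the same as the paper's (up to which triplet you shrink): replace the possibly-missing-edge vertex $c_2$ by $c_2'=\leftray{a_2}\cap\leftray{c_2}$, apply Lemma~\ref{lem:one_clique} to the resulting clique, and then lift the connecting edge back, with the only problematic situation being an edge $\connected{p}{c_2'}$ where a ray of $\wedge{p}$ crosses $\segment{c_2c_2'}$; you also correctly observe that the lemma's Case~(i) and Cases~(ii)(1)--(2) transfer because they never use the clique hypothesis. The gap is in how you close the two remaining branches (the analogues of Case~(ii)(3) and Case~(iii)), which are precisely the crux of the theorem. You propose to replace the clique-based placement of the witness by a Property~3 halfplane-covering argument, but Property~3 cannot do this job: it only shows that the \emph{union} of two wedges of one triplet covers some vertex of the other triplet, i.e., one-directional coverage, whereas an edge requires \emph{mutual} coverage. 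In the paper's Case~(ii)(3), for instance, the whole difficulty is to show that the witness $z$ lies to the right of $\bisector{y}$ (hence $z\notin R_c$ and $z\in\wedge{a}\cup\wedge{b}$), and that positional fact comes from knowing $\connected{y}{z}$ --- not from any covering statement about $\wedge{y}\cup\wedge{z}$.

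What actually closes these cases in the paper is different in kind from what you sketch. First, the paper exploits that the gadget's induced graph, even when not a clique, always contains the path $(a,b),(b,c)$, so at most one pair of $S_2$ is non-adjacent; therefore, proving $\notconnected{x}{y}$ \emph{forces} $\connected{y}{z}$ and $\connected{x}{z}$, restoring the placement information the clique used to provide. Proving that non-adjacency is itself a metric argument (comparing $|bt|$, $|to|$, $|tc|$ via angle bounds in $\triangle xto$ and $\triangle xbt$ to show the bisector of $\wedge{x}$ meets $l$ left of the midpoint of $\segment{bc}$). Second, Case~(iii) requires an even more delicate contradiction: using the gadget construction's angle-ordering to argue that $x$ would lie on $\bisector{y}$, then an equilateral-triangle/perpendicular-bisector argument showing $z$ would be closer to $y$ than to $x$, contradicting the construction of Claim~\ref{lem:three_pts}. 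None of this is replaceable by halfplane coverage, and your own closing sentence concedes that "the bulk of the case bookkeeping" is unchecked --- so the proposal, as it stands, omits exactly the part of the proof that the paper calls "the most difficult case."
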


\begin{proof}
If one (or both) of the induced graphs $G_{S_1}$, $G_{S_2}$ is a clique, then, by Lemma~\ref{lem:one_clique}, we are done.
Assume therefore that none of them is a clique. 
Let $c'$ be the intersection point of $\leftray{a}$ and $\leftray{c}$, and consider the wedge $\wedge{c'}$ of orientation $\orientation{\wedge{c'}}=\orientation{\wedge{c}}$ and apex $c'$.
The graph induced by $\{a,b,c'\}$ is a clique, and therefore, by Lemma~\ref{lem:one_clique}, $\connected{a,b,c'}{S_2}$.
If $\connected{a,b}{S_2}$, then we are done, so assume that $\connected{c'}{S_2}$.
Let $x$ be a point of $S_2$ such that $\connected{x}{c'}$, and assume that $\notconnected{x}{c}$ (otherwise
we are done). Then, $x$ lies above $l$ and $\leftray{x}$ intersects $\segment{cc'}$.
We distinguish between three cases, as in the proof of Lemma~\ref{lem:one_clique}:
(i) $\rightray{x}$ intersects $\segment{bc'}$, (ii) $\rightray{x}$ intersects $l$ to the left of $b$, and (iii) $\rightray{x}$ does not intersect $l$. 
As mentioned in the proof of Lemma~\ref{lem:one_clique}, our arguments there for Case~(i) and Cases~(ii)(1) and (ii)(2) do not use the extra assumption that $G_{S_2}$ is a clique. Therefore, we can reuse them here. It remains to show that $\connected{S_1}{S_2}$ in Cases~(ii)(3) and~(iii).  

{\bf Case~(ii)(3):} $\wedge{y}$ covers exactly one point of $S_2$, namely, $c$. We know that either $\orientation{\wedge{y}} \in \ra$ or $\orientation{\wedge{y}} \in \rc$. In the latter case, $\wedge{y}$ must also cover $b$, which is impossible. In the former case, if $y$ is above $l$, then $\connected{y}{c}$, so $y$ is necessarily below $l$.
Let $z$ be the remaining point. Then, $\orientation{\wedge{z}} \in \rc$. 
At this point, we would like to show, as in the proof of Lemma~\ref{lem:one_clique}, that $\connected{z}{a,b}$. However, we cannot assume now that $\connected{y}{z}$. So, we first prove that $\connected{y}{z}$, by proving that $\notconnected{x}{y}$, and then we proceed as in the proof of Lemma~\ref{lem:one_clique}.


Thus, our goal now is to prove that $\notconnected{x}{y}$. Let $p$ be the midpoint of $\segment{bc}$, and let $a'$ be the projection of $a$ onto $l$. According to the construction in the proof of Claim~\ref{lem:three_pts}, $a'$ lies somewhere between $p$ and $c$ (not including $c$). Let $o$ be the intersection point of $\leftray{x}$ and $l$. We know that $o$ is somewhere between $c'$ and $c$ (not including $c$). Finally, let $t$ be the intersection point of $\bisector{x}$ and $l$ (see Figure~\ref{fig:no_clique}). We show that $t$ lies to the left of $p$ and therefore also to the left of $a'$. If $t$ is to the left of $b$ (or $t=b$), then this is clear. Assume therefore that $t$ is to the right of $b$, and consider the two triangles $\triangle xto$ and $\triangle xbt$. Recall first that $x$ is above $\leftray{b}$ and notice that it is below $\bisector{c}$ (since, if $x$ were above $\bisector{c}$, then $\connected{x}{c}$). Therefore $\angle xbt > 60$ and the projection of $x$ onto $l$ lies to the left of $p$. Now, in $\triangle xto$, $\angle xot \le 60$ and $\angle txo = 60$, and therefore $|xt| \le |to|$. And, in $\triangle xbt$, $\angle bxt < 60$ and $\angle xbt > 60$, and therefore $|bt| < |xt|$. Together, we get that $|bt| < |to| < |tc|$, so $t$ lies to the left of $p$ and therefore to the left of $a'$.

Since the projection of $x$ onto $l$ lies to the left of $p$ and so does $t$, we have that $a$ lies to the left of $\bisector{x}$. Now, if $\connected{x}{y}$, then $y$ must lie to the right of $\bisector{x}$ and therefore cover $a$, which is impossible. We conclude that $\notconnected{x}{y}$, and therefore $\connected{y}{z}$ (and $\connected{x}{z}$). 

From this point, we continue as in the proof of Lemma~\ref{lem:one_clique}.  
Notice that $\leftray{y}$ separates between $a$ and $c$ and between $b$ and $c$, since $\orientation{\wedge{y}} \in \ra$ and $\wedge{y}$ covers only $c$. Since $\connected{y}{z}$, we know that $z$ lies to the right of $\bisector{y}$. Clearly, $a$ and $b$ lie to the left of $\rightray{z}$ (whose orientation is in $\rb$), and to the right of $\leftray{z}$ (whose orientation is in $\rd$). In other words, $\wedge{z}$ covers both $a$ and $b$. Notice also that $z \not \in R_c$, since $\bisector{y}$ (whose orientation is in $\ra$) intersects $l$ to the right of $b$, and $z$ lies to the right of $\bisector{y}$. Therefore, either $\wedge{a}$ or $\wedge{b}$ (or both) covers $z$. We conclude that $\connected{z}{a,b}$.

{\bf Case~(iii):} $\rightray{x}$ does not intersect $l$, implying that $\orientation{\rightray{x}} < 180$.
Notice that in this case $b \in \wedge{x}$, so we assume that
$x \notin \wedge{b}$, implying that $\orientation{\leftray{x}} > 240$. 
It follows that $\orientation{\wedge{x}} \in \rd$, $\orientation{\leftray{x}} \in \re$, $\orientation{\rightray{x}} \in \rc$, and $\orientation{\thirdray{x}} \in \ra$. 
Notice also that $\bisector{x}$, whose orientation is in $\rd$, intersects $l$ to the left of $b$.

Let $y$ be the point of $S_2$ such that $\orientation{\leftray{y}} \in \rc$ and $\orientation{\rightray{y}} \in \ra$, and let $z$ be the point of $S_2$ such that $\orientation{\leftray{z}} \in \ra$ and $\orientation{\rightray{z}} \in \re$. 
Notice that for $\leftray{x}$ to intersect $l$ to the right of $c'$, $x$ must lie above $l(\leftray{a})$, and, therefore, $\wedge{x}$ covers $a$.

We first show that if $\connected{x}{z}$, then $\connected{S_1}{S_2}$.
Indeed, if $\connected{x}{z}$, then $z$ must lie to the right of $\bisector{x}$. 
If $z$ is above $l$, then $\connected{z}{c}$. Assume, therefore, that $z$ is below $l$. Notice that $\leftray{z}$ intersects $\leftray{b}$ at a point above $x$, implying that $\leftray{z}$ passes above $a$. Moreover, $\rightray{z}$ passes below $a$, since it is directed downwards. It follows that $\wedge{z}$ covers $a$. But, $z \in \wedge{a}$, since $z$ lies to the right of $\bisector{x}$, which intersects $l$ to the left of $b$. We conclude that $\connected{z}{a}$.

Next, we address the most difficult case, in which $\notconnected{x}{z}$.
If $\notconnected{x}{z}$, then necessarily $y$ is connected to both $x$ and $z$. 
Notice that $z$ must lie below $\leftray{x}$. Also, if it is above $l$, then $\connected{z}{c}$. Assume, therefore, that $z$ is below $l$.
Since $\wedge{y}$'s rays are directed upwards and $\connected{y}{z}$, we know that $y$ is below $z$ and therefore also below $l$. 
According to the construction in the proof of Claim~\ref{lem:three_pts}, either $x$ or $z$ lies on $\bisector{y}$, and the angle at this point in $\triangle xyz$ does not exceed the angle at the other point. It follows that the point that lies on $\bisector{y}$ is necessarily $x$. Since, if it were $z$, then $\angle{yzx} \ge 120$, as it contains $\wedge{z}$.


\begin{figure}[htp]
\centering

  \subfigure[]{
   \centering
       \includegraphics[width=0.31\textwidth]{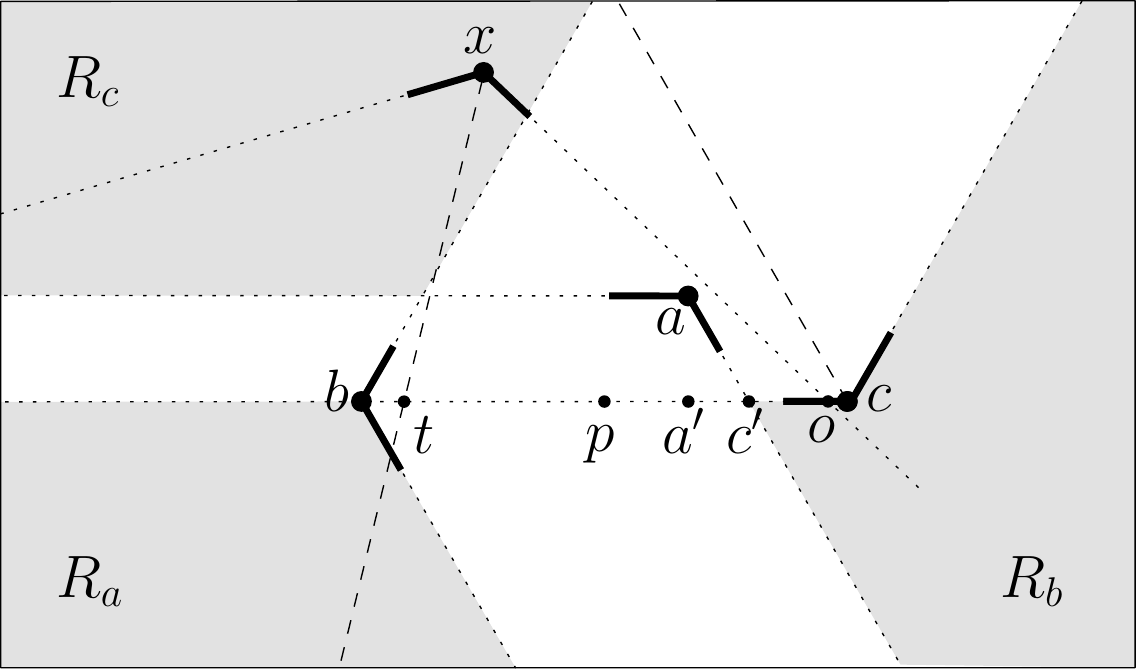}   \label{fig:no_clique}
	}
  \subfigure[]{
   \centering
       \includegraphics[width=0.31\textwidth]{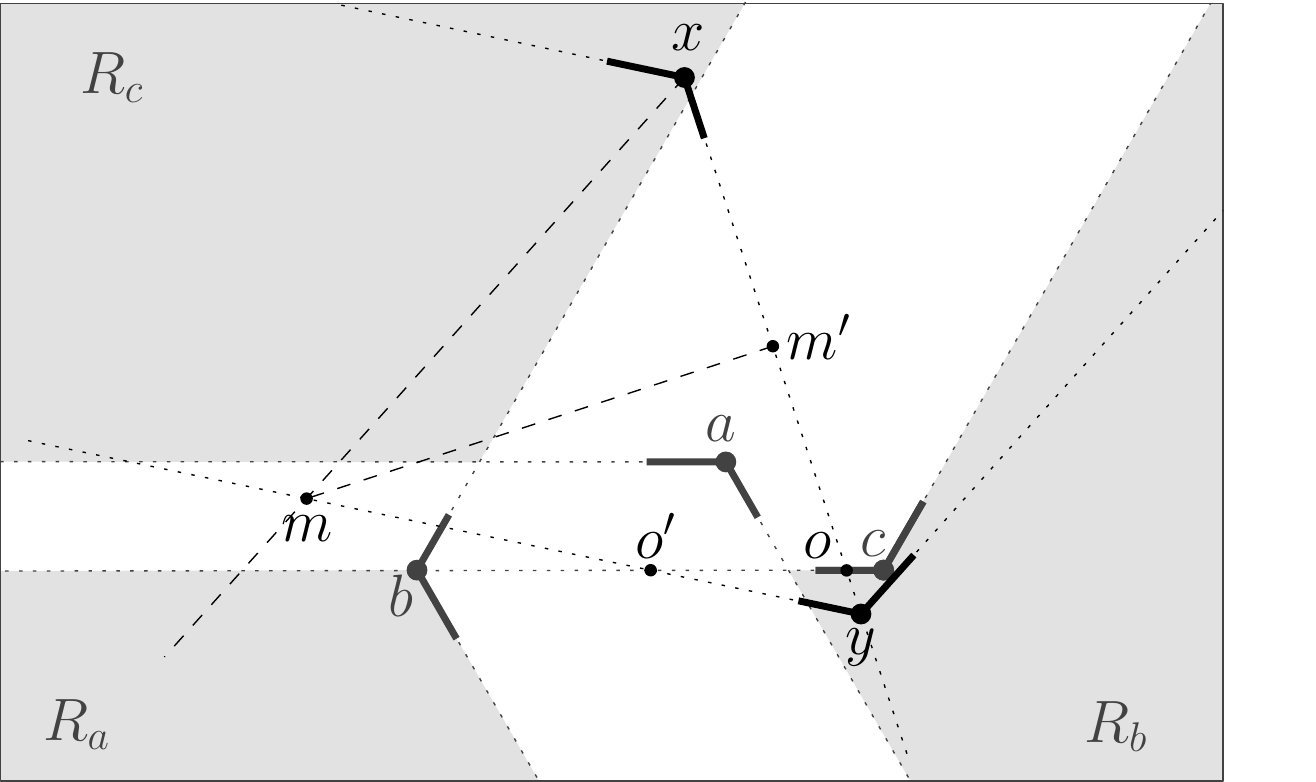}  \label{fig:case3abc}
	}
	\subfigure[]{
		 \centering
     \includegraphics[width=0.31\textwidth]{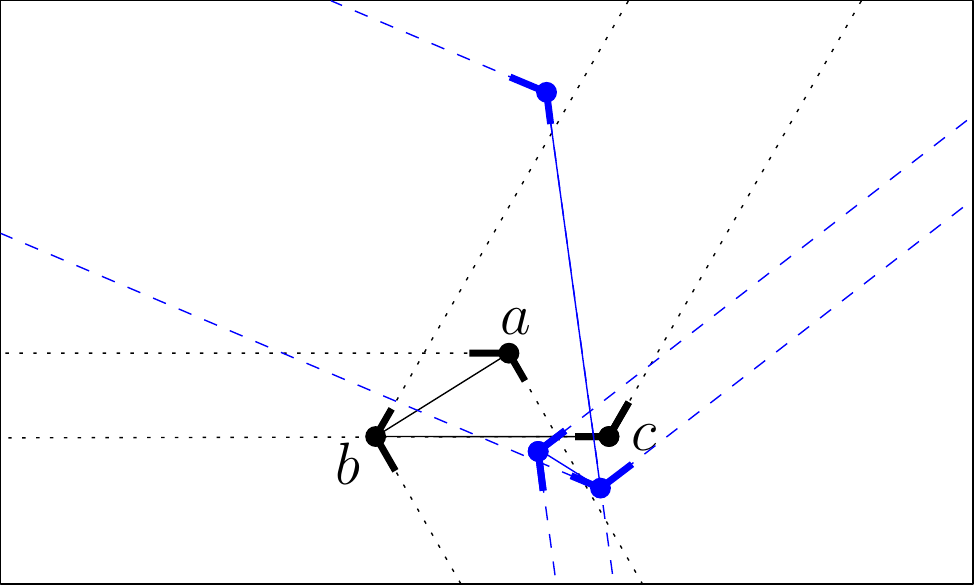}  \label{fig:no_edge}
   }
   \caption{(a) $t$ lies to the left of $a'$. (b) $x$ lies on $\bisector{y}$. (c)
	Each of the triplets induces a connected graph and covers the plane, but the graph of their union is not connected.}
\end{figure}

The case where $x$ lies on $\bisector{y}$ is also impossible, as we show below (see Figure~\ref{fig:case3abc}).
If $y \notin \wedge{b}$, then $\connected{y}{a}$, since $\leftray{y}$ is below $l(\rightray{b})$ and $\rightray{y}$ is above $l(\leftray{x})$. Assume, therefore, that $y \in \wedge{b}$ but $\notconnected{y}{b}$.
Let $m$ be the intersection point of $\leftray{y}$ and $\bisector{x}$. Then, $m$ is above $l$ (since otherwise $\connected{y}{b}$). 
Notice that $\triangle{xmy}$ is equilateral, and consider the bisector of $\angle xmy$.
Let $m'$ be the intersection point of this bisector and side $\segment{xy}$. Then, $\segment{mm'}$ is the perpendicular bisector of $\segment{xy}$.

Next, we show that $m'$ lies above $l$.
Let $o$ be the intersection point of $\segment{xy}$ and $l$, and let $o'$ the intersection point of
$\segment{my}$ and $l$.
We show that $|\segment{yo}| < |\segment{xo}|$, implying that $m'$ is somewhere between $o$ and $x$ and thus above $l$.
Consider $\triangle yoo'$. Since $\orientation{\leftray{y}} \in \rc$, we know that $\angle{yo'o}<60$. But $\angle{oyo'}=60$, so we get that $|\segment{oy}| < |\segment{oo'}|$. Now, consider $\triangle xbo$.
$\angle{xbo}>60$ and $\angle{bxo}<60$, and therefore $|\segment{ox}| > |\segment{ob}|$.
It follows that
$|\segment{oy}| < |\segment{oo'}| < |\segment{ob}| < |\segment{ox}|$.

Since all its corners lie above $l$, $\triangle mm'x$ is above $l$. Since $\connected{y}{z}$ and $z$ is below $l$, we have that 
$z \in \triangle yo'o \subseteq \triangle ymm'$, and therefore $z$ is closer to $y$ than to $x$ -- contradiction the construction of Claim~\ref{lem:three_pts}. 
\end{proof}



\paragraph{Remark.} 
Theorem~\ref{thm:no_cliques} above proves that when the wedges of each of the triplets are oriented, independently, according to the construction of Claim~\ref{lem:three_pts}, then there is always an edge between the two triplets. This is not necessarily true for other constructions with similar properties. For example, the wedges of each of the triplets in Figure~\ref{fig:no_edge} form a connected graph and cover the plane, but there is no edge between the triplets. 

\section{Approximating the $\boldsymbol{\alpha}$-MST} \label{sec:tsp_apx}
Let $P$ be a set of $n$ points in the plane.
In this section we consider the problem of computing an $\alpha$-MST of $P$, for $\alpha = \pi, 2\pi/3, \pi/2$. For each of these angles, we devise a constant-factor approximation algorithm. The approximation ratios are actually with respect to the weight of a Euclidean MST, which is a lower bound for the weight of an $\alpha$-MST, for any $\alpha$.
 
Consider the TSP tour $\Pi=e_0,e_1,\ldots,e_{n-1}$ obtained by applying the standard 2-approximation algorithm for metric TSP. This algorithm first duplicates the edges of a MST to obtain an Eulerian tour, and then transforms the Eulerian tour into a TSP tour by introducing shortcuts. Thus, $wt(\Pi) \le 2wt(\mbox{MST})$.  
Each of our approximation algorithms below begins by constructing $\Pi$. It then constructs, using $\Pi$, a connected $\alpha$-graph, i.e., a graph in which, for each node $p$, the angle spanned by the edges adjacent to $p$ is at most $\alpha$. Finally, it construct an $\alpha$-ST from the $\alpha$-graph, whose weight is bounded by $c \cdot wt(\Pi)$, for some constant $c=c(\alpha)$, and thus is a $2c$-approximation of an $\alpha$-MST.

\paragraph{$\boldsymbol{\alpha=\pi}$.}
Observe that any graph of maximum degree two is a $\pi$-graph. In particular, $\Pi$ is a $\pi$-graph, and, by removing an arbitrary edge, we obtain a $\pi$-ST of weight at most $2 wt(\pi\!\,\mbox{-MST})$.

\paragraph{$\boldsymbol{\alpha=2\pi/3}$.}
Assume, for convenience, that $n = 3m$, for some integer $m$.
We partition $P$ into $m$ triplets, by traversing $\Pi$ from an arbitrary point $p \in P$.
That is, each of the triplets consists of three consecutive points along $\Pi$.
Orient the wedges of each triplet, independently, according to Claim~\ref{lem:three_pts}.
By Theorem~\ref{thm:no_cliques}, the graph induced by $P$, denoted here $G_{\alpha}$ (instead of $G_P$), is connected. In particular, for any two consecutive triplets $t, t'$ along $\Pi$, there exists an edge of the graph between a point of $t$ and a point of $t'$.

Next, we construct a $2\pi/3$-ST, $T$, and show that $wt(T) \le 6 \cdot wt(2\pi/3\!\,\mbox{-MST})$.
Initially, $T$ has no edges. For each of the $m$ triplets $t$, add to $T$ any two edges (of the at least two edges) of $G_{\alpha}$ connecting between pairs of points of $t$. We call these edges {\em inner-edges}. Next, for each of the $m$ pairs of consecutive triplets $t, t'$ along $\Pi$ (except for the pair consisting of the `last' triplet and the `first' triplet), add to $T$ any edge (of the at least one edge) of $G_{\alpha}$ connecting between a point of $t$ and a point of $t'$. We call these edges {\em connecting-edges}.
$T$ is connected and has $2n/3$ inner-edges and $n/3 - 1$ connecting-edges, thus the total number of edges is $n-1$, and $T$ is a $2\pi/3$-ST. 

We now bound the weight of $T$. By the triangle inequality, the weight of an edge $(u,v)$ of $T$ does not exceed the weight of the shorter path (in terms of number of edges) in $\Pi$ between $u$ and $v$. We charge the weight of this path for the edge $(u,v)$.
Each edge of $\Pi$ between two points of the same triplet $t$ is charged at most four times. Twice for the two inner-edges chosen for $t$, and twice for the two connecting-edges that connect $t$ to its two adjacent triplets along $\Pi$.
Each edge of $\Pi$ between two consecutive triplets $t, t'$ (except for the edge between the last and first) is charged only once for the corresponding connecting-edge of $T$. 
Thus, each edge of $\Pi$ is charged at most four times, and $wt(T) = \Sigma_{e \in T} |e| \le
4 \Sigma_{e \in \Pi}{|e|}=4 \cdot wt(\Pi) \le 8 \cdot wt(\mbox{MST}) \le 8 \cdot wt(2\pi/3\!\,\mbox{-MST})$.

Next, we improve the approximation ratio. Observe, that there are three possible ways to partition $\Pi$ into $m$ triplets.
In other words, the set of edges of $\Pi$ connecting between the triplets can be either $E_0, E_1$, or $E_2$, where
$E_j = \{ e_i \in E : i = (j \mod 3)\}$, for $0 \le j \le 2$. By the pigeon hole principle, the weight of one of these sets, say $E_2$, is at least $\frac{1}{3} \cdot wt(\Pi)$. We partition $\Pi$ into triplets, such that the set of edges connecting between the triples is $E_2$. Now, each of the edges of $E_2$ (except $e_{n-1}$) is charged exactly once, and each of the edges of $E_0 \cup E_1$ is charged at most four times. Thus, 
$wt(T) \leq wt(E_2) + 4(wt(E_0)+wt(E_1)) = wt(\Pi) + 3(wt(E_0)+wt(E_1)) \leq wt(\Pi) + 3 \cdot \frac{2}{3} wt(\Pi) = 3 \cdot wt(\Pi) \le 6 \cdot wt(\mbox{MST}) \le 6 \cdot wt(2\pi/3\!\,\mbox{-MST})$.

\paragraph{$\boldsymbol{\alpha=\pi/2}$.}
Assume, for convenience, that $n = 8m$, for some integer $m$.
Our construction for $\alpha=\pi/2$ is similar to the one for $\alpha=2\pi/3$, but slightly more complicated. It is based on a basic gadget described by Aschner et al.~\cite{AKM13} for a set $S$ of four points, indicating the locations of four $\pi/2$-wedges. This gadget is presented as the proof for the claim that one can orient the wedges of $S$, such that the induced graph is connected, and the wedges of $S$ cover the plane.
Unfortunately, we cannot claim that two quadruplets, whose wedges are oriented independently, are connected. However, if they are separable by a line, then they are connected, see~\cite{AKM13}. 

We use this latter claim in our construction.
We partition the tour $\Pi$ into $m$ sections, each consisting of 8 consecutive points along $\Pi$. Then, we partition each of the sections into two quadruplets, a left quadruplet consisting of the 4 leftmost points of the section and a right quadruplet consisting of the 4 rightmost points. (Notice that the points of a quadruplet are not necessarily consecutive along $\Pi$.) Thus, in each section, the two quadruplets are separable by a (vertical) line. Now, orient the wedges of each quadruplet, independently, such that their induced graph is connected and the wedges cover the plane. Let $G_{\alpha}$ be the graph induced by $P$. Observe that $G_{\alpha}$ is connected, since, for any two consecutive sections, there exists two quadruplets, one from each section, that are separable by a (vertical) line and thus connected.

Next, we construct the tree $T$ from $G_{\alpha}$. We distinguish between three types of edges. The first type are the {\em inner-edges}, which connect between points of the same quadruplet. For each quadruplet, we pick three such edges that make the quadruplet connected. The second type are the {\em q-connecting-edges}, which connect between quadruplets of the same section. For each section, we pick one such edge. The third type are the {\em s-connecting-edges}, which connect between consecutive sections along $\Pi$. For each pair of consecutive sections along $\Pi$ (except for the pair consisting of the last and first sections), we pick one such edge. Notice that $T$ is a $\pi/2$-ST, since it is connected and it has $n-1$ edges, i.e., $3n/4$ inner-edges, $n/8$ q-connecting-edges, and $n/8 - 1$ s-connecting-edges.

We compute the approximation ratio by charging the edges of $\Pi$. Each edge of $\Pi$ either connects between points of the same section, or between points of consecutive sections. An edge of the former kind is charged at most nine times. Since for a section, we have six inner-edges, one q-connecting-edge, and two s-connecting-edges.
An edge of the latter kind is charged only once. 

As for $\alpha=2\pi/3$, we can choose the subset of edges of $\Pi$ that connect between consecutive sections, so that its weight is at least $\frac{1}{8} \cdot wt(\Pi)$. Let $E_7$ denote this subset. Then, 
$wt(T) \leq wt(E_7) + 9\cdot wt(E \setminus E_7) \leq wt(\Pi) + 8\cdot wt(E \setminus E_7) \leq wt(\Pi) + 8 \cdot \frac{7}{8} wt(\Pi) = 8 \cdot wt(\Pi) \leq 16\cdot wt(\mbox{MST}) \le 16 \cdot wt(\pi/2\!\,\mbox{-MST})$.
 
The following theorem summarizes the results of this section.
\begin{theorem}
Let $P$ be a set of points in the plane. Then, one can construct (i) a $\pi$-ST
of weight at most $2 \cdot wt(\pi\!\,\mbox{-MST})$, (ii) a $2\pi/3$-ST of weight at most $6 \cdot wt(2\pi/3\!\,\mbox{-MST})$, and
(iii) a $\pi/2$-ST of weight at most $16 \cdot wt(\pi/2\!\,\mbox{-MST})$.
\end{theorem}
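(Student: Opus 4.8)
The plan is to prove the three bounds in a unified way, following the ``TSP tour partitioning'' template already laid out in Section~\ref{sec:tsp_apx}. First I would invoke the standard $2$-approximation for metric TSP to build a Hamiltonian tour $\Pi=e_0,\ldots,e_{n-1}$ with $wt(\Pi)\le 2\,wt(\mbox{MST})$, and recall that a Euclidean MST is a lower bound on any $\alpha$-MST (since an $\alpha$-ST is in particular a spanning tree). It then suffices to construct, for each target angle, a spanning tree $T$ that is an $\alpha$-ST and satisfies $wt(T)\le c\cdot wt(\Pi)$ with $c=1,3,8$ respectively; combined with $wt(\Pi)\le 2\,wt(\mbox{MST})\le 2\,wt(\alpha\mbox{-MST})$ this yields the ratios $2,6,16$.

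For part (i), $\alpha=\pi$, the argument is immediate: any graph of maximum degree $2$ is a $\pi$-graph, so deleting one edge of $\Pi$ gives a $\pi$-ST of weight at most $wt(\Pi)\le 2\,wt(\pi\mbox{-MST})$. For parts (ii) and (iii) the key steps are: (a) partition the points of $\Pi$ into groups (triplets for $2\pi/3$, sections of eight split into two line-separable quadruplets for $\pi/2$); (b) orient the wedges within each group using the basic gadget so that each group's induced graph is connected and its wedges cover the plane; (c) use Theorem~\ref{thm:no_cliques} (for $2\pi/3$) or the line-separability connectivity result of~\cite{AKM13} (for $\pi/2$) to guarantee that consecutive groups are connected in $G_\alpha$; and (d) select inner-edges and connecting-edges to assemble a spanning tree $T$ with exactly $n-1$ edges, which is therefore an $\alpha$-ST.

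The crux is the weight accounting in step (d), carried out via a charging scheme: by the triangle inequality each chosen edge $(u,v)$ of $T$ costs at most the length of the shorter $\Pi$-subpath between $u$ and $v$, and one charges that cost to the edges of $\Pi$ along the subpath. Bounding the maximum number of times any single edge of $\Pi$ is charged gives the constant $c$. For $2\pi/3$ a naive count gives each edge charged at most four times (yielding $c=4$ and ratio $8$), which I would then sharpen: there are three rotationally-shifted ways to partition $\Pi$ into triplets, so by the pigeonhole principle one of the three sets of inter-triplet edges carries at least $\tfrac13 wt(\Pi)$; choosing that partition makes those edges charged only once while the rest are charged at most four times, giving $wt(T)\le wt(\Pi)+3\cdot\tfrac23 wt(\Pi)=3\,wt(\Pi)$. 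The analogous pigeonhole improvement for $\pi/2$ uses the eight cyclic shifts and the per-section charge of nine (six inner, one q-connecting, two s-connecting), yielding $wt(T)\le 8\,wt(\Pi)$.

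The main obstacle is really external to the weight bookkeeping itself: the whole $2\pi/3$ construction hinges on Theorem~\ref{thm:no_cliques} to ensure that consecutive triplets, whose wedges are oriented \emph{independently}, are always connected, and on its $\pi/2$ analogue from~\cite{AKM13} which only holds under line-separability (hence the more elaborate section-into-separable-quadruplets split). Granting those connectivity results, the remaining difficulty is purely the careful charging argument, making sure the claimed per-edge charge counts are tight and that the pigeonhole step is applied to the correct family of inter-group edges.
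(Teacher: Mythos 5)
Your proposal is correct and follows essentially the same route as the paper: the same TSP-tour construction with $wt(\Pi)\le 2\,wt(\mbox{MST})$, the same group partitioning (edge removal for $\pi$, triplets with Theorem~\ref{thm:no_cliques} for $2\pi/3$, eight-point sections split into line-separable quadruplets via~\cite{AKM13} for $\pi/2$), and the same charging scheme with the pigeonhole refinement over cyclic shifts, yielding the identical intermediate bounds $wt(T)\le wt(\Pi)$, $3\,wt(\Pi)$, and $8\,wt(\Pi)$. Nothing essential is missing.
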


\paragraph{Remark.} As mentioned, the approximation ratios above are with respect to $wt(\mbox{MST})$, which is a lower bound for $wt(\alpha\mbox{-MST})$. It is possible that by comparing the weight of the constructed $\alpha$-ST with that of an $\alpha$-MST, one can get better ratios, but it is not clear how to do so.
Moreover, it is easy to see that, for $\alpha \in [60,180)$, 2 is a lower bound on the ratio with respect to a MST, e.g., consider $n$ points on a line. And, for $\alpha \in [180,240)$, $\frac{2+\sqrt{3}}{3}\approx1.244$ is a lower bound on the ratio, e.g., consider 3 points at the corners of an equilateral triangle and a fourth point at the center of the circle passing through them. 
Finally, for $\alpha \in [60,90)$, it is easy to give an example where $wt(\alpha\mbox{-MST})/wt(\mbox{MST}) \rightarrow n-1$. Therefore, any construction algorithm for an angle $\alpha$ in this range, should be analyzed with respect to $wt(\alpha\mbox{-MST})$.   
\old{
\paragraph{Remark.} As mentioned, the approximation ratios above are with respect to the weight of a MST, which is a lower bound for the weight of an $\alpha$-MST. It is possible that by comparing the weight of the constructed $\alpha$-ST with that of an $\alpha$-MST, one can get better ratios, but it is not clear how to do so.
Moreover, it is easy to see that, for $\alpha \in [60,180)$, 2 is a lower bound on the ratio with respect to a MST, e.g., consider $n$ points on a line. And, for $\alpha \in [180,240)$, $\frac{2+\sqrt{3}}{3}\approx1.244$ is a lower bound on the ratio, e.g., consider 3 points at the corners of an equilateral triangle and a fourth point at the center of the circle passing through them. 
}

\section{Constant range hop-spanner for $\boldsymbol{\alpha=2\pi/3}$} \label{sec:boundedrange}
In this section we apply Theorem~\ref{thm:no_cliques} to obtain a solution to a problem that arises in wireless communication networks.
Let $P$ be a set of $n$ points in the plane, where each point in $P$ represents a transceiver equipped with an omni-directional antenna. The coverage region of $p$'s antenna is modeled by a disk centered at $p$, and assume that all disks are of radius 1. Then, the resulting communication graph is the {\em unit disk graph} of $P$, denoted $\UDG$. (I.e, there is an edge between points $p$ and $q$ if the distance between them is at most 1.)
As mentioned in the introduction, directional antennas have some advantages over omni-directional antennas and are gaining popularity. The coverage region of a directional antenna of angle $\alpha$ is modeled by a circular sector of angle $\alpha$. 

Assume that $\UDG$ is connected.
Before stating our problem, we need the following definition. 
A graph $G=(P,E)$ is a {\em $c$-hop-spanner} of $\UDG$, for some constant $c$, if for any two points $p,q \in P$, the minimum number of hops between $p$ and $q$ in $G$ is at most $c$ times this number in $\UDG$. That is, for each edge $e=(p,q)$ in $\UDG$, there exists a path in $G$ between $p$ and $q$ consisting of at most $c$ edges. Assume now that one replaces each of the omni-directional antennas by a directional antenna of angle $2\pi/3$. We address the following {\em Antenna Conversion} problem:
Orient the directional antennas and fix a range $\delta = O(1)$, such that the resulting (symmetric) communication graph is a $c$-hop-spanner of $\UDG$, for some constant $c$. I.e., construct a $2\pi/3$-graph, such that the length of its edges is bounded by $\delta$ and it is a $c$-hop-spanner of~$\UDG$.

We show how to construct such a graph with $\delta=7$ and $c=6$, in $O(n \log n)$ time.
We first partition the points of $P$ into connected components (of $\UDG$) of size at most three. This is done greedily. Set $Q=P$. As long as $Q \neq \emptyset$, perform the following step, which finds the next component $C$. Pick any point $a \in Q$, add it to $C$ (which is initially empty), and remove it from $Q$. Now, if $Q \neq \emptyset$ and there exists a point in $Q$ whose distance from $a$ is at most 1, then pick any such point $b \in Q$, add it to $C$, and remove it from $Q$. Finally, if $Q \neq \emptyset$ and there exists a point in $Q$ whose distance to either $a$ or $b$ (or both) is at most 1, then pick any such point $c \in Q$, add it to $C$, and remove it from $Q$. 

\begin{claim}
\label{clm:comp}
Let $C$ be a connected component of size one or two. Then, each of the neighbors of $C$ in $\UDG$ belongs to a component of size three.
\end{claim}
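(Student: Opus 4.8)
The plan is to argue by contradiction about the greedy construction that produced $C$. Suppose $C$ is a component of size one or two, and suppose, for contradiction, that some neighbor $d$ of $C$ in $\UDG$ lies in a component $C'$ of size one or two as well. I will examine which of $C$ and $C'$ was created first by the greedy procedure and derive a violation of the greedy step's behavior.

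First I would recall exactly how the greedy process builds a component: it picks a seed point, then (if possible) adds a second point within distance $1$ of the seed, then (if possible) adds a third point within distance $1$ of either of the first two, and only then closes the component. The key observation is that a component is finalized at size less than three only because, at the relevant moment, no eligible point remained in $Q$ to be added. So if $C$ has size one or two, then at the time $C$ was closed, every point still in $Q$ was too far (distance $>1$) from every point already in $C$.

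Now consider the neighbor $d$ of $C$: by definition of ``neighbor in $\UDG$,'' $d$ is within distance $1$ of some point of $C$, and $d \notin C$. The main case analysis is on the order of creation. If $C$ was created before $d$'s component, then $d$ was still in $Q$ when $C$ was closed; since $d$ is within distance $1$ of a point of $C$ and $C$ had size less than three, the greedy step would have added $d$ to $C$, contradicting that $C$ was closed at that size. Hence $d$'s component $C'$ must have been created before $C$. But then, symmetrically, at the moment $C'$ was being built, the seed of $C$ (which is within distance $1$ of $d \in C'$ through the same $\UDG$-edge, or reachable via that edge) was available in $Q$; if $C'$ had size less than three it would have absorbed that point. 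This forces $C'$ to already have size three, contradicting the assumption that $d$ lies in a component of size one or two.

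The step I expect to require the most care is the ``symmetric'' direction: when $C'$ is created first, I must track precisely which point of $C$ is adjacent to $d$ and confirm it was genuinely in $Q$ and eligible (distance $\le 1$ from a point of $C'$) at the moment $C'$ could still have grown, so that the greedy rule truly would have enlarged $C'$ to size three. This is where one must use the transitivity of the distance bound through the shared $\UDG$-edge and the fact that the greedy step tests distance to \emph{any} point already placed in the current component, not only the seed. Once that eligibility is pinned down, the contradiction is immediate, establishing that every neighbor of a size-one-or-two component must itself lie in a component of size three.
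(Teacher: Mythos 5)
Your proposal is correct and follows essentially the same route as the paper's proof: both argue by contradiction from the greedy invariant that a component closed at size less than three had, at that moment, no remaining point of $Q$ within distance $1$ of any of its points, while the endpoint of the offending $\UDG$-edge in the later-formed component was still in $Q$. The paper merely compresses your two creation-order cases into one by symmetry (``assume, e.g., that $C$ was found before $C'$''), which is your Case~1; your Case~2 is its mirror image once ``the seed of $C$'' is replaced by the point of $C$ incident to the edge, a correction you yourself supply in your final paragraph.
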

\begin{proof}
Assume that one of the neighbors of $C$ belongs to a component $C'$ of size one or two, i.e., there exists an edge of $\UDG$ between a point in $C$ and a point in $C'$. Moreover, assume, e.g., that $C$ was found before $C'$. Then, in the iteration in which $C$ was found, we would have found a larger component, i.e., with at least one additional point.
\end{proof}

Now, consider the connected components that were found. We first orient the wedges of each connected component of size exactly three, independently, according to the proof of Claim~\ref{lem:three_pts}. Next, for each connected component $C$ of size one or two, let $C'$ be any connected component of size exactly three, such that $C$ has a neighbor in $C'$. Recall that the wedges of $C'$ cover the plane. We orient each of the wedges of $C$ (alternatively, the single wedge of $C$) towards the wedge of $C'$ that covers it. 
Observe that if the length of the edges is not limited, then the $2\pi/3$-graph, $G_\alpha$, that is induced by the wedges of $P$ is connected. Moreover, it is easy to verify that $G_\alpha$ is a $c$-hop-spanner, for $c=5$. However, our goal is to limit the length of the edges without increasing $c$ by much. 

Let $C$ be a component of size one or two.
Then, the edge of $G_\alpha$ connecting between $C$ and $C'$, where $C'$ is the component of size three to which $C$ was connected, is of length at most $4$. Moreover, consider any two components of size three $C'$ and $C''$, such that $C$ has a neighbor both in $C'$ and in $C''$. Then, the edge of $G_\alpha$ connecting between $C'$ and $C''$ is of length at most 7. Finally, the edge of $G_\alpha$ connecting between two neighboring components of size three is of length at most 5. 
Therefore, one can drop all edges of length greater than 7 from $G_\alpha$, without disconnecting it. 

Next, we show that the resulting graph $G$ is a $6$-hop spanner.
Let $(p,q)$ be an edge of $\UDG$. We show that the number of hops between $p$ and $q$ in $G$ is at most $6$. 
If $p,q$ are in the same component of size three, then, clearly, the path between them consists of at most $2$ edges. 
If $p,q$ are in the same component of size two, then, the path between them passes through a single component of size three, and thus consists of at most $4$ edges. 
If $p,q$ are in different components, then, by Claim~\ref{clm:comp}, at least one of them, say $p$, is in a component of size three. If $q$ is also in a component of size three, then the path from $p$ to $q$ consists of at most $5$ edges. Otherwise, the path between them goes from $q$ to some component $C'$ of size three (which is not necessarily $p$'s component) and from there to $p$'s component, and thus consists of at most 6 edges.

\old{
The following claim shows that the final graph $G$ is a $c$-hop-spanner, for $c=6$. 
\begin{claim}\label{clm:hop-spanner}
$G$ is a $6$-hop spanner.
\end{claim}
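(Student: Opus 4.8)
The plan is to verify the hop-spanner property directly from the definition: since the definition only asks for a bounded detour per edge of $\UDG$, it suffices to show that for every edge $(p,q)$ of $\UDG$ there is a path of at most $6$ edges between $p$ and $q$ in $G$. I would first record three ``building-block'' routing facts and then run a case analysis on the components containing $p$ and $q$.

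The building blocks are: (1) within a single size-three component, the gadget of Claim~\ref{lem:three_pts} already contains a spanning path on its three points, so any two of them are joined by a path of at most two edges, all of bounded (hence surviving) length, since the triplet has diameter at most $2$; (2) by Theorem~\ref{thm:no_cliques}, any two size-three components are connected in $G_\alpha$ by at least one edge, and such an edge has length at most $5$, so it is retained after the length-$7$ pruning; (3) every size-one or size-two component $C$ is joined to its designated size-three neighbour $C'$ by an edge of length at most $4$, which is likewise retained. These facts are exactly what make each ``hop'' in the argument realizable inside $G$ after pruning.

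With these in hand, the case analysis proceeds as follows. If $p$ and $q$ lie in the same size-three component, fact (1) gives a path of at most $2$ edges. If they lie in the same size-two component, I route $p \to C' \to q$: both wedges of the component were aimed at the same size-three component $C'$, so $p$ and $q$ need not be directly adjacent, yet each reaches $C'$ in one edge and inside $C'$ we spend at most two more edges, giving $1+2+1=4$. If $p$ and $q$ lie in different components, Claim~\ref{clm:comp} forces at least one of them, say $p$, into a size-three component. When $q$ is also in a size-three component, fact (2) lets me cross between the two triplets at a cost of at most $2+1+2=5$ edges. When $q$ is in a size-$\le 2$ component, I first spend one edge reaching its designated size-three host $C'$ (fact (3)) and then apply the size-three-to-size-three bound from $C'$ to $p$, for a total of at most $1+5=6$. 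The maximum over all cases is $6$, as claimed.

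I expect the main obstacle to be not the counting but guaranteeing that every edge used on these paths genuinely survives in $G$. Two points are delicate. First, the existence of a cross-triplet edge is precisely where Theorem~\ref{thm:no_cliques} is indispensable: without it, two size-three components could be mutually disconnected and no bounded-hop detour would exist. Second, one must confirm the length bounds ($\le 4$ between a small component and its host, $\le 5$ between two size-three components, and $\le 7$ for the worst connecting edge) so that none of the routing edges are discarded by the cutoff; these follow from the greedy construction, in which every retained point lies within distance $1$ of an already-chosen point of its component, together with the triangle inequality. Verifying them is the routine-but-necessary geometric bookkeeping underlying the clean hop count above.
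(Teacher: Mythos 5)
Your proposal is essentially the paper's own argument: the same greedy-component building blocks, the same case analysis (same size-three component, same size-two component, triplet--triplet, small-component--triplet), and the same hop counts $2$, $4$, $5$, $6$. One correction is needed, though: your fact (2) is stated too strongly. Theorem~\ref{thm:no_cliques} does guarantee an edge of $G_\alpha$ between \emph{any} two size-three components, but the length-$\le 5$ bound holds only for \emph{neighboring} ones (triplets joined by an edge of $\UDG$); two far-apart triplets are still connected in $G_\alpha$, yet by an arbitrarily long edge, which the pruning then removes. This matters exactly in your last case: $q$'s designated host $C'$ and $p$'s size-three component need not be neighbors, so the crossing edge between them is bounded not by $5$ but by $7$ --- both triplets are within distance $1$ of the small component containing $q$, giving at most $2+1+1+1+2=7$ --- and this is precisely why the pruning threshold is set to $7$ rather than $5$. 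Your closing paragraph gestures at the right fix (``$\le 7$ for the worst connecting edge''), so what remains is to restate fact (2) in two parts, as the paper does: length $\le 5$ for neighboring triplets, and length $\le 7$ for two triplets that are both adjacent (in $\UDG$) to a common component of size one or two. With that repair the routing edges all survive the cutoff and the hop bound of $6$ stands.
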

\begin{proof}
Let $(p,q)$ be an edge of $\UDG$. We show that the number of hops between $p$ and $q$ in $G$ is at most $6$. 
If $p,q$ are in the same component of size three, then, clearly, the path between them consists of at most $2$ edges. 
If $p,q$ are in the same component of size two, then, the path between them passes through a single component of size three, and thus consists of at most $4$ edges. 
If $p,q$ are in different components, then, by Claim~\ref{clm:comp}, at least one of them, say $p$, is in a component of size three. If $q$ is also in a component of size three, then the path from $p$ to $q$ consists of at most $5$ edges. Otherwise, the path between them goes from $q$ to some component $C'$ of size three (which is not necessarily $p$'s component) and from there to $p$'s component, and thus consists of at most 6 edges.
\end{proof}
}

The following theorem summarizes the result of this section. 
\begin{theorem}
\label{thm:hop-spanner}
Let $P$ be a set of points in the plane and assume that $\UDG$ is connected. Let $\alpha=2\pi/3$. Then, one can construct, in $O(n\log n)$ time, a 6-hop-spanner of $\UDG$, in which each edge is of length at most 7.
\end{theorem}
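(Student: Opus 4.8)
The plan is to reduce the Antenna Conversion problem to the triplet machinery of Section~\ref{sec:gadget}. First I would greedily partition $P$ into connected components of $\UDG$ of size at most three: repeatedly pick an unprocessed point $a$, then (if possible) absorb a point $b$ within distance $1$ of $a$, and then (if possible) a point $c$ within distance $1$ of $a$ or of $b$. By construction every component has $\UDG$-diameter at most $2$, and a size-$3$ component is exactly a triplet to which Claim~\ref{lem:three_pts} applies. I would then orient the antennas as follows: the wedges of each size-$3$ component are oriented, independently, by the gadget of Claim~\ref{lem:three_pts}; and for each component $C$ of size one or two, I pick a size-$3$ component $C'$ containing a $\UDG$-neighbor of $C$ (one exists by Claim~\ref{clm:comp}) and aim each wedge of $C$ at the (plane-covering) wedge of $C'$ that contains its apex. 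Finally I fix the common range $\delta=7$; the output we actually produce is this $O(n)$-size orientation-and-range description, and $G$ is the resulting symmetric communication graph it determines.

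Next I would verify the two quantitative ingredients. For \emph{connectivity without a range restriction}, Theorem~\ref{thm:no_cliques} guarantees that any two size-$3$ components are connected, and by construction each small component is connected to its chosen $C'$; hence the range-unrestricted graph $G_\alpha$ is connected. For the \emph{length bounds}, a routine triangle-inequality computation (using diameter $\le 2$ for size-$3$ components, diameter $\le 1$ for size-$2$ components, and the defining distance-$\le 1$ adjacencies) shows that the edge joining a small component to its $C'$ has length $\le 4$; that any edge between two $\UDG$-adjacent size-$3$ components has length $\le 5$; and that any edge between two size-$3$ components sharing a common small $\UDG$-neighbor has length $\le 7$. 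Consequently every edge that participates in the routing argument below already has length $\le 7$, so imposing $\delta=7$ (equivalently, discarding the $G_\alpha$-edges of length $>7$) does not destroy connectivity.

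The heart of the proof is then a single case analysis that simultaneously yields connectivity of $G$ and the $6$-hop-spanner property. Since $\UDG$ is connected, it suffices to show that for every edge $(p,q)$ of $\UDG$ there is a path of at most $6$ surviving edges between $p$ and $q$ in $G$. I would split into cases on the components $C_p,C_q$ of $p,q$: (a) same size-$3$ component, routed along its $2\pi/3$-ST in $\le 2$ hops; (b) same size-$2$ component, routed through their common $C'$ in $\le 4$ hops (both apices aim at the same $C'$); (c) distinct size-$3$ components, which are $\UDG$-adjacent and hence joined by a surviving edge of length $\le 5$, giving $\le 5$ hops; and (d) one size-$3$ component $C_p$ and one small component $C_q$, where $C_q$ has $\UDG$-neighbors in both $C_p$ and its chosen $C'$, so $C_p$ and $C'$ share the small neighbor $C_q$ and are therefore joined by a surviving edge of length $\le 7$, yielding the route $q \to C' \to C_p \to p$ of $\le 6$ hops. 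In each case every edge used has length $\le 7$ and hence lies in $G$.

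The main obstacle is case (d): it is the only place where edge length and hop count are in genuine tension, since the short route from $q$ to $p$ must detour through a \emph{third} size-$3$ component $C'$ (the one the small component was attached to), and a priori the $G_\alpha$-edge between $C'$ and $C_p$ — which exists only by the full strength of Theorem~\ref{thm:no_cliques} — could be arbitrarily long. The key observation that resolves this is that $C_q$ is a common small neighbor of both $C'$ and $C_p$, which forces that connecting edge into the length-$\le 7$ regime; pinning down this common-neighbor structure is precisely what keeps $\delta$ and $c$ small at the same time. The remaining points are routine: the greedy partition and the assignment of each small component to a size-$3$ neighbor are implemented with grid-based fixed-radius near-neighbor queries in $O(n\log n)$ time, while the per-point orientation (including locating the covering wedge of $C'$) costs $O(1)$, so the whole construction runs in $O(n\log n)$ time.
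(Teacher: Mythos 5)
Your proposal matches the paper's proof essentially step for step: the same greedy partition of $P$ into $\UDG$-components of size at most three, the same orientation rule (the gadget of Claim~\ref{lem:three_pts} for triplets, with small components aimed at a covering wedge of a size-three neighbor via Claim~\ref{clm:comp}), the same three length bounds ($\le 4$, $\le 5$, $\le 7$, the last one via exactly the paper's common-small-neighbor observation), and the same routing case analysis yielding $2$, $4$, $5$, and $6$ hops. The only cosmetic difference is the data structure invoked for the $O(n\log n)$ running time (the paper uses the semi-dynamic near-neighbor structure of Efrat, Itai, and Katz), which does not affect correctness.
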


\paragraph{Running time.}
It is possible to implement the algorithm described above in $O(n \log n)$ time, using a data structure presented by Efrat et al.~\cite{EIK01}. This data structure is designed for a given set $Q$ of $n$ points in the plane. It supports queries of the following form: Given a query point $a$, return a point $q \in Q$ whose distance from $a$ is at most 1, and also delete it from $Q$ (if requested to do so). The data structure can be constructed in $O(n \log n)$ time, and a query (including deletion if needed) can be answered in amortized $O(\log n)$ time. We use it in both phases of the algorithm. For the first phase, in which $P$ is partitioned into connected components of $\UDG$ of size at most three, we construct the data structure for the set $P$. Now, finding a single component requires at most three queries (plus deletions), and, since there are $O(n)$ components, the total running time of this phase is $O(n \log n)$. In the second phase, we orient the wedges of each component. Orienting the wedges of the components of size three can be done in $O(1)$ time per component. For the components of size one or two, we construct the data structure for the subset of $P$ consisting of all points belonging to components of size three. Now, by Claim~\ref{clm:comp}, orienting the wedge of a component of size one (alternatively, the wedges of a component of size two) can be done in amortized $O(\log n)$ time. For a component of size one, we perform a single query (without deletion) in the data structure, and, for a component of size two, we perform one or two queries (without deletion), depending on whether the query with the first point is successful or not.  
We conclude that the overall running time of the algorithm for constructing $G$ is $O(n \log n)$.

\old{
\paragraph{Remark.} If our goal is solely to limit one of the measures (i.e., either range or hop distance), then better constants can be easily obtained. E.g., a $3$-hop spanner or an $\alpha$-graph with maximum length $5$.
}

\section{NP-hardness}\label{sec:np_hardness}

We prove that the problem of computing an $\alpha$-MST, for $\alpha=\pi$ and $\alpha=2\pi/3$, is NP-hard.

\subsection{$\boldsymbol{\alpha=\pi}$}

\begin{figure}[htb]
\centering
  \subfigure[]{
   \centering
       \includegraphics[width=0.2\textwidth,page=1]{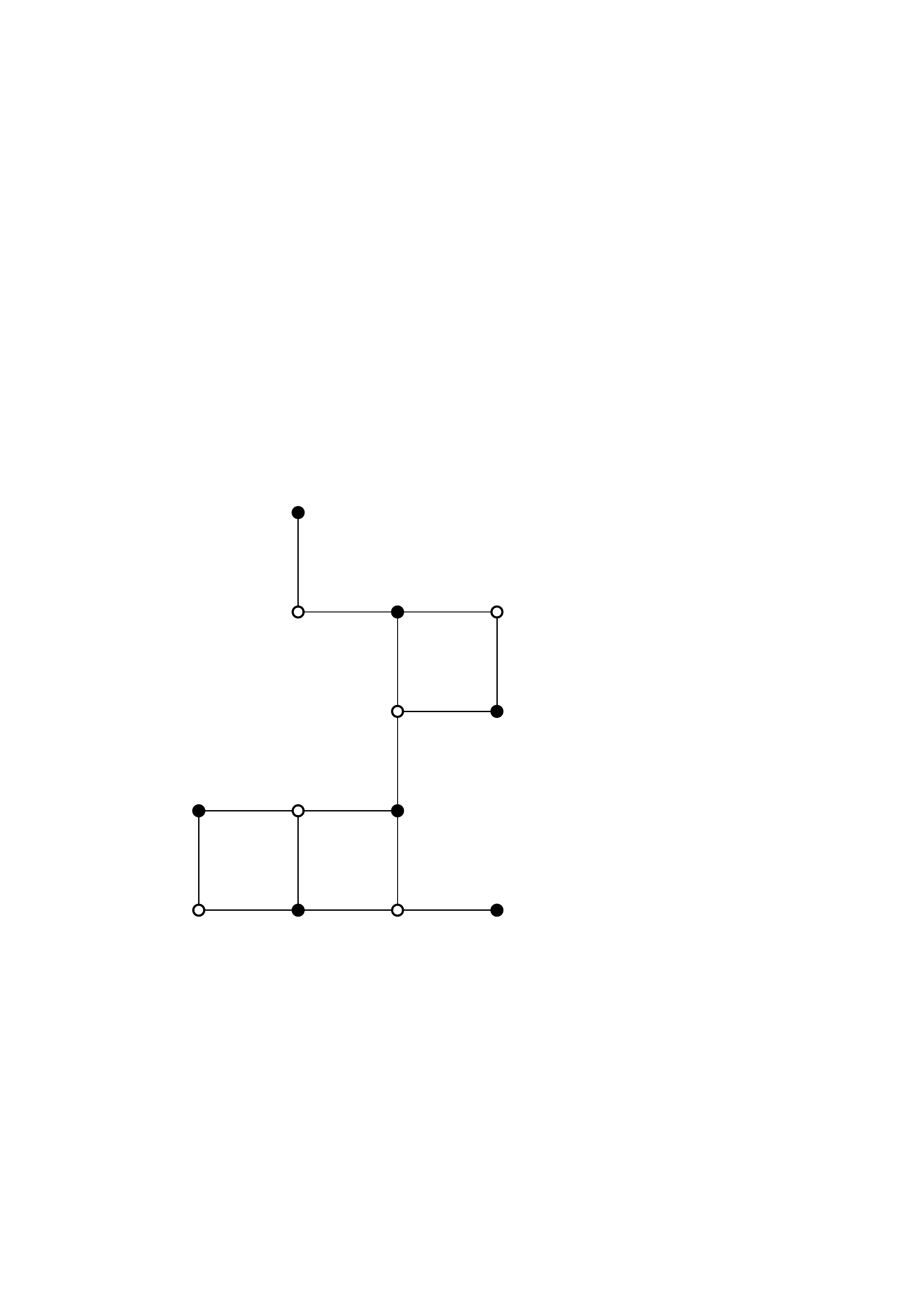}\label{fig:grid_red_a}
  }
  \hspace{.75cm}
  \subfigure[]{
   \centering
       \includegraphics[width=0.2\textwidth,page=2]{fig/grid_graph_reduction}\label{fig:grid_red_b}
  }
  \hspace{.75cm}
  \subfigure[]{
   \centering
       \includegraphics[width=0.2\textwidth,page=3]{fig/grid_graph_reduction}\label{fig:grid_red_c}
  }
 	\caption{(a) A square grid graph $G=(V,E)$ consisting of $13$ vertices and a 2-coloring of the graph. (b) For each point $v \in V$, we add a corresponding point $q_v$ (denoted by a square). (c) A Hamiltonian path of $G$. The semi-circles indicate the orientation of the wedges at the points of $V$.}
 	\label{fig:reduction180}	
\end{figure} 

We describe a reduction from the problem of finding a Hamiltonian path in square grid graphs of degree at most 3. This problem was shown to be NP-hard by Itai et al.~\cite{IPS82}.
A {\em square grid graph} is a graph whose vertices correspond to points in the plane with integer coordinates, and there is an edge between two vertices if the distance between their corresponding points is 1. 
Let $G=(V,E)$, $|V|=n$, be a square grid graph of degree at most 3. Our reduction is very similar to the one in~\cite{PV84}. Since every square grid graph is bipartite, one can color $G$ with two colors, say, black and white (see Figure~\ref{fig:grid_red_a}). Let $c(v)$ denote $v$'s color in some 2-coloring of $G$, and let $n_1$ and $n_2$, $n_1+n_2=n$, be the number of black and white points (i.e., vertices), respectively. 
We add a set $Q(V)$ of $n$ points as follows. For each point $v \in V$, consider any edge $e$ of the {\em complete} grid graph that is adjacent to $v$ and is missing in $G$. We place a point $q_v$ on $e$, such that $|vq_v|=1/4$, if $c(v)$ is black, and $|vq_v|=1/5$, if $c(v)$ is white (see Figure~\ref{fig:grid_red_b}). Notice that the distance from $q_v$ to any point in $V \cup Q(V) \setminus \{v,q_v\}$ is greater than $1$. 
Therefore, any MST of the point set $V \cup Q(V)$ contains the $n$ edges $(v,q(v))$, $v \in V$, and $n-1$ edges from $E$. Its weight is $L = n-1 + n_1/4 + n_2/5$.

\begin{lemma}
\label{lem:hardness1}
G has a Hamiltonian path if and only if $V \cup Q(V)$ has a $\pi$-MST of weight $L$. 
\end{lemma}
\begin{proof}
Assume $G$ has a Hamiltonian path, then its weight is $n-1$, since it consists of $n-1$ edges and the weight of each edge in $G$ is 1.
By adding an edge between each point $v \in V$ and its corresponding point $q_v$, we obtain a tree, $T$, of weight $L$. Notice that the degree of each point in $T$ is at most 3. Moreover, $T$ is a $\pi$-ST of $V \cup Q(V)$, since, at each vertex of $T$, one can place a $\pi$-wedge that covers all its neighbors (see Figure~\ref{fig:grid_red_c}). Finally, the weight of $T$ is $L$, and, therefore, $T$ is a $\pi$-MST of $V \cup Q(V)$.

Assume now that $V \cup Q(V)$ has a $\pi$-MST, $T$, of weight $L$. Then, $T$ must contain the $n$ edges $(v,q_v)$, $v \in V$, plus $n-1$ edges from $E$. Moreover, the maximum degree in $T$ is at most 3 (since, a $\pi$-wedge can cover at most 3 orthogonal directions). We conclude that $G$ has a spanning tree of maximum degree at most 2 of weight $n-1$. That is, $G$ has a Hamiltonian path.
\end{proof}

\subsection{$\boldsymbol{\alpha=2\pi/3}$}

We describe a reduction from the problem of finding a Hamiltonian {\em path} in hexagonal grid graphs. 
Consider a tiling of the plane with regular hexagons of side length 1. The vertex set of an {\em hexagonal grid graph} is a subset of the vertices of the tiling, and there is an edge between two vertices of the graph if the distance between them is 1 (see Figure~\ref{fig:reduction120}). 
The problem of finding a Hamiltonian {\em cycle} in such graphs was shown to be NP-hard by Arkin et al.~\cite{AFIMMRPRX09}.
We first show that the path version is also NP-hard.

\begin{figure}[htb]
\centering
  \subfigure[]{
   \centering
       \includegraphics[width=0.3\textwidth,page=2]{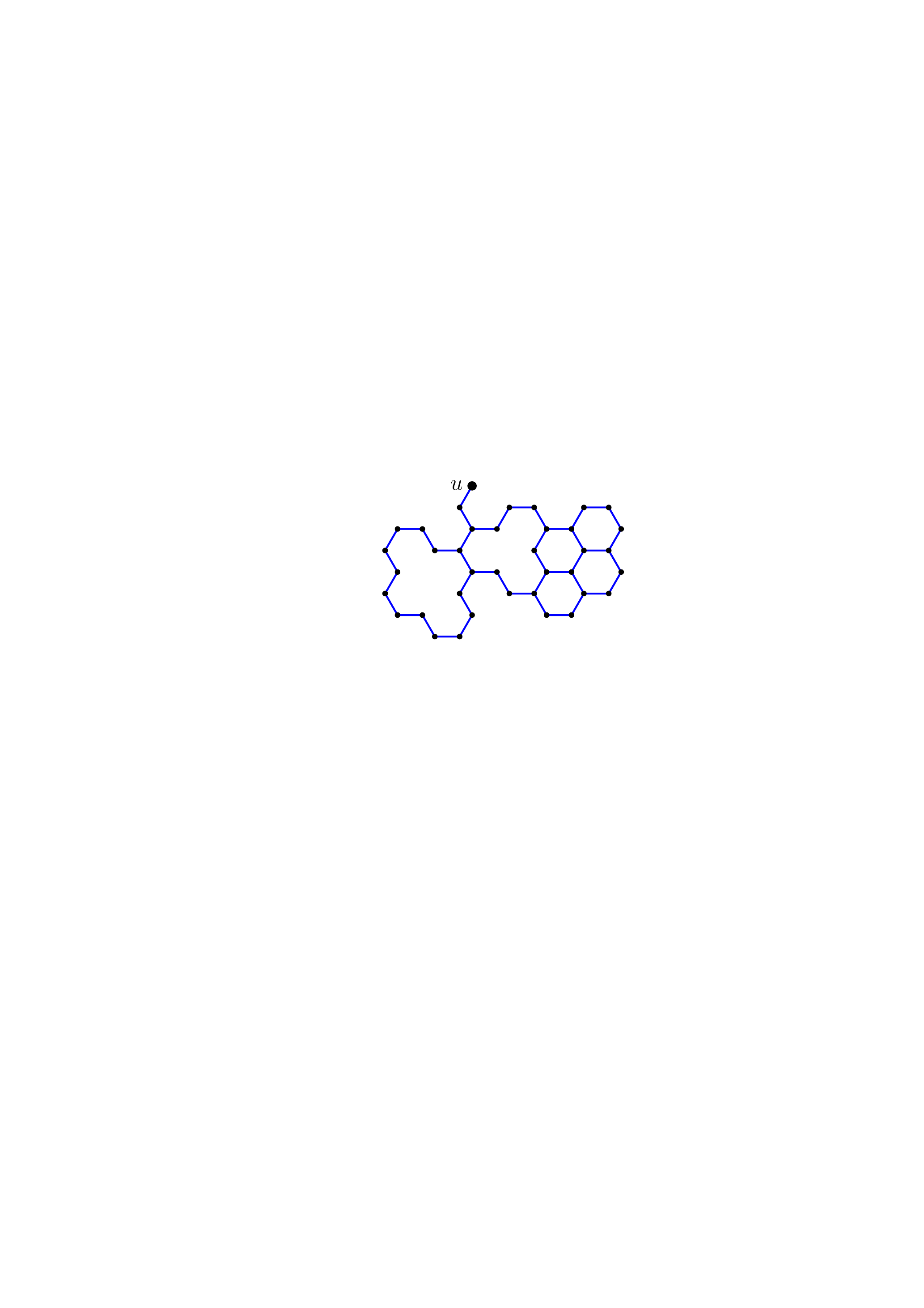}\label{fig:reduction_a}	
  }
	\hspace{.75cm}
  \subfigure[]{
   \centering
       \includegraphics[width=0.3\textwidth,page=2]{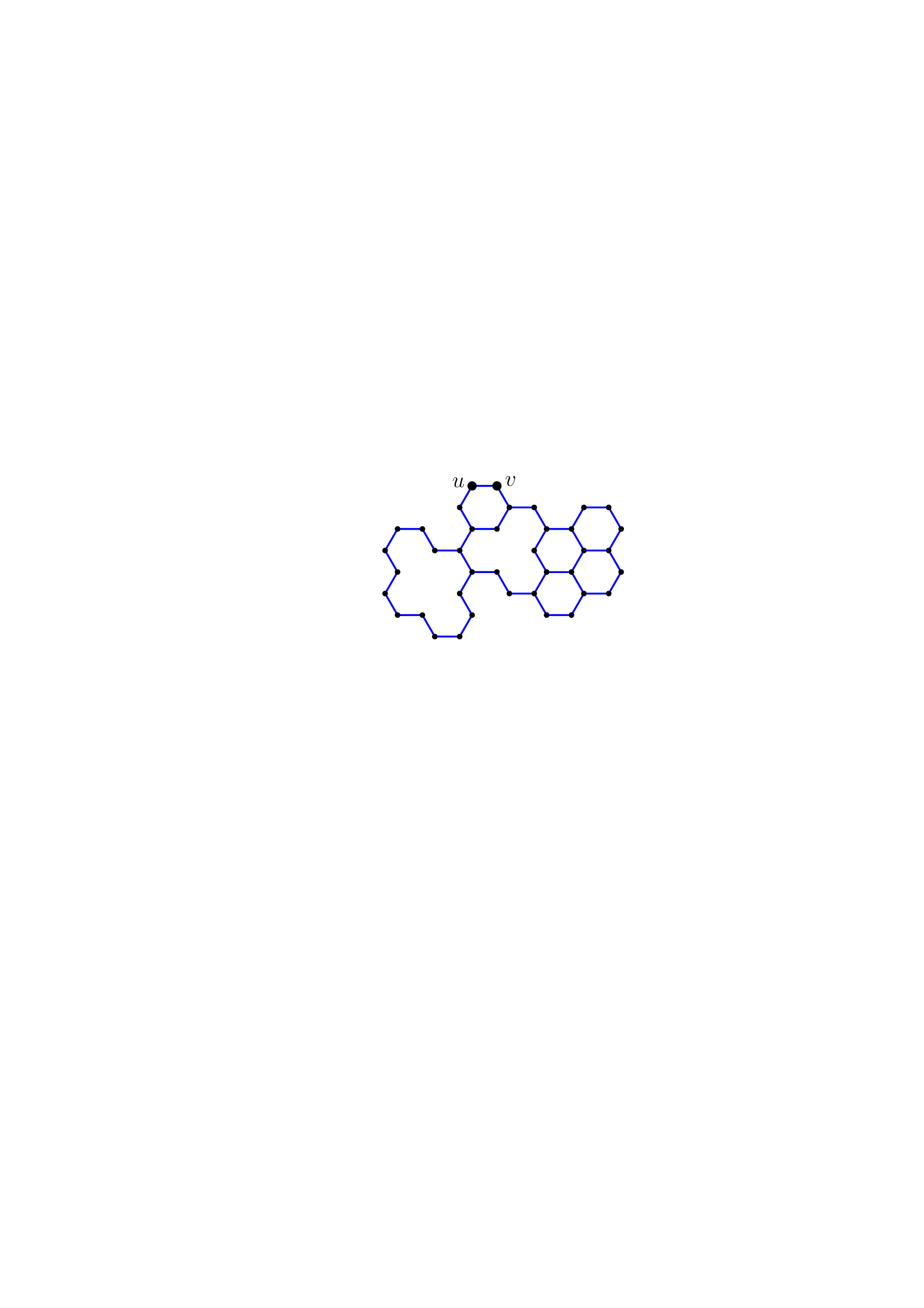}\label{fig:reduction_b}	
  }
 	\caption{(a) If $deg(u)=1$, then we add three points. (b) If $deg(u)=2$, then we add two points.} \label{fig:reduction120}	
\end{figure} 
 
Let $G=(P,E)$ be an hexagonal grid graph, and let $u$ be the highest point in $P$. (If there are several highest points, then pick the leftmost among them.). Notice that since $u$ is the highest point, its degree cannot be 3, i.e., $deg(u) \le 2$. 
Moreover, if $deg(u)=2$, then one of the edges adjacent to $u$ must be horizontal.
We construct another hexagonal grid graph, $G'$, by adding at most three points to $P$, depending on $u$'s degree in $G$.
If $deg(u)=0$, then $G' = G$.
If $deg(u)=1$, then we add the points, $s$, $t$, and $w$ to $P$, as in Figure~\ref{fig:reduction_a}. The only edges that are formed due to this addition are $(s,w)$, $(t,w)$, and $(u,w)$.
Finally, if $deg(u)=2$, then  we add the points $s$ and $t$ to $P$, as in Figure~\ref{fig:reduction_b}. The only edges that are formed due to this addition are $(s,u)$ and $(t,v)$, where $v$ is the horizontal neighbor of $u$.

\begin{lemma}
$G$ contains a Hamiltonian cycle if and only if $G'$ contains a Hamiltonian path.
\end{lemma}
\begin{proof}
Assume that $G$ contains a Hamiltonian cycle. Then, $deg(u)= 2$, and $(u,v)$ is an edge of this cycle. By dropping the edge $(u,v)$, we obtain a Hamiltonian path between $u$ and $v$ in $G$. And, by adding the edges $(s,u)$ and $(t,v)$ to this path, we obtain a Hamiltonian path between $s$ and $t$ in $G'$. 

Assume now that $G'$ contains a Hamiltonian path. We claim that this is possible only if $G'$ was obtained by adding two points to $P$. Indeed, if $G'=G$, then $deg(u)=0$, and $G'$ cannot contain a Hamiltonian path. And, if $G'$ was obtained by adding three points to $P$, then, since $deg(s)=deg(t)=1$ and $s$ and $t$ have a common neighbor, namely, $w$, $G'$ cannot contain a Hamiltonian path. 
So, consider the graph $G'$ that is obtained by adding the points $s$ and $t$ to $P$ (see Figure~\ref{fig:reduction_b}), and recall that we are assuming that $G'$ contains a Hamiltonian path. Since $deg(s)=deg(t)=1$, the endpoints of this path are necessarily $s$ and $t$. By dropping the edges $(s,u)$ and $(t,v)$ from this path, we get a Hamiltonian path $\Pi$ in $G$, between $u$ and $v$. Notice that the edge $(u,v)$ is not in $\Pi$, and by adding it to $\Pi$, we get a Hamiltonian cycle in $G$. 
\end{proof}

We are now ready to show that the problem of computing an $\alpha$-MST, for $\alpha=2\pi/3$, is NP-hard.
Let $G=(P,E)$ be an hexagonal grid graph, where $|P|=n$. Since the distance between any two points in $P$ is at least 1, the weight of a MST of $P$ is at least $n-1$.
\begin{lemma}
$G$ has a Hamiltonian path if and only if $P$ has a $2\pi/3$-MST of weight $n-1$. 
\end{lemma}
\begin{proof}
Assume first that $G$ has a Hamiltonian path. Then, its weight is $n-1$. Moreover, the angle between any two consecutive edges along the path is $2\pi/3$. Therefore, the path is also a $2\pi/3$-ST, and, since its weight is $n-1$, it is a $2\pi/3$-MST.

Assume now that $P$ has a $2\pi/3$-MST, $T$, of weight $n-1$.
Then, all the edges of $T$ are of length exactly 1, and therefore belong also to $E$. It follows that the degree of any point in $T$ is at most 2. Therefore, $T$ is a Hamiltonian path of $G$. 
\end{proof}



\old{
\newpage

\appendix

\section{Proofs of Lemma~\ref{lem:two_cliques} and Lemma~\ref{lem:one_clique}}

\setcounter{lemma}{0}
\begin{lemma}[Two cliques]
Let $S_1=\{a,b,c\}$ and $S_2$ be two triplets of points and let $\alpha=2\pi/3$.
Assume that the wedges (associated with the points) of $S_1$ and, independently, of $S_2$ are oriented according to the proof of Claim~\ref{lem:three_pts}, and that both induced graphs, $G_{S_1}$ and $G_{S_2}$, are cliques.
Then, the induced graph $G_{S_1\cup S_2}$ is connected.
\end{lemma}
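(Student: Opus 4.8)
The plan is to exploit the covering property established in Claim~\ref{lem:three_pts}: since $G_{S_2}$ is a clique, the three wedges of $S_2$ cover the entire plane, so in particular they cover all three points of $S_1$ (and, symmetrically, the wedges of $S_1$ cover all of $S_2$). I would case-split on the maximum number of points of $S_1$ that a single wedge of $S_2$ covers: (i) some $x \in S_2$ covers all of $S_1$; (ii) some $x \in S_2$ covers exactly two points of $S_1$; (iii) every wedge of $S_2$ covers exactly one point of $S_1$. Case (i) is immediate: if $\wedge{x}$ covers all of $S_1$, then because the wedges of $S_1$ cover the plane, some wedge of $S_1$ covers $x$, giving $\connected{x}{S_1}$.

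For Case (ii), suppose $\wedge{x}$ covers exactly two points of $S_1$ but $x$ is connected to neither (otherwise we are done). Then $x$ lies in the private region of the uncovered third point, and one bounding ray of $\wedge{x}$ must cut the two triangle edges incident to that third point. Whether $x$ is above or below the line $l$, together with the fixed values $\orientation{\leftray{b}}=60$ or $\orientation{\leftray{a}}=300$ and Property~2, pins down the exact range $R_i$ containing $\orientation{\leftray{x}}$, and hence the ranges of $\orientation{\rightray{x}}$, $\orientation{\wedge{x}}$, and $\orientation{\thirdray{x}}$. Because $G_{S_2}$ is a clique, the point $y\in S_2$ whose rays satisfy $\orientation{\leftray{y}}=\orientation{\thirdray{x}}$ and $\orientation{\rightray{y}}=\orientation{\leftray{x}}$ is connected to $x$, so $y\in\wedge{x}$ and lies on the prescribed side of $\bisector{x}$; a short halfplane/intersection argument then shows that $\wedge{y}$ covers one of $a,b,c$, yielding the desired edge. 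I would organize this into three sub-cases according to which pair (namely $\{b,c\}$, $\{a,b\}$, or $\{a,c\}$) is covered, the last of which splits further depending on whether $\leftray{x}$ or $\rightray{x}$ is the cutting ray.

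Case (iii) proceeds by a symmetry reduction: if every wedge of $S_2$ covers exactly one point of $S_1$, I may assume the same holds with the roles of $S_1$ and $S_2$ reversed, for otherwise interchanging the set names lands us in the already-settled Case (ii). Hence each point of $S_2$ sits in its own private region among $R_a,R_b,R_c$. Taking the point $x\in R_a$ and assuming $\notconnected{x}{a}$, the wedge $\wedge{x}$ must cover $b$ or $c$ instead; computing $\orientation{\thirdray{x}}$ (it lands in $(240,360)$ or $(120,240)$ accordingly) forces the clique-partner $y$ of $x$ to have a wedge covering two points of $S_1$, contradicting the Case (iii) hypothesis.

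I expect Case (ii) to be the main obstacle. The difficulty is not any single argument but the bookkeeping across the many sub-configurations: the crux is to translate, consistently and without error, the combined information ``$\wedge{x}$ covers exactly these two points'' and ``$x$ lies above/below $l$'' into a precise range $R_i$ for each of the four directed rays of $\wedge{x}$, since it is exactly these ranges that determine the partner $y$ via Property~2 and the covering behaviour of $\wedge{y}$.
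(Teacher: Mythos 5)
Your proposal is correct and follows essentially the same argument as the paper's own proof: the identical three-way case split on how many points of $S_1$ a single wedge of $S_2$ covers, the same use of the clique hypothesis to locate a partner point $y$ whose ray orientations match $\orientation{\thirdray{x}}$ and $\orientation{\leftray{x}}$ in Case (ii), and the same interchange-of-names reduction plus contradiction (forcing $y$'s wedge to cover two points of $S_1$) in Case (iii). Even the orientation ranges you compute for $\orientation{\thirdray{x}}$ in Case (iii) coincide with those in the paper.
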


\begin{proof}
The wedges of $S_2$ cover the plane, in particular they cover all points of $S_1$. Therefore, we distinguish between three (not necessarily disjoint) cases: 
(i) there exists a point $x \in S_2$ such that $\wedge{x}$ covers all points of $S_1$, 
(ii) there exists a point $x \in S_2$ such that $\wedge{x}$ covers exactly two points of $S_1$, and 
(iii) the wedge of each point in $S_2$ covers exactly one point of $S_1$.

{\bf Case (i):} There exists a point $x \in S_2$ such that $\wedge{x}$ covers all points of $S_1$. Since
the wedges of $S_1$ cover the plane, at least one of them must cover $x$, and therefore
$\connected{x}{S_1}$.

{\bf Case (ii):} There exists a point $x \in S_2$ such that $\wedge{x}$ covers exactly two points of $S_1$. We divide this
case into three sub-cases, according to which two points of $S_1$ are covered by $\wedge{x}$.

\begin{figure}[htb]
 \centering
 \subfigure[$y\in R_a$]{
   \centering
       \includegraphics[width=0.45\textwidth]{fig/proof_case_2_b_2}
 }
 \subfigure[$y \in \wedge{b}$]{
  \centering
       \includegraphics[width=0.45\textwidth]{fig/proof_case_2_b_1}
 }
 \caption{Proof of Lemma~\ref{lem:two_cliques}, Case (ii)(1).}	\label{fig:case2b}	
\end{figure} 

(1) $\wedge{x}$ covers $b$ and $c$ and does not cover $a$. Assume $\notconnected{x}{b,c}$ (since otherwise
we are done), then $x \in R_a$ and one of the rays of $\wedge{x}$ intersects $\segment{ab}$ and
$\segment{ac}$. Notice that this ray must be $\leftray{x}$ and that $\leftray{x}$ also intersects $\leftray{b}$ (see~Figure~\ref{fig:case2b}).
Since $x$ lies below $l$, $\leftray{x}$ intersects $\leftray{b}$, and $\orientation{\leftray{b}}=60$, we have that $\orientation{\leftray{x}} \in \ra$. 
It follows that $\orientation{\rightray{x}} \in \re$, $\orientation{\wedge{x}} \in \rf$, and $\orientation{\thirdray{x}} \in \rc$.
Therefore, $\bisector{x}$ (whose orientation is $\orientation{\wedge{x}}$) does not intersect $l$.
Let $y$ be the point of $S_2$ such that $\orientation{\leftray{y}}= \orientation{\thirdray{x}} \in \rc$ and 
$\orientation{\rightray{y}} = \orientation{\leftray{x}} \in \ra$. Since $\connected{x}{y}$, we have that
$y \in \wedge{x}$ and $y$ lies to the right of $\bisector{x}$. Notice that $\wedge{y}$ contains the (imaginary) wedge of orientation $\orientation{\wedge{y}}$ and apex $x$. If $y \in R_a$ (see Figure~\ref{fig:case2b}(a)), then $\connected{y}{a}$, since $\wedge{y}$ covers $a$. 
Otherwise, $y \in \wedge{b}$ and in particular $y$ lies to the right of $b$ (see Figure~\ref{fig:case2b}(b)). In this case we show that $\connected{y}{b}$. Indeed, $\rightray{y}$ intersects $l$ to the right of $b$, since $\orientation{\rightray{y}}\in\ra$, and, since $l(\leftray{y})$ is parallel to $l(\bisector{x})$ and below it, we have that $\leftray{y}$ intersects $l$ to the left of $b$. We conclude that $b \in \wedge{y}$ and $\connected{y}{b}$.

\begin{figure}[htb]
 \centering
       \includegraphics[width=0.35\textwidth]{fig/proof_case_2_c}
 \caption{Proof of Lemma~\ref{lem:two_cliques}, Case (ii)(2).}	\label{fig:case2c}	
\end{figure} 
  
(2) $\wedge{x}$ covers $a$ and $b$ and does not cover $c$. Assume $\notconnected{x}{a,b}$ (since otherwise
we are done), then $x \in R_c$
and one of the rays of $\wedge{x}$ intersects $\segment{ac}$ and $\segment{bc}$. Notice that this ray must be $\leftray{x}$ and that $\leftray{x}$ also intersects $\leftray{a}$ (see Figure~\ref{fig:case2c}).
Since $x$ lies above $l$, $\leftray{x}$ intersects $\leftray{a}$, and $\orientation{\leftray{a}}=300$, we have that $\orientation{\leftray{x}} \in \re$. 
It follows that $\orientation{\rightray{x}} \in \rc$, $\orientation{\wedge{x}} \in \rd$, and $\orientation{\thirdray{x}} \in \ra$. The rest of the proof for this case is very similar to the proof of Case~(ii)(1), thus we omit further details.

\begin{figure}[htb]
 \centering 
  \subfigure[$\leftray{x}$ intersects $\segment{ab}$ and $\segment{bc}$]{
    \centering
        \includegraphics[width=0.45\textwidth]{fig/proof_case_2_d}
	\label{fig:case2d}
 }
  \subfigure[$\rightray{x}$ intersects $\segment{ab}$ and $\segment{bc}$]{
    \centering
        \includegraphics[width=0.45\textwidth]{fig/proof_case_2_e}
	\label{fig:case2e}
 }
 \caption{Proof of Lemma~\ref{lem:two_cliques}, Case (ii)(3).}	\label{fig:caseb}
\end{figure}

(3) $\wedge{x}$ covers $a$ and $c$ and does not cover $b$. Assume $\notconnected{x}{a,c}$ (since otherwise
we are done), then $x \in R_b$,
and one of the rays of $\wedge{x}$ intersects $\segment{ab}$ and $\segment{bc}$. Notice that this ray 
can be either $\leftray{x}$ or $\rightray{x}$. 

If it is $\leftray{x}$ (see Figure~\ref{fig:case2d}), then
the orientations associated with $\wedge{x}$ are:
$\orientation{\leftray{x}} \in \rc$, $\orientation{\rightray{x}} \in \ra$, and $\orientation{\thirdray{x}} \in \re$.
The rest of the proof for this branch is very similar to the proof of Case (ii)(1), thus we omit further details.

If the ray intersecting $\segment{ab}$ and $\segment{bc}$ is $\rightray{x}$ (see Figure~\ref{fig:case2e}), then
the orientations associated with $\wedge{x}$ are:
$\orientation{\leftray{x}} \in \rf$, $\orientation{\rightray{x}} \in \rd$, and $\orientation{\thirdray{x}} \in \rb$.
Again, the rest of the proof for this branch is very similar to the proof of Case (ii)(1), thus we omit further details.

\begin{figure}[htb]
 \centering 
    \includegraphics[width=0.5\textwidth]{fig/proof_case_3_a}
 \label{fig:case3a}
 \caption{Proof of Lemma~\ref{lem:two_cliques}, Case (iii).}	\label{fig:case3}
\end{figure}

{\bf Case (iii):} The wedge of each point in $S_2$ covers exactly one point of $S_1$. We may assume that this condition also holds for the wedges of $S_1$; that is, the wedge of each point in $S_1$ covers exactly one point of $S_2$. Since, otherwise, we can simply interchange the set names. 
It follows that each point of $S_2$ lies in its own private region among the regions $R_a$, $R_b$, and $R_c$. 

Let $x$ be the point that lies in $R_a$. We claim that $\connected{x}{a}$.
Assume that $\notconnected{x}{a}$. We show that there exists a point $y \in S_2$ that covers two points of $S_1$.
If $\wedge{x}$ covers $b$ (see Figure~\ref{fig:case3}), then $\orientation{\rightray{x}} \in (0,120)$, which implies that $\orientation{\thirdray{x}} \in (240,360)$. Let $y$ be the point of $S_2$ such that $\orientation{\rightray{y}}=\orientation{\thirdray{x}}$ and $\orientation{\leftray{y}}=\orientation{\rightray{x}}$. Since $\connected{y}{x}$, we have that $y \in \wedge{x}$ and $y$ lies to the left of $\bisector{x}$, but then $\wedge{y}$ must cover $a$ and $c$ -- contradiction.
If $\wedge{x}$ covers $c$, then $\orientation{\leftray{x}} \in (0,120)$, which implies that $\orientation{\thirdray{x}} \in (120,240)$. Let $y$ be the point of $S_2$ such that $\orientation{\leftray{y}}=\orientation{\thirdray{x}}$ and $\orientation{\rightray{y}}=\orientation{\leftray{x}}$. Since $\connected{y}{x}$, we have that $y \in \wedge{x}$ and $y$ lies to the right of $\bisector{x}$, but then $\wedge{y}$ must cover $a$ and $b$ -- contradiction.

\end{proof}

\begin{lemma}[One clique]
Let $S_1=\{a,b,c\}$ and $S_2$ be two triplets of points and let $\alpha=2\pi/3$.
Assume that the wedges of $S_1$ and, independently, of $S_2$ are oriented according to the proof of 
Claim~\ref{lem:three_pts}, and that the induced graph $G_{S_2}$ is a clique.
Then, the induced graph $G_{S_1\cup S_2}$ is connected.
\end{lemma}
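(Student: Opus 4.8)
The plan is to reduce to the two-clique case already handled by Lemma~\ref{lem:two_cliques}, by replacing the non-clique triplet with a nearby clique triplet. If $G_{S_1}$ is itself a clique, Lemma~\ref{lem:two_cliques} finishes immediately, so assume it is not. Recall from the proof of Claim~\ref{lem:three_pts} that the edges $(a,b)$ and $(b,c)$ are always present, so the only way $G_{S_1}$ can fail to be a clique is that $\notconnected{a}{c}$. I would \emph{repair} this by letting $c'$ be the intersection point of $\leftray{a}$ and $\leftray{c}$ and assigning it a wedge with $\orientation{\wedge{c'}}=\orientation{\wedge{c}}$. By construction the triplet $\{a,b,c'\}$ induces a clique and, crucially, $\wedge{c'}\subseteq\wedge{c}$. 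Applying Lemma~\ref{lem:two_cliques} to $\{a,b,c'\}$ and $S_2$ then yields $\connected{a,b,c'}{S_2}$.

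The dichotomy is now immediate: if the guaranteed edge touches $a$ or $b$, we are done, since $a,b\in S_1$. Otherwise it touches $c'$, so I fix a point $x\in S_2$ with $\connected{x}{c'}$. Because $\wedge{c'}\subseteq\wedge{c}$ we already have $x\in\wedge{c}$, so if in addition $\wedge{x}$ covers $c$ then $\connected{x}{c}$ and we are done; hence assume $\wedge{x}$ misses $c$. Then $x$ must lie above $l$ and $\leftray{x}$ must cross $\segment{cc'}$. I would split on the behaviour of the other bounding ray $\rightray{x}$ into three cases: (i) $\rightray{x}$ meets $\segment{bc'}$; (ii) $\rightray{x}$ meets $l$ to the left of $b$; and (iii) $\rightray{x}$ does not meet $l$ at all. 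In each case Properties~1 and~2 pin down the orientation ranges of all four rays of $\wedge{x}$ --- for instance, in Case~(i) one obtains $\orientation{\leftray{x}}\in\rf$, $\orientation{\rightray{x}}\in\rd$, and $\orientation{\thirdray{x}}\in\rb$ --- and these ranges dictate which points of $S_1$ any other point of $S_2$ is able to cover.

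The core of the argument is then to produce a second point $y\in S_2$ whose wedge covers $c$, or a suitable pair of points of $S_1$, and to exploit the clique hypothesis on $G_{S_2}$: since $\connected{x}{y}$, the point $y$ is forced onto the prescribed side of $\bisector{x}$, which in turn forces $\wedge{y}$ to cover a point of $S_1$, giving $\connected{y}{S_1}$. The hard part will be the sub-cases in which the best available witness $y$ covers only a single point of $S_1$, where this direct conclusion fails. I expect to dispose of these exactly as in the two-clique proof: invoke Property~3, which says that two wedges whose bounding rays share an orientation jointly cover a halfplane, so that if the two points of $S_2$ other than $x$ were both pushed below $l$ their wedges would still cover $a$ above $l$ --- contradicting the assumption that each covers only one point of $S_1$. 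The bookkeeping that must be kept straight throughout is which points lie above versus below $l$, combined with the clique-forced position of $y$ relative to $\bisector{x}$; getting this right in every branch is where the difficulty concentrates. I note in advance that Case~(i) and sub-cases~(ii)(1) and~(ii)(2) will turn out not to use the clique hypothesis at all, which is precisely what will later allow them to be reused verbatim in the proof of Theorem~\ref{thm:no_cliques}.
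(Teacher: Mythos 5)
Your proposal follows the paper's proof essentially step for step: the same repair construction of $c'$ at the intersection of $\leftray{a}$ and $\leftray{c}$ with $\orientation{\wedge{c'}}=\orientation{\wedge{c}}$, the same appeal to Lemma~\ref{lem:two_cliques} for $\{a,b,c'\}$ and $S_2$, the same three-case split on $\rightray{x}$ with the same orientation ranges, the same use of the clique hypothesis to force $y$ to one side of $\bisector{x}$, and the same Property~3 argument for the single-coverage sub-case --- you even anticipate the paper's remark that Case~(i) and sub-cases~(ii)(1)--(2) avoid the clique hypothesis so they can be reused in Theorem~\ref{thm:no_cliques}. The only difference is that you leave the detailed sub-case bookkeeping sketched rather than fully worked, but the strategy and all key ingredients coincide with the paper's.
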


\begin{proof}
If the induced graph $G_{S_1}$ is also a clique, then, by Lemma~\ref{lem:two_cliques}, we are done.
Assume therefore that $G_{S_1}$ is not a clique. 
Let $c'$ be the intersection point of $\leftray{a}$ and $\leftray{c}$ (see Figure~\ref{fig:oneclique}), and consider the wedge $\wedge{c'}$ of orientation $\orientation{\wedge{c'}}=\orientation{\wedge{c}}$ and apex $c'$.
The graph induced by $\{a,b,c'\}$ is a clique, and therefore, by Lemma~\ref{lem:two_cliques}, $\connected{a,b,c'}{S_2}$.
If $\connected{a,b}{S_2}$, then we are done, so assume that $\connected{c'}{S_2}$.
Let $x$ be a point of $S_2$ such that $\connected{x}{c'}$, and assume that $\wedge{x}$ does not cover $c$ (if it does, then $\connected{x}{c}$, since $\wedge{c'} \subseteq \wedge{c}$). Then, $x$ lies above $l$ and $\leftray{x}$ intersects $\segment{cc'}$.
Below we consider the three cases: (i) $\rightray{x}$ intersects $\segment{bc'}$, (ii) $\rightray{x}$ intersects $l$ to the left of $b$, and (iii) $\rightray{x}$ does not intersect $l$. However, in the first case (i.e., Case~(i)) and in sub-cases (1) and (2) of the second case (i.e., Case~(ii)(1) and Case~(ii)(2)) we refrain from using the assumption that $G_{S_2}$ is a clique. This is because these cases appear again later in the proof of Theorem\mbox{~\ref{thm:no_cliques}}, where we may not assume that $G_{S_2}$ is a clique.

\begin{figure}[htb]
 \centering 
 \subfigure[Case (i)]{
    \centering
        \includegraphics[width=0.30\textwidth]{fig/one_clique_1a}
	     \label{fig:oneclique_case_1_a}}
 \subfigure[Case (ii)]{
    \centering
        \includegraphics[width=0.30\textwidth]{fig/one_clique_2a}
	     \label{fig:oneclique_case_1_b}}
 \subfigure[Case (iii)]{
    \centering
        \includegraphics[width=0.30\textwidth]{fig/one_clique_3a}
	     \label{fig:oneclique_case_1_c}}
	\caption{Proof of Lemma~\ref{lem:one_clique}.}	\label{fig:oneclique}
\end{figure}

{\bf Case (i):} $\rightray{x}$ intersects $\segment{bc'}$ (see Figure~\ref{fig:oneclique_case_1_a}). Notice that in this case $\wedge{x}$ does not cover points $b$ and $c$. Since $\orientation{\leftray{x}} < 360$ and $\orientation{\rightray{x}} > 180$, we get that $\orientation{\leftray{x}} \in \rf$, $\orientation{\rightray{x}} \in \rd$, and
$\orientation{\thirdray{x}} \in \rb$.
Between the two points in $S_2 \setminus \{x\}$, let $y$ be the one whose wedge covers more points of $S_1$; in case of tie, let $y$ be any one of them. We know that one of $\wedge{y}$'s rays has orientation in $\rb$.
There are five sub-cases:

(1) $\wedge{y}$ covers all points of $S_1$. There must exist a point in $S_1$ that covers $y$, so we are done. 

(2) $\wedge{y}$ covers $b$ and $c$ and does not cover $a$. If $\connected{y}{b,c}$, then we are done. Otherwise, $y \in R_a$. Now, since $\orientation{\leftray{b}} = 60$ and $\wedge{y}$ must cover $b$ and $c$ and avoid $a$, we get that $\orientation{\leftray{y}} \in \ra$. But, this is impossible, since $\ra$ is not among the three relevant ranges mentioned above. 

(3) $\wedge{y}$ covers $a$ and $b$ and does not cover $c$. If $\connected{y}{a,b}$, then we are done. Otherwise, $y \in R_c$. We show that this is impossible.
If $\orientation{\rightray{y}} \in \rb$, then $\orientation{\leftray{y}} \in \rd$, and $a,b \notin \wedge{y}$. And, if $\orientation{\leftray{y}} \in \rb$, then $\orientation{\rightray{y}} \in \rf$, and $\wedge{y}$ must also cover $c$.

(4) $\wedge{y}$ covers $a$ and $c$ and does not cover $b$. This case is analogous to the previous one.
\old{FULL VERSION
If $\connected{y}{a,c}$, then we are done. Otherwise, $y \in R_b$.
We show that this is impossible.
If $\orientation{\leftray{y}} \in \rb$ , then $\orientation{\rightray{y}} \in \rf$, and $a \notin \wedge{y}$. And, if $\orientation{\rightray{y}} \in \rb$, then $\orientation{\leftray{y}} \in \rd$, and $\wedge{y}$ must also cover $b$.
}

(5) $\wedge{y}$ covers exactly one point of $S_1$. Therefore, the wedge of each point in $S_2$ covers exactly one point of $S_1$. Since $\wedge{x}$ does not cover points $b$ and $c$, it must cover $a$.
Assume, w.l.o.g., that $\wedge{y}$ covers $c$ and $\wedge{z}$, the wedge of the remaining point, covers $b$. Next, we show that this is impossible.
Indeed, if $\orientation{\rightray{y}} \in \rb$ and $\orientation{\leftray{y}} \in \rd$, then $\wedge{y}$ must also cover $a$ and $b$. 
And, if $\orientation{\rightray{z}} \in \rb$ and $\orientation{\leftray{z}} \in \rd$, then both $y$ and $z$ must lie below $l$. (Since, if $y$ is above $l$, then $\connected{y}{c}$, and, if $z$ is above $l$, then $\connected{z}{b}$). Therefore, $\wedge{y} \cup \wedge{z}$ covers the halfplane above $l$ (see Property~3), and, in particular, at least one of the two wedges covers $a$.

{\bf Case (ii):} $\rightray{x}$ intersects $l$ to the left of $b$ (see Figure~\ref{fig:oneclique_case_1_b}). In this case, as in Case (i), $\orientation{\leftray{x}} \in \rf$, $\orientation{\rightray{x}} \in \rd$, and
$\orientation{\thirdray{x}} \in \rb$. Notice that in this case $b \in \wedge{x}$, so we assume that
$x \notin \wedge{b}$, since otherwise $\connected{x}{b}$. Let $y$ be a point of $S_2$ whose wedge covers
$c$. We distinguish between three sub-cases:

(1) $\wedge{y}$ covers all points of $S_1$. There must exist a point in $S_1$ that covers $y$, so we are done.

(2) $\wedge{y}$ covers exactly two points of $S_1$. If $\wedge{y}$ covers $b$ and $c$ and $\notconnected{y}{b,c}$, then $y \in R_a$ and either $\orientation{\wedge{y}} \in \ra$ or $\orientation{\wedge{y}} \in \rc$.
However, in both cases, $\wedge{y}$ must also cover $a$ -- contradiction. (Since, in the former case, $\leftray{y}$ does not intersect $\leftray{b}$, and in the latter case, $\rightray{y}$ does not intersect $\leftray{a}$.) If $\wedge{y}$ covers $a$ and $c$ and $\notconnected{y}{a,c}$, then $y \in R_b$ and $\orientation{\wedge{y}} \in \rc$. However, in the case, $\wedge{y}$ must also cover $b$ -- contradiction. (Since $\rightray{y}$ passes above $a$ and is directed upwards, and $\leftray{y}$ passes below $c$ and is directed downward.)

(3) $\wedge{y}$ covers exactly one point of $S_2$, namely, $c$. We know that either $\orientation{\wedge{y}} \in \ra$ or $\orientation{\wedge{y}} \in \rc$. In the latter case, $\wedge{y}$ must also cover $b$, which is impossible. In the former case, if $y$ is above $l$, then $\connected{y}{c}$, so $y$ is necessarily below $l$.
Let $z$ be the remaining point. Then, $\orientation{\wedge{z}} \in \rc$. We show below that $\connected{z}{a,b}$. 
Notice first that $\leftray{y}$ separates between $a$ and $c$ and between $b$ and $c$, since $\orientation{\wedge{y}} \in \ra$ and $\wedge{y}$ covers only $c$. Since $G_{S_2}$ is a clique, we know that $\connected{y}{z}$, and therefore $z$ lies to the right of $\bisector{y}$. Clearly, $a$ and $b$ lie to the left of $\rightray{z}$ (whose orientation is in $\rb$), and to the right of $\leftray{z}$ (whose orientation is in $\rd$). In other words, $\wedge{z}$ covers both $a$ and $b$. Notice also that $z \not \in R_c$, since $\bisector{y}$ (whose orientation is in $\ra$) intersects $l$ to the right of $b$, and $z$ lies to the right of $\bisector{y}$. Therefore, either $\wedge{a}$ or $\wedge{b}$ (or both) covers $z$. We conclude that $\connected{z}{a,b}$.

{\bf Case (iii):} $\rightray{x}$ does not intersect $l$, i.e., $\orientation{\rightray{x}} < 180$ (see Figure~\ref{fig:oneclique_case_1_c}). Since $\wedge{x}$ covers $b$, we may assume that $x \not \in \wedge{b}$. Therefore, $\orientation{\leftray{x}} > 240$. We thus have that $\orientation{\rightray{x}} \in \rc$ and $\orientation{\leftray{x}} \in \re$. Notice that $\bisector{x}$ (whose orientation is in $\rd$)
intersects $l$ to the right of $b$. Moreover, $\wedge{x}$ necessarily covers $a$, since $\orientation{\leftray{x}} \in \re$ and $\leftray{x}$ intersects $l$ between $c'$ and $c$. Let $y$ be the point of $S_2$ such that $\orientation{\wedge{y}} \in \rf$.
Since $G_{S_2}$ is a clique, we know that $\connected{x}{y}$, and therefore $y$ lies to the right of $\bisector{x}$. If $y$ is above $l$, then $\connected{y}{c}$. Otherwise, $y$ is below $l$ and in $\wedge{a}$ (since it is to the left of $b$). But then $\connected{y}{a}$, since $\leftray{y}$ passes above $a$ and $\rightray{y}$ is directed downwards.
\end{proof}

\vspace{-7mm}
\section{Proof of Theorem~\ref{thm:hop-spanner} --- Run-time analysis}

It is possible to implement the algorithm described in Section~\ref{sec:boundedrange} in $O(n \log n)$ time, using a data structure presented by Efrat et al.~\cite{EIK01}. This data structure is designed for a given set $Q$ of $n$ points in the plane. It supports queries of the following form: Given a query point $a$, return a point $q \in Q$ whose distance from $a$ is at most 1, and also delete it from $Q$ (if requested to do so). The data structure can be constructed in $O(n \log n)$ time, and a query (including deletion if needed) can be answered in amortized $O(\log n)$ time. We use it in both phases of the algorithm.

For the first phase, in which $P$ is partitioned into connected components of $\UDG$ of size at most three, we construct the data structure for the set $P$. Now, finding a single component requires at most three queries (plus deletions), and, since there are $O(n)$ components, the total running time of this phase is $O(n \log n)$.

In the second phase, we orient the wedges of each component. Orienting the wedges of the components of size three can be done in $O(1)$ time per component. For the components of size one or two, we construct the data structure for the subset of $P$ consisting of all points belonging to components of size three. Now, by Claim~\ref{clm:comp}, orienting the wedge of a component of size one (alternatively, the wedges of a component of size two) can be done in amortized $O(\log n)$ time. For a component of size one, we perform a single query (without deletion) in the data structure, and, for a component of size two, we perform one or two queries (without deletion), depending on whether the query with the first point is successful or not.  
We conclude that the overall running time of the algorithm for constructing $G$ is $O(n \log n)$.

\vspace{-3mm}
\section{NP-hardness}

We prove that the problem of computing an $\alpha$-MST, for $\alpha=\pi$ and $\alpha=2\pi/3$, is NP-hard.

\vspace{-3mm}
\subsection{$\boldsymbol{\alpha=\pi}$}

\begin{figure}[htb]
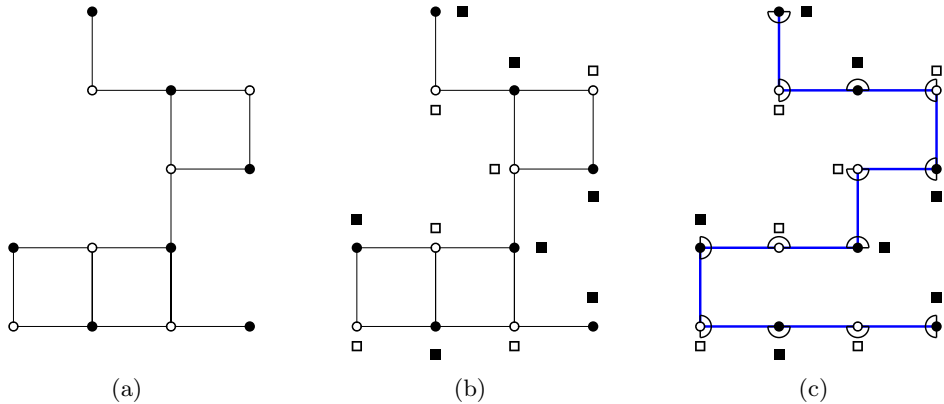

\centering
  \subfigure[]{
   \centering
       \includegraphics[width=0.2\textwidth,page=1]{fig/grid_graph_reduction}\label{fig:grid_red_a}
  }
  \hspace{.75cm}
  \subfigure[]{
   \centering
       \includegraphics[width=0.2\textwidth,page=2]{fig/grid_graph_reduction}\label{fig:grid_red_b}
  }
  \hspace{.75cm}
  \subfigure[]{
   \centering
       \includegraphics[width=0.2\textwidth,page=3]{fig/grid_graph_reduction}\label{fig:grid_red_c}
  }
 	\caption{(a) A square grid graph $G=(V,E)$ consisting of $13$ vertices and a 2-coloring of the graph. (b) For each point $v \in V$, we add a corresponding point $q_v$ (denoted by a square). (c) A Hamiltonian path of $G$. The semi-circles indicate the orientation of the wedges at the points of $V$.}
 	\label{fig:reduction180}	
\end{figure} 

We describe a reduction from the problem of finding a Hamiltonian path in square grid graphs of degree at most 3. This problem was shown to be NP-hard by Itai et al.~\cite{IPS82}.
A {\em square grid graph} is a graph whose vertices correspond to points in the plane with integer coordinates, and there is an edge between two vertices if the distance between their corresponding points is 1. 
Let $G=(V,E)$, $|V|=n$, be a square grid graph of degree at most 3. Our reduction is very similar to the one in~\cite{PV84}. Since every square grid graph is bipartite, one can color $G$ with two colors, say, black and white (see Figure~\ref{fig:grid_red_a}). Let $c(v)$ denote $v$'s color in some 2-coloring of $G$, and let $n_1$ and $n_2$, $n_1+n_2=n$, be the number of black and white points (i.e., vertices), respectively. 
We add a set $Q(V)$ of $n$ points as follows. For each point $v \in V$, consider any edge $e$ of the {\em complete} grid graph that is adjacent to $v$ and is missing in $G$. We place a point $q_v$ on $e$, such that $|vq_v|=1/4$, if $c(v)$ is black, and $|vq_v|=1/5$, if $c(v)$ is white (see Figure~\ref{fig:grid_red_b}). Notice that the distance from $q_v$ to any point in $V \cup Q(V) \setminus \{v,q_v\}$ is greater than $1$. 
Therefore, any MST of the point set $V \cup Q(V)$ contains the $n$ edges $(v,q(v))$, $v \in V$, and $n-1$ edges from $E$. Its weight is $L = n-1 + n_1/4 + n_2/5$.

\begin{lemma}
\label{lem:hardness1}
G has a Hamiltonian path if and only if $V \cup Q(V)$ has a $\pi$-MST of weight $L$. 
\end{lemma}
\begin{proof}
Assume $G$ has a Hamiltonian path, then its weight is $n-1$, since it consists of $n-1$ edges and the weight of each edge in $G$ is 1.
By adding an edge between each point $v \in V$ and its corresponding point $q_v$, we obtain a tree, $T$, of weight $L$. Notice that the degree of each point in $T$ is at most 3. Moreover, $T$ is a $\pi$-ST of $V \cup Q(V)$, since, at each vertex of $T$, one can place a $\pi$-wedge that covers all its neighbors (see Figure~\ref{fig:grid_red_c}). Finally, the weight of $T$ is $L$, and, therefore, $T$ is a $\pi$-MST of $V \cup Q(V)$.

Assume now that $V \cup Q(V)$ has a $\pi$-MST, $T$, of weight $L$. Then, $T$ must contain the $n$ edges $(v,q_v)$, $v \in V$, plus $n-1$ edges from $E$. Moreover, the maximum degree in $T$ is at most 3 (since, a $\pi$-wedge can cover at most 3 orthogonal directions). We conclude that $G$ has a spanning tree of maximum degree at most 2 of weight $n-1$. That is, $G$ has a Hamiltonian path.
\end{proof}

\vspace{-3mm}
\subsection{$\boldsymbol{\alpha=2\pi/3}$}

We describe a reduction from the problem of finding a Hamiltonian {\em path} in hexagonal grid graphs. 
Consider a tiling of the plane with regular hexagons of side length 1. The vertex set of an {\em hexagonal grid graph} is a subset of the vertices of the tiling, and there is an edge between two vertices of the graph if the distance between them is 1 (see Figure~\ref{fig:reduction120}). 
The problem of finding a Hamiltonian {\em cycle} in such graphs was shown to be NP-hard by Arkin et al.~\cite{AFIMMRPRX09}.
We first show that the path version is also NP-hard.

\begin{figure}[htb]
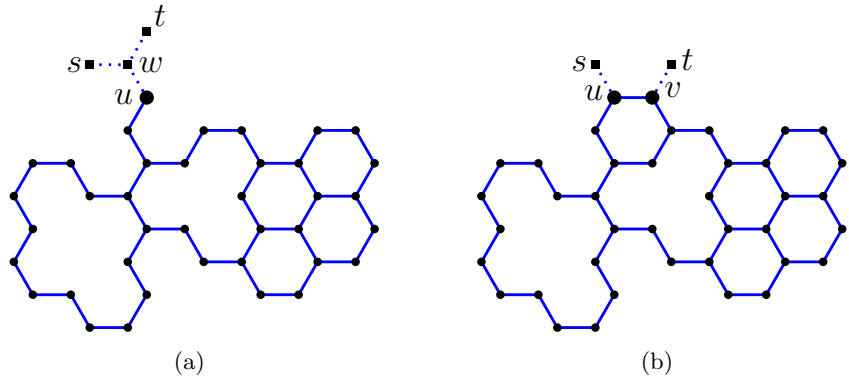

\centering
  \subfigure[]{
   \centering
       \includegraphics[width=0.3\textwidth,page=2]{fig/hex_grid_reduction3}\label{fig:reduction_a}	
  }
	\hspace{.75cm}
  \subfigure[]{
   \centering
       \includegraphics[width=0.3\textwidth,page=2]{fig/hex_grid_reduction2}\label{fig:reduction_b}	
  }
 	\caption{(a) If $deg(u)=1$, then we add three points. (b) If $deg(u)=2$, then we add two points.} \label{fig:reduction120}	
\end{figure} 
 
Let $G=(P,E)$ be an hexagonal grid graph, and let $u$ be the highest point in $P$. (If there are several highest points, then pick the leftmost among them.). Notice that since $u$ is the highest point, its degree cannot be 3, i.e., $deg(u) \le 2$. 
Moreover, if $deg(u)=2$, then one of the edges adjacent to $u$ must be horizontal.
We construct another hexagonal grid graph, $G'$, by adding at most three points to $P$, depending on $u$'s degree in $G$.
If $deg(u)=0$, then $G' = G$.
If $deg(u)=1$, then we add the points, $s$, $t$, and $w$ to $P$, as in Figure~\ref{fig:reduction_a}. The only edges that are formed due to this addition are $(s,w)$, $(t,w)$, and $(u,w)$.
Finally, if $deg(u)=2$, then  we add the points $s$ and $t$ to $P$, as in Figure~\ref{fig:reduction_b}. The only edges that are formed due to this addition are $(s,u)$ and $(t,v)$, where $v$ is the horizontal neighbor of $u$.

\begin{lemma}
$G$ contains a Hamiltonian cycle if and only if $G'$ contains a Hamiltonian path.
\end{lemma}
\begin{proof}
Assume that $G$ contains a Hamiltonian cycle. Then, $deg(u)= 2$, and $(u,v)$ is an edge of this cycle. By dropping the edge $(u,v)$, we obtain a Hamiltonian path between $u$ and $v$ in $G$. And, by adding the edges $(s,u)$ and $(t,v)$ to this path, we obtain a Hamiltonian path between $s$ and $t$ in $G'$. 

Assume now that $G'$ contains a Hamiltonian path. We claim that this is possible only if $G'$ was obtained by adding two points to $P$. Indeed, if $G'=G$, then $deg(u)=0$, and $G'$ cannot contain a Hamiltonian path. And, if $G'$ was obtained by adding three points to $P$, then, since $deg(s)=deg(t)=1$ and $s$ and $t$ have a common neighbor, namely, $w$, $G'$ cannot contain a Hamiltonian path. 
So, consider the graph $G'$ that is obtained by adding the points $s$ and $t$ to $P$ (see Figure~\ref{fig:reduction_b}), and recall that we are assuming that $G'$ contains a Hamiltonian path. Since $deg(s)=deg(t)=1$, the endpoints of this path are necessarily $s$ and $t$. By dropping the edges $(s,u)$ and $(t,v)$ from this path, we get a Hamiltonian path $\Pi$ in $G$, between $u$ and $v$. Notice that the edge $(u,v)$ is not in $\Pi$, and by adding it to $\Pi$, we get a Hamiltonian cycle in $G$. 
\end{proof}

We are now ready to show that the problem of computing an $\alpha$-MST, for $\alpha=2\pi/3$, is NP-hard.
Let $G=(P,E)$ be an hexagonal grid graph, where $|P|=n$. Since the distance between any two points in $P$ is at least 1, the weight of a MST of $P$ is at least $n-1$.
\begin{lemma}
$G$ has a Hamiltonian path if and only if $P$ has a $2\pi/3$-MST of weight $n-1$. 
\end{lemma}
\begin{proof}
Assume first that $G$ has a Hamiltonian path. Then, its weight is $n-1$. Moreover, the angle between any two consecutive edges along the path is $2\pi/3$. Therefore, the path is also a $2\pi/3$-ST, and, since its weight is $n-1$, it is a $2\pi/3$-MST.

Assume now that $P$ has a $2\pi/3$-MST, $T$, of weight $n-1$.
Then, all the edges of $T$ are of length exactly 1, and therefore belong also to $E$. It follows that the degree of any point in $T$ is at most 2. Therefore, $T$ is a Hamiltonian path of $G$. 
\end{proof}
}

\end{document}